\documentclass[a4paper, 11pt]{article}

\bibliographystyle{plainurl}

\usepackage{authblk}

\usepackage{yhmath}
\usepackage[most]{tcolorbox}

\usepackage{tikz}
\usetikzlibrary{automata, arrows.meta, positioning}
\usetikzlibrary{decorations.pathmorphing}

\usepackage{mathrsfs}

\usepackage{graphicx} 

\hyphenation{op-tical net-works semi-conduc-tor  pre-di-ca-tes}

\usepackage{hyperref}
\usepackage{verbatim}

\usepackage[utf8]{inputenc}
\usepackage[T1]{fontenc}
\usepackage{textcomp}
 
\usepackage{amsmath, amssymb, amsthm}
\usepackage{tikz}
\usepackage{lipsum}
\usepackage{pdfpages}
\usepackage{multicol} 
\usepackage{stmaryrd}

\usepackage[textwidth=147mm, hcentering, vmargin={24mm,32.5mm}, foot=20mm]{geometry}

\usepackage{booktabs}

\usepackage{makecell}
\usepackage{mathtools}
\usepackage{colonequals}

\usepackage{marginnote}

\usepackage{bm}

\theoremstyle{plain}
\newtheorem{theorem}{Theorem}[section]
\newtheorem{lemma}[theorem]{Lemma}
\newtheorem{proposition}[theorem]{Proposition}

\theoremstyle{definition}
\newtheorem{remark}[theorem]{Remark}
\newtheorem{example}[theorem]{Example}

\newcommand{\N}{\mathbb N}
\newcommand{\Z}{\mathbb Z}

\newcommand{\primes}{\mathscr P}

\newcommand{\bb}{\mathbf{b}}
\newcommand{\bc}{\mathbf{c}}
\newcommand{\bd}{\mathbf{d}}

\newcommand{\bi}{\mathbf{i}}

\newcommand{\bl}{\mathbf{l}}

\newcommand{\br}{\mathbf{r}}
\newcommand{\bs}{\mathbf{s}}
\newcommand{\bt}{\mathbf{t}}

\newcommand{\bw}{\mathbf{w}}

\newcommand{\bV}{\mathbf{V}}

\newcommand{\cA}{\mathcal{A}}
\newcommand{\cB}{\mathcal{B}}

\newcommand{\cF}{\mathcal{F}}
\newcommand{\cG}{\mathcal{G}}

\newcommand{\cL}{\mathcal{L}}
\newcommand{\cM}{\mathcal{M}}

\newcommand{\cR}{\mathcal{R}}

\newcommand{\cT}{\mathcal{T}}

\newcommand{\cV}{\mathcal{V}}

\newcommand{\bbA}{\mathbb{A}}
\newcommand{\bbB}{\mathbb{B}}
\newcommand{\bbC}{\mathbb{C}}
\newcommand{\bbD}{\mathbb{D}}

\newcommand{\bbP}{\mathbb{P}}


\newcommand{\fm}{\mathfrak{m}}




\newcommand*{\abs}[1]{\lvert#1\rvert}

\newcommand{\MSO}{\mathrm{MSO}}

\newcommand{\dom}{\mathrm{dom}}
\newcommand{\MSC}{\mathrm{MSC}}
\newcommand{\SC}{\mathrm{SC}}
\newcommand{\GMSC}{\mathrm{GMSC}}
\newcommand{\GGMSC}{\mathrm{GGMSC}}

\newcommand{\TM}{\mathrm{TM}}

\newcommand{\ML}{\mathrm{ML}}
\newcommand{\GML}{\mathrm{GML}}
\newcommand{\GGML}{\mathrm{GGML}}
\newcommand{\PROP}{\mathrm{PROP}}

\newcommand{\DiamondG}{\langle E \rangle}
\newcommand{\BoxG}{\lbrack E \rbrack}

\newcommand{\El}{\mathrm{Eloise}}
\newcommand{\Ab}{\mathrm{Abelard}}

\newcommand{\leftf}{\mathrm{left}}
\newcommand{\rightf}{\mathrm{right}}
\newcommand{\res}{\mathrm{res}}

\newcommand{\MCL}{\mathrm{MCL}}
\newcommand{\GMCL}{\mathrm{GMCL}}
\newcommand{\GGMCL}{\mathrm{GGMCL}}
\newcommand{\Rf}{\mathrm{Rf}}

\newcommand{\I}{\mathrm{\mathbf{I}}}
\newcommand{\II}{\mathrm{\mathbf{II}}}

\newcommand{\ordo}{\mathcal{O}}

\newcommand{\CM}{\mathrm{CM}}

\setlength{\baselineskip}{0.5em}%
\setlength{\parskip}{\baselineskip}%
\setlength{\parindent}{0pt}%
\setlength{\topsep}{1em}

\usepackage{enumitem}
\setitemize{noitemsep,topsep=0pt,parsep=3pt,partopsep=0pt}
\setenumerate{noitemsep,topsep=0pt,parsep=3pt,partopsep=0pt}

\newcommand{\tb}[1]{\textcolor{blue}{#1}}
\newcommand{\tre}[1]{\textcolor{red}{#1}}

\usepackage{array,multirow}

\title{Formula size game and model checking for modal substitution calculus}
\author{Veeti Ahvonen, Reijo Jaakkola, Antti Kuusisto}
\affil{Mathematics Research Centre, Tampere University}
\date{\today}

\begin{document}

\maketitle

\begin{abstract}
\noindent
Recent research has applied modal substitution calculus (MSC) 
and its variants to characterize various computational frameworks such as graph neural networks (GNNs) and distributed computing systems. 
For example, it has been shown that the expressive power of recurrent graph neural networks coincides with graded modal substitution calculus GMSC, which is the extension of MSC with counting modalities.
GMSC can be further extended with the counting global modality, resulting in the logic GGMSC which corresponds to GNNs with global readout mechanisms.
In this paper we introduce a formula-size game that 
characterizes the expressive power of MSC, GMSC, GGMSC, and related logics.
Furthermore, we study the expressiveness and model checking of logics in this family. We prove that MSC and its extensions (GMSC, GGMSC) are as expressive as linear tape-bounded Turing machines, while asynchronous variants are linked to modal mu-calculus and modal computation logic MCL. We establish that for MSC, GMSC and GGMSC,  both combined and data complexity of model checking are PSPACE-complete, and for their asynchronous variants, both complexities are PTIME-complete. We also establish that for the propositional fragment SC of MSC, the combined complexity of model checking is PSPACE-complete, while for asynchronous SC it is PTIME-complete, and
in both cases, data complexity is constant. 
As a corollary, we observe that SC satisfiability is PSPACE-complete and NP-complete for its asynchronous variant. Finally, we construct a universal reduction from all recursively enumerable problems to MSC model checking. 

\medskip
\noindent
\textbf{\textit{Keywords:}} formula-size game, model checking, game-theoretic semantics, recursive logic, rule-based logic
\end{abstract}

\section{Introduction}

The article \cite{Kuusisto13} introduced a rule-based bisimulation invariant logic called modal substitution calculus (MSC), which was used to capture the expressive power of distributed automata (without identifiers). 
Recently, variants of $\MSC$ have been used to characterize various other computing frameworks: In \cite{10.1093/logcom/exae087}, distributed automata with circuits and identifiers were characterized via $\MSC$, recurrent neural networks over floating-point numbers were characterized via the diamond-free fragment of $\MSC$ (or $\mathrm{SC}$) in \cite{ahvonen_et_al:LIPIcs.CSL.2024.9, ahvonen2025descriptivecomplexityneuralnetworks}, and a logical characterization of recurrent graph neural networks over floating-point numbers was provided using the graded extension of $\MSC$ (or $\GMSC$) in
 \cite{ahvonen2024logical}.
Thus, MSC and its variants have proven to be highly useful in the field of descriptive complexity, see the ``Related work'' section for more details.

The logic MSC consists of programs. Each program is defined over a finite set of Boolean variables $\cT$ for recursion, and each such variable is associated with a base rule and an induction rule defined as follows. 
A base rule for a variable is simply a formula of modal logic and an induction rule for a variable is a formula of modal logic that may also contain variables from $\cT$ as atomic subformulae.
Moreover, each program of MSC is associated with a set of accepting variables.
Informally, each program $\Lambda$ of $\MSC$ is run over a Kripke model $M$, which executes an $\omega$-sequence of rounds as follows. In round zero, a variable from \(\cT\) is true at a node $v$ in $M$ if and only if its base rule is true at $v$. In each subsequent round, a variable from \(\cT\) is true if and only if its induction rule is true when the variables in its induction rule are interpreted based on the truth values of the variables from the previous round.
A program of MSC accepts a node if its accepting variable becomes true in some round in that node.

In this paper, we study MSC and its variants SC and GMSC, as well as an extension of GMSC with the (counting) global modality (denoted by $\GGMSC$), concentrating on expressive power and model checking complexity.
We next describe the main contributions in detail; Table \ref{tab:contributions} summarizes our contributions.

We begin by defining two game-theoretic semantics for MSC and its variants GGMSC, GMSC, and SC via two types of semantic games.
Intuitively, each semantic game has two players, Abelard and Eloise; Eloise aims to show that for a given program $\Lambda$ of GGMSC (resp. GMSC, MSC and SC), Kripke model $M$ and a node $v$ of $M$, $\Lambda$ accepts $v$.
We informally describe two new (co-inductive) game-theoretic semantics for GGMSC below.
\begin{enumerate}
    \item The first game-theoretic semantics is an extension of the standard game-theoretic semantics for graded modal logic with counting global modality. The semantics are based on semantic games that are played as follows: In the beginning of the game Eloise chooses an iteration number $t \in \N$, and players start either from the base rule or the induction rule of an accepting predicate based on $t$. The players play similarly to the standard semantic game for graded modal logic with counting global modality, and whenever the players face a variable $X$, the game continues from the induction rule of $X$ and $t$ is decreased by one. However, if $t = 0$, then $t$ is not decreased and the game continues from the base rule of $X$. Ultimately, the game ends at a base rule.
    \item The second game-theoretic semantics is based on \emph{global} semantic games, played globally over the entire model $(M, v)$ as follows.
    Informally, Eloise attempts to simulate interpretations over the model $M$ by ``tracing backwards'' from an accepting round to the initial round. In a bit more detail, Eloise begins from an interpretation in which the node $v$ is accepted, and then provides one interpretation for each preceding round, one round at a time, eventually declaring when she has reached the interpretation for the initial round. Abelard may challenge any interpretation during the game. If a challenge occurs, then the players verify that the challenged interpretation is defined correctly w.r.t. the rules of the studied program.
    Intuitively, while the standard semantics build interpretations inductively, global semantic games build interpretations co-inductively.
\end{enumerate}
The correctness of these games are proved in Theorem \ref{thrm: semantic game} and Theorem \ref{thrm: global semantic game}.
Both systems of game-theoretic semantics are straightforward to modify for GMSC, MSC and SC based on the semantic games for GGMSC.
We also obtain an asynchronous semantics for GGMSC and its fragments by omitting “clocks” from the first semantic games (see the preliminaries for the formal details).

From the first game-theoretic semantics it is easy to define a formula size game for GGMSC and its fragments GMSC, MSC and SC. 
Intuitively, the formula size game verifies all the semantic games over the given classes of pointed Kripke models simultaneously.
Theorem \ref{thrm: uniform fs-game characterization} shows that the formula size game characterizes the equivalence of classes of pointed Kripke models up to programs of a given size. 
In a bit more detail, given two classes $\bbA$ and $\bbB$ of pointed Kripke models and $k \in \N$, the game is played by two players Samson and Delilah, and Samson tries to show that there is a program $\Lambda$ of size at most $k$ such that every pointed model in $\bbA$ is accepted by $\Lambda$ and every pointed model in $\bbB$ is not accepted by $\Lambda$. Intuitively, our formula size game verifies all possible semantic games at once over the models in the input classes.
The formula size game is inspired by the formula size game defined in \cite{mu-calculus-formula-size}, which in turn builds on \cite{arksiivi, kandalffi, HELLA2022104882}. 
Unlike in \cite{mu-calculus-formula-size}, our characterization is also obtained w.r.t. non-uniform winning strategies over finite Kripke models (cf. Theorem \ref{thrm: finite uniform non-uniform characterization}).
We demonstrate how the formula size game works by providing an inexpressibility result for a fragment of GGMSC (cf. Proposition \ref{prop: formulasize game demo}).

We also study the model checking problem and the expressive power of variants of MSC.
Theorem \ref{thrm: AMSC=MCL=mu-fragment} proves that MSC (resp. GMSC and GGMSC) with asynchronous semantics have the same expressive power as the mu-fragment of the modal mu-calculus (resp. the mu-fragment of the graded modal $\mu$-calculus and the mu-fragment of the global graded modal $\mu$-calculus) and the corresponding modal variant of computation logic CL \cite{HELLA2022104882, jaakkola2022firstorderlogicselfreference, Kuusisto_2014}. We also provide a new logical characterization for deterministic linear tape-bounded Turing machines, over words, via GGMSC, GMSC and MSC (cf. Theorem \ref{thrm: lbas equiv 2-way GMSC}). Informally, a deterministic linear tape-bounded Turing machine is a restricted deterministic Turing machine, where the amount of space used by the Turing machine is bounded by a linear function on the size of input string.
Furthermore, we obtain that both the combined and data complexity of the model checking problem for MSC, GMSC and GGMSC are PSPACE-complete, while for asynchronous variants of these logics, both complexities are PTIME-complete (cf. Theorem \ref{thrm: ptime model checking} and Theorem \ref{thrm: pspace model checking}).
We also show that the combined complexity of the model checking problem for SC is PSPACE-complete, while for the asynchronous variant of SC it is PTIME-complete (cf. Theorem \ref{thrm: SC async PTIME} and Theorem \ref{thrm: SC PSPACE}). In contrast, the data complexity of the model checking problem for both SC and asynchronous SC can be solved in constant time, because if a program of SC is fixed, there are only a fixed number of possible inputs. As a corollary, we obtain that the satisfiability problem for SC is PSPACE-complete, and NP-complete for asynchronous SC.
The model checking problem for SC is closely related to the well-known result that the reachability problem for Boolean networks is PSPACE-complete; see Section~\ref{sec: model checking} for further discussion.

Ultimately, we identify a uniform way of giving a computable reduction from any recursively enumerable problem to the model-checking problem for MSC by allowing extensions of input words. In other words, we identify a method that can be seen as a ``universal reduction'' from all computation problems to the model checking of MSC, see Section \ref{sec: meta reduction} for more details.
This is related to the fact that there is a computable reduction from any recursively enumerable problem to the membership problem for linear tape-bounded Turing machines.

\begin{table}[t]
\centering
\caption{Summary of contributions}
\begin{tabular}{p{0.96\linewidth}}
\toprule
\textbf{1.} We introduce two new game-theoretic semantics and a formula-size game for MSC, its variants, and related logics. \\
\midrule
\textbf{2.} We show that MSC, GMSC, and GGMSC are equally expressive with linear tape-bounded Turing machines, while their asynchronous variants are linked to modal computational logic (MCL) and the modal $\mu$-calculus. \\
\midrule
\textbf{3.} We prove PSPACE-completeness for both the combined and data complexity of the model-checking problem for MSC, GMSC, and GGMSC, whereas for asynchronous variants, both complexities are shown to be PTIME-complete. \\
\midrule
\textbf{4.} For SC, we show that the combined complexity of its model checking is PSPACE-complete, and for asynchronous SC, it is PTIME-complete; in both cases, the data complexity is constant. From these results, we conclude that SC satisfiability is PSPACE-complete and NP-complete for its asynchronous variant. \\
\midrule
\textbf{5.} We provide a ``universal reduction'' from all computation problems to MSC model checking. \\
\bottomrule
\end{tabular}
\label{tab:contributions}
\end{table}

\subsubsection*{Related work}

The work by Hella et al. \cite{hella2012weak, weak_models} provided pioneering logical characterizaions of distributed systems in constant-iteration setting.
Later, $\MSC$ was introduced by Kuusisto in \cite{Kuusisto13}, where the expressive power of distributed automata (which can compute unboundedly many rounds instead of limiting to the constant iteration setting) was captured via $\MSC$. The paper also explicitly envisioned a research program of descriptive complexity for distributed computing. Also, the article showed that the $\mu$-fragment of the modal $\mu$-calculus is contained in $\MSC$ and also that $\MSC$ and the full modal $\mu$-calculus have orthogonal expressive power (and an analogous result applies for $\GMSC$ and the graded modal $\mu$-calculus). In more detail, it was shown that $\MSC$ cannot express the non-reachability property and the modal $\mu$-calculus cannot express the centre-point property, which is a node property stating that all the walks starting from the given node lead to a dead-end in the same number of steps.

Recently, in \cite{10.1093/logcom/exae087, dist_circ_mfcs}, it was shown that distributed automata with Boolean circuits and identifiers can be translated into programs of $\MSC$, and vice versa, with a small blow-up in the size in both directions. The diamond-free fragment of $\MSC$ was used to characterize recurrent neural networks and ordinary feedforward neural networks over floating-point numbers in \cite{ahvonen_et_al:LIPIcs.CSL.2024.9, ahvonen2025descriptivecomplexityneuralnetworks}, again with only a small blow-up in the sizes in both directions. The expressive power of recurrent graph neural networks over floating-point numbers was captured via $\GMSC$ in \cite{ahvonen2024logical}. Also in that same article, recurrent graph neural networks over reals were captured via graded modal logic extended with infinite disjunctions, and it was also shown that when restricted to the properties definable in monadic second-order logic, recurrent graph neural networks over reals and floats have the same expressive power. The complexity of the model checking problem for $\MSC$ and its variants has not been studied before, although it was already noted in \cite{ahvonen2024logical} that $\GMSC$ is contained in the partial fixed-point logic with choice which captures PSPACE.

Hella and Vilander \cite{mu-calculus-formula-size}, defined a formula size game for the modal $\mu$-calculus that was inspired by the game-theoretic semantics for the modal $\mu$-calculus introduced in \cite{arksiivi, kandalffi, HELLA2022104882}. In \cite{mu-calculus-formula-size}, it was shown that first-order logic (FO) is non-elementary more succinct than the modal $\mu$-calculus, meaning that there is a sequence $(\varphi_i)_{i\in \N}$ of FO-formulae equivalent to a sequence $(\psi_i)_{i \in \N}$ of the modal $\mu$-calculus formulae such that the size of $\psi_i$ is non-elementary in the size of $\varphi_i$. Moreover, the formula size game in that article characterized the equivalence of classes of pointed Kripke models up to the formulae of the modal $\mu$-calculus, but required uniform strategies for Samson (one of the players) who tries to show that the classes are not equivalent. Informally, a uniform strategy means that Samson plays according to a formula of the modal $\mu$-calculus. 
However, we prove that uniform strategies are unnecessary over finite models in our formula size game, and we believe that they are unnecessary for all models as well.

There exist other similar formula size games, for example, Adler and Immerman defined a formula size game \cite{adler2003n} (or the Adler-Immerman game) which is highly similar to the formula size game defined by Hella and Vilander \cite{mu-calculus-formula-size}. Our game and the Adler–Immerman game are similar in that, at each position in the game, both maintain a set of multiple subformulae that are currently being verified.
Hella and Väänänen also defined a formula game in \cite{HellaVäänänen+2015+193+214} for propositional logic and first-order logic which was also another inspiration for the formula size defined by Hella and Vilander in \cite{mu-calculus-formula-size}. Recently, formula size games have been used to study the link between entropy and formula size \cite{JAAKKOLA2025103615}. For other work related to formula size in logics, see, e.g., \cite{grohe2005succinctness, otto2006bisimulation}.

\section{Preliminaries}

We let $\PROP$ denote the countably infinite set of \textbf{proposition symbols} and respectively let $\mathrm{VAR}$ denote the countably infinite set of \textbf{schema variables}. Given a $\Pi \subseteq \PROP$, a \textbf{Kripke model over $\Pi$} (or simply $\Pi$-model) is a tuple $(W, R, V)$, where $W$ is a non-empty domain (or a set of nodes), $R \subseteq W \times W$ is an accessibility relation and $V \colon W \to \wp(\Pi)$ is a valuation function, where $\wp$ denotes the \textbf{power set} of $\Pi$. A \textbf{pointed Kripke model} is a pair $((W, R, V), w)$, where $w \in W$. Moreover, the set of \textbf{out-neighbours} (or \textbf{successors}) of a node $v$ is $\{\, u \in W \mid (v, u) \in R \,\}$.

Let $X$ and $Y$ be sets. If $f \colon X \rightharpoonup Y$ is a partial function, $x \in X$ and $y \in Y$, then we let $f' = f[y/x]$ denote the partial function $f' \colon X \rightharpoonup Y$ defined by $f'(z) = f(z)$, when $z \neq x$ and otherwise $y$.
We let $Y^X$ denote the set of functions from $X$ to $Y$. Given a function $g \in Y^X$, the \textbf{range} of $g$ is the set $\{\, g(x) \in Y \mid x \in X\,\}$. Given a relation $R$ over $X$, if $(x, y) \in R$, then we say that $y$ is an \textbf{$R$-successor} of $x$. 

\subsection{Variants of substitution calculus}

Given a $\Pi \subseteq \PROP$ and a $\cT \subseteq \mathrm{VAR}$, 
the set of \textbf{$(\Pi, \cT)$-schemata of graded modal substitution calculus with the (counting) global modality} 
(or $\GGMSC$) is defined by the following grammar
\[
\varphi \coloncolonequals \bot \mid \top \mid p  \mid X \mid \neg \varphi \mid \varphi \lor \varphi \mid \varphi \land \varphi \mid \Diamond_{\geq k} \varphi \mid \Box_{< k} \varphi \mid \langle E \rangle_{\geq k} \varphi \mid \BoxG_{< k} \varphi,
\]
where $k \in \Z_+$, $p \in \Pi$ and $X \in \cT$.\footnote{The connectives $\rightarrow$, $\leftrightarrow$ are considered abbreviations in the
usual way. 
We also use the abbreviations $\Diamond_{ < k} \varphi \colonequals \neg \Diamond_{\geq k} \varphi$, 
$\Box_{\geq k} \varphi \colonequals \neg \Box_{< k} \varphi$,
$\Diamond_{=k} \varphi \colonequals \Diamond_{\geq k} \varphi \land \Diamond_{< k+1} \varphi $, $\Box_{=k} \varphi \colonequals \Box_{\geq k} \varphi \land \Box_{< k+1} \varphi $, $\Diamond_{=0} \varphi \colonequals \Diamond_{< 1} \varphi$ and $\Box_{=0} \varphi \colonequals \Box_{< 1} \varphi $. The abbreviations $\langle E \rangle \varphi$, $\BoxG_{\geq k} \varphi$ $\BoxG \varphi$, $\BoxG_{\geq k} \varphi$, $\langle E \rangle_{=k} \varphi$, $\BoxG_{=k} \varphi$, $\DiamondG_{=0} \varphi$ and $\BoxG_{=0} \varphi$ are defined analogously. 
} 
Moreover, when we exclude $\cT$ from the grammar, we have the set of \textbf{$\Pi$-formulae of graded modal logic with the (counting) global modality} (or $\GGML$). 
Now, assume that $\{X_1, \ldots, X_k\} = \cT$ is a set of $k$ distinct schema variables.
A $(\Pi, \cT)$-\textbf{program} $\Lambda$ of $\GGMSC$ consists of two lists of \textbf{rules}
\[
\begin{aligned}
    &X_1 (0) \colonminus \varphi_1\qquad\qquad &&X_1 \colonminus \psi_1 \\
    &\vdots &&\vdots \\
    &X_k (0) \colonminus \varphi_k\qquad\qquad &&X_k \colonminus \psi_k \\
\end{aligned}
\]
where each $\varphi_i$ is a $\Pi$-formula of $\GGML$ and each $\psi_i$ is a $(\Pi, \cT)$-schema of $\GGMSC$.
Moreover, each program is also associated with a set of \textbf{accepting predicates} $\cA\subseteq \cT$.
Strings of the form $X_i (0) \colonminus \varphi_i$ are called \textbf{base rules}, while $X_i \colonminus \psi_i$ are called \textbf{induction rules}. The variable $X_i$ at the front of the rule is called the \textbf{head predicate} and the formulae $\varphi_i$ and $\psi_i$ are called the \textbf{bodies}.  
By omitting schema variables $\cT$, we let a $\Pi$-program of $\GGMSC$ refer to a $(\Pi, \cT)$-program of $\GGMSC$.

The truth of $\Pi$-formulae $\varphi$ of $\GGML$ in a pointed $\Pi$-model $(M, w)$ (denoted by $M, w \models \varphi$) is defined as follows. The semantics for Boolean connectives, as well as for $\bot$ and $\top$, is the usual one, while for proposition symbols $p \in \Pi$, we define $M, w \models p$ iff $p \in V(w)$. For $\Diamond_{\geq k} \varphi$, we define $M, w \models \Diamond_{\geq k} \varphi$ iff $\abs{\{\,v \mid M, v \models \varphi, (w, v) \in R\,\}} \geq k$, and for $\Box_{< k} \varphi$, we define $M, w \models \Box_{< k} \varphi$ iff $\abs{\{\, v \mid M, v \not\models \varphi, (w, v) \in R\,\}} < k$.
Moreover, we define $M, w \models \langle E \rangle_{\geq k} \varphi$ iff $\abs{\{\, v \mid M, v \models \varphi, v \in W\,\}} \geq k$, and for $\BoxG_{< k} \varphi$, we define $M, w \models \BoxG_{< k} \varphi$ iff $\abs{\{\, v \mid M, v \not\models \varphi, v \in W\,\}} < k$.\footnote{Note that by the definition $\Box_{<k}$ is the corresponding dual operator for $\Diamond_{\geq k}$, i.e., $\Box_{<k} \varphi$ is logically equivalent to $\neg \Diamond_{\geq k} \neg \varphi$. Analogously, $\BoxG_{<k} \varphi$ is logically equivalent to $\neg \DiamondG_{\geq k} \neg \varphi$.}

Now, we define the semantics for $\Lambda$.
First, we define \textbf{$n$th induction formula $X^n_i$} (w.r.t. $\Lambda$) (or the induction formula of $X_i$ in \textbf{round} $n \in \N$) for each head predicate $X_i$ recursively as follows.
    We define $X^0_i \colonequals \varphi_i$.
    %
    The $\GGML$-formula $X^{n+1}_i$ is obtained from $\psi_i$ by replacing each variable $X_j$ by $X_j^n$.
%
%
Analogously, given a $(\Pi, \cT)$-schemata $\varphi$, we let $\varphi^{k}$ denote the $\GGML$-formula, where each variable $X_i$ is replaced by $X_i^k$.

Now, we define $M, w \models \Lambda$ and say that $(M, w)$ is \textbf{accepted} by $\Lambda$ iff there is an $n \in \N$ such that $M, w \models X^n$ for some accepting predicate $X$. 
Given two classes, $\bbA$ and $\bbB$, of $\Pi$-models, we say that $\Lambda$ \textbf{separates $\bbA$ from $\bbB$}, 
if every $(A, w) \in \bbA$ is accepted (resp. fixed-point accepted) by $\Lambda$ and every $(B, w) \in \bbB$ is not accepted 
by $\Lambda$.

Next we define important fragments of $\GGMSC$ and $\GGML$.
A $(\Pi, \cT)$-program of \textbf{graded modal substitution calculus} (or $\GMSC$) \cite{ahvonen2024logicalcharacterizationsrecurrentgraph} is a $(\Pi, \cT)$-program of $\GGMSC$ that does not contain global diamonds $\DiamondG_{\geq k}$ or boxes $\BoxG_{< k}$. Respectively, a $(\Pi, \cT)$-program of \textbf{modal substitution calculus} (or $\MSC$) is a program of $\GMSC$ that can only contain diamonds of the type $\Diamond$ or boxes of the type $\Box$. Analogously, we define the set of $\Pi$-formulae of \textbf{graded modal logic} (or $\GML$) and the set of $\Pi$-formulae of \textbf{modal logic} (or $\ML$).
Programs of \textbf{substitution calculus} (or $\mathrm{SC}$) do not include any occurrences of diamonds or boxes, which are interpreted over models of propositional logic, i.e., Kripke models without accessibility relation and with a single node.

\begin{example}\label{example: centre-point}
    A pointed model $(M,w)$ has the \textbf{centre-point property} if there exists $n \in \N$ such that each walk starting from $w$ leads to a node $v$ that has no successors in exactly $n$ steps. 
    The program $X(0) \colonminus \Box \bot$, $X \colonminus \Diamond X \land \Box X$ of $\MSC$ accepts precisely the pointed models that have the centre-point property.  As already noted in \cite{Kuusisto13}, this property is not expressible in $\MSO$.
\end{example}

The \textbf{size} $\abs{\Lambda}$ of a $\Pi$-program $\Lambda$ of $\GGMSC$ is defined as the number of proposition symbols, schema variables, negations, and logical connectives $\lor$ and $\land$ occurring in $\Lambda$, augmented by the counting thresholds $k$ of all diamonds and boxes $\Diamond_{\geq k}$, $\Box_{< k}$, $\DiamondG_{\geq k}$ and $\BoxG_{\geq k}$ present in $\Lambda$.

A program is in \textbf{negation normal form} if the only negated subschemata are negated proposition symbols or schema variables.
A program is in \textbf{strong negation normal form} if the only negated subschemata are negated proposition symbols.
The following lemma proves that each program can be translated into an equivalent program (with only a linear size increase) that is in strong negation normal form.

\begin{lemma}\label{lem: strong negation normal form}
    Given a $\Pi$-program $\Lambda$ of $\GGMSC$ of size $n$, there exists an equivalent 
    program of $\GGMSC$ in strong negation normal form of size $\ordo(n)$.
\end{lemma}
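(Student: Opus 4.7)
The plan is to perform the translation in two stages. First, I would put $\Lambda$ into ordinary negation normal form by pushing every negation inward using De Morgan's laws together with the dualities $\neg \Diamond_{\geq k}\varphi \equiv \Box_{<k} \neg\varphi$, $\neg \Box_{<k}\varphi \equiv \Diamond_{\geq k}\neg\varphi$, $\neg \DiamondG_{\geq k}\varphi \equiv \BoxG_{<k}\neg\varphi$, $\neg \BoxG_{<k}\varphi \equiv \DiamondG_{\geq k}\neg\varphi$, and $\neg\neg\varphi\equiv \varphi$. This is a standard syntactic rewrite that yields an equivalent program of the same size up to a constant factor, in which the only negated subschemata are proposition symbols and schema variables.

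The second stage removes the negations in front of schema variables. For each head predicate $X_i$ of the NNF-program I introduce a fresh head predicate $\overline{X_i}$, intended to denote the negation of $X_i$. Writing $\varphi^{\#}$ for the result of replacing, in a $(\Pi,\cT)$-schema $\varphi$ already in NNF, every subformula of the form $\neg X_j$ by $\overline{X_j}$, I define the translated rules as
\begin{align*}
X_i(0) &\colonminus \varphi_i^{\#}, & X_i &\colonminus \psi_i^{\#},\\
\overline{X_i}(0) &\colonminus (\mathrm{nnf}(\neg \varphi_i))^{\#}, & \overline{X_i} &\colonminus (\mathrm{nnf}(\neg \psi_i))^{\#},
\end{align*}
where $\mathrm{nnf}(\cdot)$ is the NNF-rewrite from the first stage. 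The set of accepting predicates is kept unchanged. Each of the four operations above increases the size only by a constant factor per symbol, and the number of head predicates is merely doubled, so the resulting program has size $\ordo(n)$ and is in strong negation normal form.

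For correctness, I would prove by simultaneous induction on the round $n\in \N$ that, for every pointed model $(M,w)$ and every $i$, the translated induction formula for $X_i$ in round $n$ holds at $w$ iff the original $X_i^n$ does, and that the induction formula for $\overline{X_i}$ in round $n$ holds iff $X_i^n$ fails at $w$. The base case reduces to the equivalence $\mathrm{nnf}(\neg \varphi_i)\equiv \neg \varphi_i$, established by an inner induction on the structure of $\varphi_i$ using the duality laws above. The inductive step follows because every $X_j$ occurring in $\psi_i$ or in $\mathrm{nnf}(\neg \psi_i)$ is replaced by either $X_j$ or $\overline{X_j}$ depending on polarity, and the induction hypothesis matches those replacements to the intended truth values in the previous round.

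The main obstacle, conceptually, is making sure that in $(\mathrm{nnf}(\neg \psi_i))^{\#}$ every schema variable occurs with the correct polarity, so that the inductive hypothesis about $\overline{X_j}$ representing $\neg X_j$ at the previous round can be applied. This is handled cleanly by observing that NNF toggles the polarity of each schema variable occurrence exactly once, so the $\#$-translation then substitutes $\overline{X_j}$ precisely where $\neg X_j$ would be intended; beyond that bookkeeping, the size estimate and the inductive verification are routine.
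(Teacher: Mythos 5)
Your proposal is correct and follows essentially the same route as the paper's proof: first reduce to negation normal form, then introduce a dual head predicate for each original one whose rules are the NNF of the negated bodies with $\neg X_j$ replaced by the dual predicate, and verify correctness by a simultaneous induction on rounds showing that the original predicates keep their semantics while the dual predicates compute their complements. The only cosmetic difference is notation ($\overline{X_i}$ versus the paper's $X_d$) and that you spell out the polarity bookkeeping the paper leaves to a "routine induction."
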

\begin{proof}
    We may assume that $\Lambda$ is in negation normal form since it is trivial to obtain a program that is negation normal form without increasing size.
    We construct an equivalent program $\Lambda_d$ that is in strong negation normal form as follows. For each head predicate $X$ in $\Lambda$ with the rules $X (0) \colonminus \varphi$ and $X \colonminus \psi$, we simultaneously define a fresh head predicate $X_d$ with the following rules: $X_d (0) \colonminus \varphi_d$, where $\varphi_d$ is the negated $\varphi$ in negation normal form, and $X_d \colonminus \psi_d$, where $\psi_d$ is the negated $\psi$ in negation normal form and each $\neg Y$ is replaced by $Y_d$. Then finally we modify each original rule $X \colonminus \psi$ in $\Lambda$ by replacing each $\neg Y$ by $Y_d$.

    It is clear that $\Lambda_d$ is in strong negation normal form and the size is linear in the size of $\Lambda$.
    Now, by a routine induction, it is easy to show that for each pointed $\Pi$-model $(M, w)$ and for each head predicate $X$ that appears in $\Lambda$ that the following holds:
    $M, w \models X^n$ w.r.t. $\Lambda$ iff $M, w \models X^n$w.r.t. $\Lambda_d$,
    and
    $M, w \not\models X^n$ w.r.t. $\Lambda$ iff $M, w \models X_d^n$ w.r.t. $\Lambda_d$. 
\end{proof}

Lastly, we point out that each $\GGMSC$-schema can be interpreted as a $\GGML$-formula when Kripke model is associated with an interpretation over variables.
A \textbf{Kripke model over $(\Pi, \cT)$} (or $(\Pi, \cT)$-model) is a pair $M_g \colonequals (M, g)$, where $g \colon W \to \wp(\cT)$ is a function (called a \textbf{labeled tuple}) and $M$ is a $\Pi$-model.
The truth of $(\Pi, \cT)$-schema $\psi$ in a pointed $(\Pi, \cT)$-model $(M_g, w)$ is defined as follows. We define $M_g, w \models X$ iff $X \in g(w)$. The rest of the semantics for proposition symbols, constant symbols, connectives, diamonds and boxes are defined analogously to $\GGML$. 
A labeled tuple $f$ is \textbf{suitable} for a $\Pi$-model $M$ if the domain of $f$ is the domain of $M$.
Note that in each round $n \in \N$, in each $\Pi$-model $M = (W, R, V)$, a $(\Pi, \cT)$-program $\Lambda$ induces a labeled tuple $g_n \colon W \to \wp(\cT)$ called \textbf{global configuration} in round $n$ defined as follows. For each $w \in W$, we define $g_n(w) = \{\, X \mid M, w \models X^n \,\}$.

\section{Semantic games}

In this section, we define two new game-theoretic semantics for $\GGMSC$ and its variants. Informally, both semantics are based on semantic games which are played by two players, Eloise and Abelard, where for a given $\GGMSC$-program $\Lambda$ and a pointed model $(M, w)$, Eloise tries to show that $(M, w)$ is accepted by $\Lambda$ and Abelard opposes this. There are two main differences between the semantics. 
The first game-theoretic semantics is an extension of the standard game-theoretic semantics for $\GGML$ and based on semantic games. In each such game, the current game position stores the current node, subformula and iteration round being verified. The second game-theoretic semantics is based on games, where a game position is simply a labeled tuple over the domain of the input model. Both game-theoretic semantics define co-inductive semantics for $\GGMSC$.
The correctness of these game-theoretic semantics are formally proved in Theorems \ref{thrm: semantic game} and Theorem \ref{thrm: global semantic game}. Moreover, we study asynchronous variants of these games.

\subsection{Standard semantic game}

Given a pointed $\Pi$-model $(M, w) = ((W, R, V), w)$ and 
a $\Pi$-program $\Lambda$ of $\GGMSC$,
the \textbf{semantic game $\cG(M, w, \Lambda)$} is defined as follows. The game has two players, \textbf{Abelard} and \textbf{Eloise}. 
The \textbf{positions} of the game $\cG(M, w, \Lambda)$ are tuples $(\bV, v, \varphi, k)$, where 
    %
    %
    $\bV \in \{\El, \Ab\}$ is the current \textbf{verifier} and the other player is the \textbf{falsifier},
    %
    %
    $v$ is a node in $M$,
    %
    %
    %
    $\varphi$ is a subschemata of $\Lambda$ and
    %
    %
    $k \in \N$ is the \textbf{iteration round}.
Intuitively, a position $(\bV, v, \varphi, k)$ corresponds to the following claim:
\[
\text{``The player $\bV$ can verify $\varphi$ at $v$ in $k$ iteration rounds.''}
\]
An \textbf{initial position} is a position $(\El, w, \varphi, k)$, where $\varphi$ is either the body of the base rule or the body of the induction rule of an accepting predicate.

A play of the game $\cG(M, w, \Lambda)$ begins from an initial position that is chosen by Eloise from the set of initial positions. Moreover, if the set of initial positions of the game is empty (i.e. $\Lambda$ does not have accepting predicates), then Eloise automatically loses and Abelard wins. So strictly speaking, there is a ``starting position'', where Eloise chooses an initial position.

The \textbf{rules} of the game are defined as follows.
\begin{enumerate}
    \item In a position $(\bV, v, \top, \ell)$ (resp. in a position $(\bV, v, \bot, \ell)$), the game ends and the verifier wins (resp. the falsifier wins).
    \item In a position $(\bV, v, p, \ell)$, where $p \in \Pi$, the game ends. The verifier wins if $p \in V(v)$. Otherwise the falsifier wins.
    \item In a position $(\bV, v, \neg \psi, \ell)$, the game continues from the position $(\bV', v, \psi, \ell)$, where $\bV' \in \{\El,\Ab\} \setminus \{\bV\}$.
    \item In a position $(\bV, v, \psi \land \theta, \ell)$, the falsifier chooses a conjunct $\chi \in \{\psi, \theta\}$ and the game continues from the position $(\bV, v, \chi, \ell)$. 
    \item In a position $(\bV, v, \psi \lor \theta, \ell)$,
    the verifier chooses a disjunct $\chi \in \{\psi, \theta\}$ and the game continues from the position $(\bV, v, \chi, \ell)$.
    \item In a position $(\bV, v, \Diamond_{\geq k} \psi, \ell)$, the verifier $\bV$ chooses a set $\{u_1, \ldots, u_k\}$ of $k$ distinct out-neighbours of $v$, then the falsifier chooses a node $u \in \{u_1, \ldots, u_k\}$ and the game continues from $(\bV, u, \psi, \ell)$. If the verifier cannot choose $k$ out-neighbours of $v$, then the falsifier wins.
    \item In a position $(\bV, v, \Box_{< k} \psi, \ell)$, the game continues analogously as in $(\bV, v, \Diamond_{\geq k} \psi, \ell)$, but the roles of the verifier and the falsifier are switched.
    \item In a position $(\bV, v, \DiamondG_{\geq k} \psi, \ell)$, the verifier $\bV$ chooses a set $\{u_1, \ldots, u_k\}$ of $k$ distinct nodes from $W$, then the falsifier chooses a node $u \in \{u_1, \ldots, u_k\}$ and the game continues from $(\bV, u, \psi, \ell)$. If the verifier is unable to choose $k$ nodes, the falsifier wins.
    \item In a position $(\bV, v, \BoxG_{< k} \psi, \ell)$, the game continues in a similar way as in $(\bV, v, \DiamondG_{\geq k} \psi, \ell)$, but the roles of the verifier and the falsifier are switched.
    \item In a position $(\bV, v, X, \ell)$ the game continues as follows. If $\ell > 0$, then the game continues from the position $(\bV, v, \psi, \ell-1)$, where $\psi$ is the body of the induction rule of $X$. If $\ell = 0$, then the game continues from the position $(\bV, v, \theta, 0)$, where $\theta$ is the body of the base rule of $X$. 
\end{enumerate}
Note that, if a position $(\bV, v, \theta, 0)$ is reached during a play, then the play will end to a position $(\bV', v', \varphi, 0)$, where $\varphi \in \Pi \cup \{\top, \bot\}$, since the base rules are just $\GGML$-formulas. 
We write $M, w \Vdash \Lambda$ iff Eloise has a \textbf{winning strategy} in $\cG(M,w, \Lambda)$. Furthermore, it is trivial to show that if Eloise has a winning strategy in $\cG(M,w, \Lambda)$ starting from a position $(\El, w, \varphi, k)$, where $\varphi$ is a formula of $\GGML$, then for all $\ell \in \N$, Eloise has a winning strategy in $\cG(M,w, \Lambda)$ starting from a position $(\El, w, \varphi, \ell)$.

\begin{remark}
It is easy to obtain game-theoretic semantics for $\GGMSC$-schemata as follows. Given, a $( \Pi, \cT ) $-schema $\psi$ of $\GGMSC$ and a pointed $( \Pi, \cT ) $-model $( M_g, w ) $, the \textbf{semantic game} $\cG(M_g, w, \psi)$ is played like a semantic game of $\GGMSC$ but the game consists of a single initial position $(\El, w, \varphi, 0)$ and variables are handled as follows. In position $( \bV, v, X, 0 ) $ (resp. in a position $( \bV, v, \neg X, 0)$) the game ends and the current verifier $\bV$ wins if $X \in g(v)$ (resp. if $X \notin g(v)$), and otherwise falsifier wins. Thus, in these games, $0$ can be omitted from all the game positions and simply write $(\bV, v, \varphi)$.
Furthermore, can write $M_g, w \Vdash \psi$ iff Eloise has a winning strategy in $\cG(M_g, w, \psi)$ and in the case of $\psi$ is a formula of $\GGML$, we may omit $g$.
\end{remark}

Now we prove the correctness of our semantic games.
\begin{theorem}\label{thrm: semantic game}
    For each pointed $\Pi$-model $(M, w)$ and $\Pi$-program $\Lambda$ of $\GGMSC$,  
    \[
    M, w \models \Lambda \iff M, w \Vdash \Lambda.
    \]
\end{theorem}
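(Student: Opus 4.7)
The plan is to reduce the theorem to a correspondence lemma between game positions and unfolded $\GGML$-formulas. For a subschema $\theta$ appearing in $\Lambda$ and $k \in \N$, let $\theta^k$ denote the $\GGML$-formula obtained from $\theta$ by replacing every schema variable $X_j$ by its $k$-th induction formula $X_j^k$ (this agrees with the preliminaries' definition $X_i^{n+1}=\psi_i^n$, $X_i^0=\varphi_i$). I will prove the following \emph{correspondence lemma}: for every subschema $\theta$ of $\Lambda$, every node $v$ in $M$, every $k \in \N$, and every $\bV \in \{\El, \Ab\}$, player $\bV$ has a winning strategy in $\cG(M,w,\Lambda)$ from position $(\bV, v, \theta, k)$ if and only if $M, v \models \theta^k$. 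The theorem then follows: Eloise has a winning strategy in $\cG(M,w,\Lambda)$ iff she can choose some initial position $(\El, w, \varphi, k)$ from which she wins, which by the lemma is equivalent to $M, w \models \varphi^k$. Since $\varphi$ is either the body $\varphi_X$ of the base rule (variable-free, so $\varphi_X^k = \varphi_X = X^0$) or the body $\psi_X$ of the induction rule ($\psi_X^k = X^{k+1}$) of some accepting predicate $X$, ranging over all such $\varphi$ and all $k$ is equivalent to the existence of an accepting $X$ and $n \in \N$ with $M, v \models X^n$, which is precisely $M, w \models \Lambda$. The degenerate case $\cA = \emptyset$ matches since both sides are then false.

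\textbf{Proof of the correspondence lemma.} I proceed by induction on the lexicographic pair $(k, |\theta|)$, where $|\theta|$ is structural size. The atomic cases $\theta \in \{\top, \bot, p\}$ are immediate from the terminating game rules, together with $M,v \models \theta^k \Leftrightarrow M,v \models \theta$. For Boolean connectives, local modalities $\Diamond_{\geq k'}, \Box_{<k'}$, and global modalities $\DiamondG_{\geq k'}, \BoxG_{<k'}$, the game rules reflect exactly the corresponding semantic clauses of $\GGML$, and the induction hypothesis applies to strictly smaller $|\theta|$ with the same $k$; e.g., for $\theta = \Diamond_{\geq k'} \psi$, the verifier wins iff it can pick $k'$ distinct successors from each of which the verifier wins the subgame on $\psi$, which by IH is iff there exist $k'$ successors satisfying $\psi^k$, i.e., $M, v \models \Diamond_{\geq k'}\psi^k = \theta^k$. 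The clause $\theta = \neg\psi$ swaps the verifier, and because the lemma is stated symmetrically in $\bV$, the induction hypothesis applied to $\bV'$ and $\psi$ yields exactly the negated equivalence that we need.

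\textbf{Variable case.} When $\theta = X_j$ and $k > 0$, the game transitions to $(\bV, v, \psi_j, k-1)$; the induction hypothesis on the strictly smaller first coordinate gives that $\bV$ wins iff $M, v \models \psi_j^{k-1}$, and by definition $\psi_j^{k-1} = X_j^k = \theta^k$. When $\theta = X_j$ and $k = 0$, the game transitions to $(\bV, v, \varphi_j, 0)$; since $\varphi_j$ is a variable-free $\GGML$-formula, the iteration counter stays inert from that position onwards, and the induced subgame coincides with the standard semantic game for $\GGML$ on $\varphi_j$. Invoking the well-known correctness of semantic games for graded modal logic with the counting global modality, $\bV$ wins iff $M, v \models \varphi_j = X_j^0 = \theta^0$, closing the induction.

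\textbf{Main obstacle.} The only delicate point is the variable case at $k = 0$, which does not strictly decrease either component of the lexicographic measure; it is resolved by appealing to the standard correctness of $\GGML$-semantic games applied to the variable-free body $\varphi_j$. I also remark that every play of $\cG(M,w,\Lambda)$ is finite: between any two successive visits to a schema variable the iteration counter strictly drops, and between such visits the schema under evaluation strictly shrinks in structural size, so a run terminates in at most $(k_0 + 1)\cdot |\Lambda|$ positions, where $k_0$ is the initial iteration round chosen by Eloise. Consequently the game is a determined finite-branching game, which legitimises the symmetric ``$\bV$ loses iff the opponent has a winning strategy'' style of reasoning used implicitly in the Boolean, modal, and negation clauses above.
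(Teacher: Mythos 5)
Your proof is correct and follows essentially the same route as the paper: a correspondence lemma stating that the verifier wins from position $(\bV, v, \theta, k)$ if and only if $M, v \models \theta^k$, proved by induction on the iteration round with a nested structural induction, from which the theorem follows by relating the initial positions (rule bodies of accepting predicates) to the formulas $X^n$. If anything, your write-up is more careful than the paper's on three points the paper treats only implicitly: the symmetric formulation in $\bV$ together with determinacy (via finiteness of plays) to handle the negation clause, the variable case at $k = 0$ via correctness of the variable-free $\GGML$ game, and the explicit passage from initial positions to the statement $M, w \models \Lambda$.
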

\begin{proof}
    We prove a more general result from which the claim follows. Given a subschema $\psi$ of $\Lambda$, we prove by induction on $k \in \N$ that
    $M, w \models \psi^k$ iff Eloise has a winning strategy in $\cG(M, w, \Lambda)$ starting from the position $(\El, w, \psi, k)$.

    We prove the base case for $k = 0$ by induction on structure of $\psi$. 
    \begin{itemize}
        \item If $\psi \colonequals p$, where $p \in \Pi$, $p^0 = p$ and thus the claim holds trivially. Also the cases for negated proposition symbols, Boolean connectives, $\top$ and $\bot$ are trivial. 
        \item Assume that $\psi \colonequals \Diamond_{\geq m} \theta$. Now, 
        $M, w \models (\Diamond_{\geq m} \theta)^0$ iff there are at least $m$ distinct successors 
        $\{u_1, \ldots, u_m\}$ of $w$ such that $M, u_i \models \theta^0$ for all $i \in [m]$. By the induction hypothesis, Eloise has a winning strategy in $\cG(M, u_i, \Lambda)$ starting from $(\El, u_i, \psi, 0)$; this is equivalent to Eloise having a winning strategy in $\cG(M, w, \Lambda)$ starting from the position $(\El, w, \Diamond_{\geq k} \theta, 0)$. Other diamonds and boxes are handled analogously.
        \item Assume that $\psi \colonequals Y$ for some head predicate $Y$ of $\Lambda$ and let $\varphi_Y$ denote the body of its base rule. Now, by the induction hypothesis we have $M, w \models Y^{0}$ iff $M, w \models \varphi_Y^{0}$ iff Eloise has the winning strategy in the game $\cG(M, w, \Lambda)$ starting from the position $(\El, w, \psi_Y, 0)$. 
    \end{itemize}

    Assume that the induction hypothesis holds for $0 \leq \ell < k$. We prove the claim for $k$. 
    %
    %
    %
        %
        %
        The cases of literals, Boolean connectives, diamonds, and boxes are handled similarly for $k=0$.
        %
        %
        %
        Assume that $\psi \colonequals Y$. for some head predicate $Y$ of $\Lambda$ and let $\psi_Y$ denote the body of its induction rule. Now, by the induction hypothesis we have $M, w \models Y^{k}$ iff $M, w \models \psi_Y^{k-1}$ iff Eloise has the winning strategy in the game $\cG(M, w, \Lambda, k-1)$ starting from the position $(\El, w, \psi_Y, k-1)$. 
        %

    Now, we can prove the desired theorem.
    Now, $M, w \models \Lambda$ iff there exists a $k \in \N$ and an accepting predicate $X$ of $\Lambda$ such that $M, w \models X^k$. Moreover, by the result above, this is equivalent to the fact that Eloise has a winning strategy in $\cG(M, w, \Lambda)$ starting from $(\El, w, X, k)$ iff Eloise has a winning strategy in $\cG(M, w, \Lambda)$.
\end{proof}
\begin{remark}
From the semantic games for $\GGMSC$, it is straightforward to obtain the corresponding semantic games for $\GMSC$, $\MSC$ and $\SC$, and to prove the corresponding result of Theorem \ref{thrm: semantic game} for these logics.    
\end{remark}

\subsection{Global semantic game}

The \textbf{global semantic game} $\cG^*(M, w, \Lambda)$ is played over a pointed $\Pi$-model $(M,w) = ((W, R, V), w)$ and a $(\Pi, \cT)$-program of $\GGMSC$. Again, the game has two players, Abelard and Eloise, and Eloise tries to show that $(M,w)$ is accepted by $\Lambda$ and Abelard opposes this. 
A \textbf{position} of the game is simply a labeled tuple $f \colon W \to \wp( \cT ) $.

Intuitively, Eloise tries to show that she can start from a global configuration of $\Lambda$ over $M$, where $w$ is accepted and then ``backward'' to the initial global configuration of $\Lambda$ over $M$. 
The role of Abelard is to check that each global configuration given by Eloise is valid w.r.t. the rules of $\Lambda$.

More formally, the set of \textbf{initial positions} of the game are the labeled tuples $f$ such that for at least one accepting predicate $X$ of $\Lambda$, we have $X \in f(w)$. In the beginning of the game, Eloise chooses an initial position of the set of initial positions. If the set of initial positions is empty (i.e. $\Lambda$ does not have any accepting predicates), then Abelard wins. Again, strictly speaking, there is a ``starting position'' where Eloise chooses an initial position.

The \textbf{rules} of $\cG^* (M, w, \Lambda)$ are defined as follows. 
In each position $f$ of the game, Eloise first declares if $f$ is the final position of the game.
\begin{enumerate}
    \item If Eloise declares that $f$ is the final position of the game, then the game continues as follows.
    %
        Abelard chooses a node $v \in W$ and a head predicate $X$ of $\Lambda$. Let $\varphi$ denote the body of the base rule of $X$.
        %
        %
        Then Eloise wins if the following holds: $X \in f(v)$ iff Eloise has a winning strategy in $\cG(M, v, \varphi)$. 
        %

    %
    %
    %
    %
    \item If Eloise declares that $f$ is not the final position of the game, then the game continues as follows. Eloise gives a labeled tuple $g \in \wp(\cT)^W$, then Abelard can decide to \textbf{challenge} $g$, or not. If Abelard does not challenge $g$, then the game continues from $g$. If Abelard challenges $g$, then the game continues as follows. 
    %
    %
    %
        Abelard chooses a node $v \in W$ and a head predicate $X$ of $\Lambda$. Let $\psi$ denote the body of the induction rule of $X$.
        %
        %
        Then Eloise wins if the following holds: $X \in f(v)$ iff Eloise has a winning strategy in $\cG(M_g, v, \varphi)$. 
        %
        %
    %
    %
\end{enumerate}
We write $M, w \Vvdash \Lambda$ iff Eloise has a winning strategy in $\cG^* (M, w, \Lambda)$.

Next, we prove the correctness of our global semantic game.
\begin{theorem}\label{thrm: global semantic game}
For each pointed $\Pi$-model $(M, w)$ and $\Pi$-program $\Lambda$ of $\GGMSC$,
\[
M, w \models \Lambda \iff M, w \Vvdash \Lambda.
\]
\end{theorem}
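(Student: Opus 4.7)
The plan is to build a direct correspondence between positions in $\cG^*(M, w, \Lambda)$ and the global configurations $g_0, g_1, \ldots$ that $\Lambda$ induces on $M$. The technical core is a routine structural lemma: for every $(\Pi, \cT)$-schema $\psi$, every node $v$ and every $k \in \N$,
\[
M, v \models \psi^k \iff M_{g_k}, v \models \psi,
\]
proved by induction on $\psi$ using that $g_k(u) = \{Y : M, u \models Y^k\}$. The correctness of $\cG(M, v, \cdot)$ on both $\GGML$-formulae (Theorem~\ref{thrm: semantic game}) and $(\Pi, \cT)$-schemata (its schema version from the preceding remark) will also be used throughout.

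For the direction $M, w \models \Lambda \Rightarrow M, w \Vvdash \Lambda$, pick $n \in \N$ and an accepting predicate $X$ with $M, w \models X^n$, so $X \in g_n(w)$ and $g_n$ is a valid initial position. Eloise's strategy is to play $g_n, g_{n-1}, \ldots, g_0$ in order and declare $g_0$ final. It then suffices to check every check Abelard can trigger. The base-rule check at $g_0$ reduces, via the definition of $g_0$ and Theorem~\ref{thrm: semantic game}, to: $Y \in g_0(v) \iff \text{Eloise wins } \cG(M, v, \varphi_Y)$, which holds because $Y \in g_0(v) \iff M, v \models \varphi_Y$. A challenge of the transition $g_k \to g_{k-1}$ reduces, via the structural lemma and the schema version of Theorem~\ref{thrm: semantic game}, to $Y \in g_k(v) \iff M_{g_{k-1}}, v \models \psi_Y$, which holds because $Y \in g_k(v) \iff M, v \models \psi_Y^{k-1}$.

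For the direction $M, w \Vvdash \Lambda \Rightarrow M, w \models \Lambda$, fix a winning strategy $\sigma$ of Eloise. Against the Abelard strategy that never challenges, $\sigma$ must declare finality after finitely many moves (under the standing convention that infinite plays are losing for Eloise), producing a play $f_0, f_1, \ldots, f_n$ with $f_n$ declared final. The concluding base-rule check, which $\sigma$ must win, forces $Y \in f_n(v) \iff M, v \models \varphi_Y$ for all $v, Y$, hence $f_n = g_0$. For each $k < n$, consider the Abelard strategy that mimics the previous play through round $k$ and then challenges $f_{k+1}$: Eloise's moves $f_0, \ldots, f_{k+1}$ are unchanged, since Abelard's decision follows Eloise's move, and winning the challenge forces $Y \in f_k(v) \iff M_{f_{k+1}}, v \models \psi_Y$ for every $v, Y$. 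A backward induction on $i$ from $0$ to $n$ now yields $f_{n-i} = g_i$: the base case $i=0$ is $f_n = g_0$, and the step combines the challenge equivalence with the structural lemma. Hence $f_0 = g_n$, and since $f_0$ is an initial position, some accepting predicate $X$ lies in $g_n(w)$, giving $M, w \models X^n$ and thus $M, w \models \Lambda$.

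The delicate point is the interplay between Eloise's declaration of finality and Abelard's single-shot challenge: although Abelard may challenge at most once, the fact that his challenge can fall on any intermediate round forces every transition $f_k \to f_{k+1}$ to be locally correct, not just the final base-rule check. This observation, paired with the convention that infinite non-declared plays are losing for Eloise, is what powers the backward-induction argument; everything else reduces to the structural lemma and two invocations of Theorem~\ref{thrm: semantic game}.
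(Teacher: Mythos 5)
Your proof is correct and follows essentially the same route as the paper: the forward direction has Eloise replay the global configurations $g_n, g_{n-1}, \ldots, g_0$ and survive any single challenge, and the backward direction extracts the non-challenged play from her winning strategy and shows by induction (using challenges at each transition) that its positions coincide with the global configurations. Your explicit structural lemma $M, v \models \psi^k \iff M_{g_k}, v \models \psi$, the remark that Abelard's challenge cannot alter the prefix of Eloise's play, and the convention on infinite plays are all points the paper uses implicitly, so these are clarifications rather than a different argument.
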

\begin{proof}
Assume that $M, w \models \Lambda$ and $M = (W, R, V)$. Let $k \in \N$ be the smallest round where $M, w \models X^k$ for some accepting predicate. Let $(g_0, \ldots, g_k)$ be the sequence of global configurations of $\Lambda$ in $M$, for each round $i \in [0; k]$. We construct a winning strategy for Eloise, where she starts by choosing $g_k$ as the initial position of the game and during the game in each position $g_i$ she picks a position $g_{i-1}$, where $i \in [k]$. Note that $g_k$ is an initial position since $X \in g_k(w)$. After reaching $g_0$ Eloise declares that $g_0$ is the final position of the game. Clearly, if Abelard does not challenge Eloise at any position, Eloise wins. On the other hand, if Abelard challenges $g_i$ for $i \in [0; k-1]$, then he will always lose for the following reason. 
Let $v \in W$ and let $Y$ be a head predicate of $\Lambda$. We let $\psi$ denote the body of the induction rule of $Y$, if $i \neq 0$, and otherwise we let $\psi$ denote the body of the base rule of $Y$. 
Now, by the definition of $g_i$ we have $Y \in g_{i+1}(v)$ iff $M, v \models \psi^i$ iff $M_{g_i}, v \models \psi$ iff Eloise has a winning strategy in $\cG(M_{g_i}, v, \psi)$.

For the converse direction, assume that $M, w \Vvdash \Lambda$, i.e., Eloise has a winning strategy $\sigma$ in $\cG^*(M, w, \Lambda)$. Let $f_k, \ldots, f_0$ enumerate the positions induced by $\sigma$ over $\cG^*(M, w, \Lambda)$ in the case where Abelard does not challenge Eloise in any round, where $f_k$ is an initial position and $f_{i-1}$ is followed by $f_{i}$ during the game. Let $g_k, \ldots, g_0$ enumerate the global configurations of $\Lambda$ in $M$ from round $k$ to round $0$. 
We show by induction on $i \in [0; k]$ that $f_i = g_i$. The case $f_0 = g_0$ is trivial. 
Assume that $f_j = g_j$ for every $j < i$ and we show that $f_i = g_i$ also holds. 
Let $X$ be a head predicate of $\Lambda$ and $\psi$ the body of its induction rule. 
Now, we have $M, g_{i-1}, u \models \psi$ iff $M, u \models \psi^{i-1}$ iff $M, u \models X^{i}$ iff $X \in g_i(u)$ by the definition of global configurations.
Moreover, $M, f_{i-1}, u \models \psi$ if and only if Eloise has a winning strategy in $\cG(M_{f_{i-1}}, u, \psi)$ iff $X \in f_i(u)$.
By the induction hypothesis $f_{i-1} = g_{i-1}$, thus $X \in g_i(u)$ iff $X \in f_i(u)$, i.e. $f_i = g_i$.
\end{proof}

\begin{remark}
Again, it is straightforward to define the corresponding global semantic games for $\GMSC$, $\MSC$ and $\SC$ from the global semantic games of $\GGMSC$, and obtain the corresponding result of Theorem \ref{thrm: global semantic game} for these logics.    
\end{remark}

\subsection{Asynchronous semantics}

In this section, we define asynchronous game-theoretic semantics for $\GGMSC$.
Let $(M, w)$ be a pointed $\Pi$-model and $\Lambda$ a $(\Pi, \cT)$-program of $\GGMSC$.

The \textbf{asynchronous semantic game $\cA\cG(M, w, \Lambda)$ of $\GGMSC$} is defined as follows. The game is defined analogously to $\cG(M, w, \Lambda)$, except that the game positions are tuples of the form $(\bV, v, \varphi)$ instead of $(\bV, v, \varphi, k)$, i.e., the game positions do not record the current iteration round. Furthermore, the set of initial positions is the set of tuples of the form $(\El, w, \varphi)$, where $\varphi$ is the body of the induction rule or the base rule of an accepting predicate. The rules of the game are defined analogously as in $\cG(M, w, \Lambda)$ except that the variables are handled as follows:
\begin{itemize}
    \item In a position $(\bV, v, X)$ the game continues as follows. Let $\varphi_X$ and $\psi_X$ denote the body of the base rule and the body of the induction rule of $X$ respectively. The verifier chooses a formula $\chi$ from the set $\{\varphi_X, \psi_X\}$ and the game continues from the position $(\bV, v, \chi)$.
\end{itemize}
That is, when a variable is under verification, the current verifier can choose if the variable is iterated or not. Notice that the game can continue infinitely many rounds and if that happens then \emph{neither} player wins.

\textbf{Asynchronous $\GGMSC$} (or simply $\GGMSC^A$) is the set of programs of $\GGMSC$, where the acceptance of each program is based on its asynchronous semantic games, i.e., a $\Pi$-program $\Lambda$ of $\GGMSC^A$ accepts a pointed $\Pi$-model $(M, w)$ iff Eloise has a winning strategy in $\cA\cG(M, w, \Lambda)$. Analogously, we define $\GMSC^A$, $\MSC^A$ and $\SC^A$.

\section{Formula size game}

In this section, we define a formula size game for $\GGMSC$ (inspired by the formula size game defined for the modal $\mu$-calculus in \cite{mu-calculus-formula-size}) and prove that the game characterizes the logical equivalence of classes of pointed models up to programs of $\GGMSC$ of given size (cf. Theorem \ref{thrm: uniform fs-game characterization}). We start by introducing auxiliary notions and notations, then we formally define the game and consider its properties. 

\subsection{Syntax forest}

Informally, the syntax forest of a $\GGMSC$-program contains the syntax tree for each body of each rule and additional back edges which point from each variable to the corresponding bodies of its rules. 
These syntax forests are illustrated in Example \ref{example: syntax forest}.

We start by defining the concepts of trees and forests. Given a non-empty set $L$ of labels, a \textbf{node-labeled directed tree} (over $L$) is a tuple $(V, E, \lambda)$, where $V$ is a non-empty set of \textbf{nodes}, $\lambda \colon V \to L$ is a \textbf{labeling function} and $E \subseteq V \times V$ is a set of \textbf{edges} defined as follows. There is a node $v \in V$ called the \textbf{root} such that for every node $v \neq u \in V$ there is a single \textbf{directed walk}, i.e., a sequence of nodes $v_1, v_2, \ldots, v_k$, where $k > 1$, $v_1 = v$, $v_k = u$ and $(v_i, v_{i+1}) \in E$ for every $i \in [k-1]$. A \textbf{directed path} is a directed walk, where every node is distinct.
A \textbf{node-labeled directed forest} $(V, E, \lambda)$ is a disjoint union of node-labeled trees. 
We may also associate multiple node-labeling functions with trees and forests instead of one.
A node-labeled directed forest $(V, E, \lambda)$ \textbf{with back edges} $B \subseteq V \times V$ is a tuple $(V, E, B, \lambda)$, where for every $(v, u) \in B$, there is no directed walk from $v$ to $u$ through edges $E$.   

From now on, we may omit the word \emph{directed} in the concepts of directed trees, directed walks and so on above, since we do not consider non-directed trees, walks and so on. 

Now, we may define the \textbf{syntax tree} of a schema. Let $\psi$ be a $(\cT, \Pi)$-schema of $\GGMSC$ written. The syntax tree of $\psi$ is a node-labeled tree $T_\psi = (V_\psi, E_\psi, \lambda_\psi)$ defined as follows. 
The set $V_\psi$ consists of the occurrence of the subschemata of $\psi$ and the relation $E$ corresponds to the subschemata relation between the subschemata of $\psi$, the function
\[
\lambda_\psi \colon V_\psi \to \Pi \cup \cT \cup \{\top, \bot, \neg, \land, \lor\} \cup \bigcup_{k \in \Z_+} \{\Diamond_{\geq k}, \Box_{< k}, \DiamondG_{\geq k}, \BoxG_{< k}\}
\]
is a \textbf{labeling function} that labels each node in $V_\psi$ with its main connective, or with a symbol in $\Pi \cup \cT \cup \{\bot, \top\}$ respectively.\footnote{More formally, $V_\psi$ consists of a set of sequences of subschemata defined as follows. \textbf{(1)} $(\psi) \in V_\psi$, 
\textbf{(2)} If $(\psi_1, \ldots, \psi_m) \in V_\psi$ and $V_\psi \colonequals \varphi_1 \land \varphi_2$, then $(\psi_1, \ldots, \psi_m, \varphi_1) \in V_\psi$ and $(\psi_1, \ldots, \psi_m, \varphi_2) \in V_\psi$, \textbf{(3)} If $(\psi_1, \ldots, \psi_m) \in V_\psi$ and $\psi_m \colonequals  \Diamond_{\geq \ell} \varphi$, then $(\psi_1, \ldots, \psi_m, \varphi) \in V_\psi$. The cases for $\neg$, $\lor$, $\Box_{< \ell}$, $\DiamondG_{\geq \ell}$ and $\BoxG_{< \ell}$ are defined analogously. Now, $E_\psi$ is defined as follows: if $\vec{\psi}\colonequals (\psi_1, \ldots, \psi_m) \in V_\psi$ and $\vec{\psi'} \colonequals (\psi_1, \ldots,\psi_{m}, \psi_{m+1}) \in  V_\psi$, then $(\vec{\psi}, \vec{\psi'}) \in E_\psi$, and there are no other edges. Now, for each $\vec{\psi} = (\psi_1, \ldots,\psi_m) \in V_\psi$, we define $\lambda_\psi(\vec{\psi})$ as follows. If $\psi_m$ is a formula of the form $\varphi \in \Pi \cup \cT \cup \{\bot, \top\}$, then $\lambda(v) = \varphi$. On the other hand, if $\psi_m$ is a formula of the form $ \varphi_1 \land \varphi_2$, then $\lambda(\vec{\psi}) = \land$. The cases for $\neg$, $\lor$, diamonds and boxes are analogously defined to $\land$. However, any isomorphic forest with back edges to the syntax forest of a program also suffices .}

Let $\mathsf{base}$ and $\mathsf{iter}$ be new constant symbols.
Let $\cT$ be a finite set of schema variables and $\Pi$ a set of proposition symbols. Given a $(\cT, \Pi)$-program $\Lambda$, its \textbf{syntax forest} is a node-labeled forest $(V_\Lambda, E_\Lambda, B_{\Lambda}, \rho_\Lambda, \lambda_\Lambda)$ consisting of back edges $B_\Lambda$ and two labeling functions $\rho_\Lambda \colon V_\Lambda \to \cT \times \{\mathsf{base}, \mathsf{iter} \}$ and $\lambda_\Lambda$, which are defined as follows.
\begin{itemize}
    \item 
    For each head predicate $X$ in $\Lambda$, we let $\varphi_X$ and $\psi_X$ denote the bodies of its base rule and induction rule, respectively. Now, $(V_\Lambda, E_\Lambda, \lambda_\Lambda)$ is the disjoint union of node-labeled trees $\bigcup_{X \in \cT} (V_{\psi_X}, E_{\psi_X}, \lambda_{\psi_X}) \cup (V_{\varphi_X}, E_{\varphi_X}, \lambda_{\varphi_X})$. Moreover, for each $v \in V_{\varphi_X}$ and $u \in V_{\psi_X}$, we have $\rho(v) = (X, \mathsf{base})$ and $\rho(u) = (X, \mathsf{iter})$. 
    \item 
    $B_{\Lambda} \subseteq V \times V$ is the set of back edges defined as follows. For each $w \in V_\Lambda$ such that $\lambda_\Lambda(w) \in \cT$ and for each $u \in V_\Lambda$ that is the root of a tree and $\rho(u) \in  \{X\} \times \{\mathsf{base}, \mathsf{iter} \}$, we define $(w, u) \in B_\Lambda$. 
\end{itemize}
A partial syntax forest represents a subprogram of a full program. 
Now, let $S = \bigcup_{k \in \N} \{\Diamond_{\geq k}, \Box_{< k}, \DiamondG_{\geq k}, \BoxG_{< k}\}$ and $S_m = \{\Diamond_{\geq m}, \Box_{< m}, \DiamondG_{\geq m}, \BoxG_{< m}\}$.
The size of a (partial) syntax forest $\cF = (V, E, B, \rho, \lambda)$ is defined by
\[
\abs{\cF} \colonequals \abs{V \setminus V_S} + \abs{V_{roots}} + \sum_{v \in V_S, \lambda'(v) \in S_m} m,
\]
where $V_{roots}$ is the set of roots of trees in $\cF$ and $V_S = \{\, u \in V \mid \lambda(v) \in S\,\}$.  
It is not hard to see that if $\cF_\Lambda$ is the syntax forest of a program $\Lambda$, then $\abs{\Lambda} = \abs{\cF_\Lambda}$. In other words, the size of the (partial) syntax forest corresponds to the size a (partial) program that it represents.

Let $\Lambda$ be a $\Pi$-program of $\GGMSC$ and let $\cF_\Lambda = (V_\Lambda, E_\Lambda, B_\Lambda, \rho_\Lambda, \lambda_\Lambda)$ be its syntax forest. Given a $k \in \N$ and a node $v \in V_\Lambda$, we define the \textbf{$u$-subformula} $\Lambda_u$ of $\Lambda$ as follows.
\begin{itemize}
    \item If $\lambda_\Lambda(u) \in \Pi \cup \cT \cup \{\top, \bot\}$, then $\Lambda_u = \lambda_\Lambda(u)$.
    \item If $\lambda_\Lambda(u) = * \in \{\lor, \land\}$ and $u_1, u_2$ are the successors of $u$, then $\Lambda_u = \Lambda_{u_1} * \Lambda_{u_2}$.
    \item If $\lambda_\Lambda(u) = \star \in \{\neg \} \cup \bigcup_{k \in \N} \{\Diamond_{\geq k}, \Box_{< k}, \DiamondG_{\geq k}, \BoxG_{< k}\}$ and $u'$ is the successors of $u$, then $\Lambda_u = \star \Lambda_{u'}$.
\end{itemize}
Since $\Lambda_u$ is a schema, $\Lambda^i_u$ denotes the $i$th iterated formula of $\Lambda_u$.

\begin{example}\label{example: syntax forest}
In Figure \ref{fig:syntax forest} we illustrate the syntax forest of a program.
\begin{figure}[ht]
\centering
\caption{Below, on the left we have a program $\Lambda$ with two base rules (in red) and two induction rules (in blue) and on the right we have its syntax forest $\cF_\Lambda$. The two blue trees correspond to the bodies of the induction rules while the red ones correspond to the bodies of the base rules. Back edges are drawn as dotted edges. The size of the program and its syntax forest is $19$. If $v$ is the node labeled by $\Diamond_{\geq 2}$ in $\cF_\Lambda$, then $\Lambda_v = \Diamond_{\geq 2} X$.}
\label{fig:syntax forest}
\begin{minipage}[h]{0.45\textwidth}
\begin{tcolorbox}[colback=yellow!5!white]
Program:
\begin{align*}
  &\tre{X (0) \colonminus \neg p} &&\tb{X \colonminus Y \land \Diamond_{\geq 2} X} \\
  &\tre{Y (0) \colonminus r \lor q} && \tb{Y \colonminus \Box_{< 3} \neg Y}.
\end{align*}
\end{tcolorbox}
\end{minipage}
\hspace{0.5em}
\begin{minipage}[h]{0.45\textwidth}
\begin{tikzpicture}[scale=0.6, every node/.style={scale=0.6}, nodes={draw, circle}, <-]

\node[fill=blue!20] (X1) {$\land$};
\node[below left of=X1, node distance=2.8cm, fill=blue!20, double, thick](X2) {$Y$};
\node[below of=X1, node distance=1.5cm, fill=blue!20](X3) {$\Diamond_{\geq 2}$};
\node[below of=X3, node distance=1.5cm, fill=blue!20, double, thick](X4) {$X$};

\node[right of =X1, node distance=2.5cm, fill=red!20](X01) {$\neg$};
\node[below of =X01, node distance=2cm, fill=red!20](X02) {$p$};

\node[left of =X1, node distance=4cm, fill=blue!20] (Y1) {$\Box_{< 3}$};
\node[below of =Y1, node distance=1.5cm, fill=blue!20, thick](Y2) {$\neg$};
\node[below of =Y2, node distance=1.5cm, fill=blue!20, double, thick](Y3) {$Y$};

\node[left of =Y1, node distance=2.5cm, fill=red!20] (Y01) {$\lor$};
\node[below left of =Y01, node distance=2.8cm, fill=red!20](Y02) {$r$};
\node[below of =Y01, node distance=2cm, fill=red!20](Y03) {$q$};

\path [-stealth, very thick]
(X1) edge node [draw=none] {} (X2)
(X1) edge node [draw=none] {} (X3)
(X3) edge node [draw=none] {} (X4)
(Y1) edge node [draw=none] {} (Y2)
(Y2) edge node [draw=none] {} (Y3)
(X01) edge node [draw=none] {} (X02)
(Y01) edge node [draw=none] {} (Y02)
(Y01) edge node [draw=none] {} (Y03);

\path [-stealth, thick]
(X4) edge [dotted, bend right=20] node [draw=none] {} (X01)
(Y3) edge [dotted] node [draw=none] {} (Y01)
(X2) edge [dotted, bend right=80] node [draw=none] {} (Y01);

\path [-stealth, thick]
(X4) edge [bend right=57, dotted] node [draw=none] {} (X1)
(X2) edge [dotted] node [draw=none] {} (Y1)
(Y3) edge [bend left=44, dotted] node [draw=none] {} (Y1);
\end{tikzpicture}   
\end{minipage}
\end{figure}
\end{example}

\subsection{Clocked models and syntactic sugar}

In this section, we define key notions and notations related to clocked models which intuitively are models associated with an information how long they can be ``iterated''. We also define some syntactic sugar.

Given a $\Pi$-model $(M, w)$ and an $\ell \in \N$, the $\ell$\textbf{-clocked $\Pi$-model of $(M, w)$} is a tuple $(M, w, \ell)$. We simply say that a Kripke model is clocked if it is an $\ell$-clocked model for some $\ell \in \N$.  
Intuitively, $\ell$ tells how many times $(M, w)$ can be ``iterated''. 
Let $\bbA$ be a class of pointed Kripke models.
We let $\mathrm{CM}_\ell(\bbA) \colonequals \{\, (A, w, \ell) \mid (A, w) \in \bbA \,\}$ denote the class of $\ell$-clocked models obtained from $\bbA$.
Furthermore, we let $\CM_*(\bbA)$ denote the class of all $\ell$-clocked models obtained from $\bbA$ for every $\ell \in \N$.

We now define some syntactic sugar. 
Let $\cA \subseteq \CM_*(\bbA)$ be a class of clocked models. The class of \textbf{iterated models obtained from} $\cA$ is 
\[
    \mathrm{iter}(\cA) = \{\, (M, w, \ell-1) \mid (M, w, \ell) \in \cA,\, \ell \geq 1 \,\},
\]
and the class of \textbf{initialized models obtained from} $\cA$ is
\[
    \mathrm{init}(\cA) = \{\, (M, w, 0) \mid (M, w, \ell) \in \cA, \ell = 0 \,\}.
\]
Moreover, the class of \textbf{successor models obtained from} $\cA$ is
\[
\Box \cA \colonequals \{\, ((W, R, V), w', \ell) \mid ((W, R, V), w, \ell) \in  \cA, (w, w') \in R \,\}. 
\]
Respectively the class of \textbf{pointed models obtained from} $\cA$ is
\[
\BoxG \cA \colonequals \{\, ((W, R, V), w', \ell) \mid ((W, R, V), w, \ell) \in  \cA, w' \in W \,\}.
\]

An \textbf{$m$-successor function over} $\cA$ is a function $f \colon \cA \to \wp(\Box \cA)$, where for every $(A,w, \ell) \in \cA$, we have $f(A, w, \ell) \subseteq \Box \{(A, w, \ell)\}$ such that $\abs{f(A, w, \ell)} = m$. Intuitively, an $m$-successor function assigns $m$ successor models to each clocked model. Moreover, we let $\Diamond_f \cA = \bigcup_{(A, w, \ell) \in \cA} f(A, w, \ell)$. Analogously, an \textbf{$m$-global function} over $\cA$ is a function $g \colon \cA \to \wp(\BoxG \cA)$, where for every $(A, w, \ell) \in \cA$, we have $g(A, w, \ell) \subseteq \BoxG \{(A, w, \ell)\}$ such that $\abs{g(A, w, \ell)} = m$. Moreover, we also define $\DiamondG_g = \bigcup_{(A, w, \ell) \in \cA} g(A, w, \ell)$. These definitions generalizes for partial $m$-successor functions $f \colon \cA \rightharpoonup \wp(\Box \cA)$ and partial $m$-global functions $g \colon \cA \rightharpoonup \wp(\DiamondG \cA)$ in a natural way.

\subsection{Definition of the game and its applications}

In this section, we define the formula size game for $\GGMSC$, played by Samson and Delilah. 
We begin with an informal idea of the game and its main results.
Given a $k \in \N$ and two classes, $\bbA$ and $\bbB$, of pointed $\Pi$-models, in the game $\mathrm{FS}^\Pi_k(\bbA, \bbB)$, Samson tries to construct the syntax forest of a $\Pi$-program of $\GGMSC$ of size at most $k$ that separates $\bbA$ from $\bbB$. Intuitively, the game verifies all possible semantic games at once over the constructed program and given models.
Samson's strategy is uniform in the game if he always constructs the same program regardless of what Delilah does.
Theorem \ref{thrm: uniform fs-game characterization} shows that the game characterizes the logical equivalence of classes of pointed models of a given program size, but requires uniform winning strategies for Samson. 
However, Theorem \ref{thrm: finite uniform non-uniform characterization} shows that, over finite models, uniform winning strategies for Samson are not necessary.

For the rest of this section we fix an arbitrary set $\Pi$ of proposition symbols, a $k \in \N$ and two classes,  $\bbA$ and $\bbB$, of pointed $\Pi$-models.

Now, we formally define the formula size game $\mathrm{FS}^{\Pi}_k(\bbA, \bbB)$ as follows. 
A \textbf{position} of the game is a tuple $(\cF, U, \mathrm{left}, \mathrm{right})$, where 
\begin{itemize}
    \item 
    $\cF = (V, E, B, \rho, \lambda)$ is a finite non-empty forest associated with back edges $B$, a partial function
    \[
    \lambda \colon V \rightharpoonup \Pi \cup \mathrm{VAR} \cup \{\top, \bot, \neg, \land, \lor\} \cup \bigcup_{m \in \Z_+} \{\Diamond_{\geq m}, \Diamond_{< m}, \DiamondG_{\geq m}, \DiamondG_{< m}\}
    \] 
    and a function $\rho \colon V \to  \mathrm{VAR} \times \{\mathsf{base}, \mathsf{iter} \}$,
    \item 
    $U \subseteq V$ is a set of nodes of $\cF$, 
    \item 
    $\mathrm{left} \colon V \to \wp(\CM_*(\bbA))$ and $\mathrm{right} \colon V \to \wp(\CM_*(\bbB))$ are functions that both assign for each node, a \emph{finite} set of clock models.
\end{itemize}
Moreover, the number of \textbf{resources} used in the position is $\abs{\cF}$.
The set of \textbf{initial positions} of the game consists of positions of the form 
$((\{U_0\}, \emptyset, \emptyset, \rho_0, \emptyset), U_0, \leftf_0, \rightf_0)$, where $U_0$ is a finite non-empty set of nodes.

At the start of every play of the game Delilah first chooses a finite subset $\bbA' \subseteq \bbA$. 
Then for every pointed model $(A, w) \in \bbA'$, Samson gives an integer $\ell_{(A,w)} \in \N$ which forms a set $\cA = \{\, (A, w, \ell_{(A,w)}) \mid (A, w) \in \bbA \,\}$ of clocked models. 
Delilah chooses a finite subset $\cB \subseteq \CM_*(\bbB)$.
Then Samson gives a set of nodes $U_0$ and a function $\rho_0 \colon U_0 \to \mathrm{VAR}\times \{\mathsf{base}, \mathsf{iter}\}$ such that if for some $v \in U_0$, we have $\rho_0(v) = (X, s)$, then we also have $\rho_0(u) = (X, s')$, where $s' \in \{\mathsf{base}, \mathsf{iter}\} \setminus \{s\}$, and he also gives a function $\leftf_0 \colon U_0 \to \wp(\cA)$ such that $\cA = \bigcup_{i\in [n], b \in \{0,1\}}\leftf_0(v^b_i)$. Lastly, $\cB$ and $\rho_0$ induces a function $\rightf_0 \colon U_0 \to \wp(\cB)$ such that $\rightf_0(v) = \cB $ if $\rho_0(v) = (X, \mathsf{base})$, and $\rightf_0(v) = \mathrm{init}(\cB)$ if $\rho_0(v) = (X, \mathsf{iter})$.
Then the initial position of the play is 
$((\{U_0\}, \emptyset, \emptyset, \rho_0, \emptyset), U_0, \leftf_0, \rightf_0)$.
Strictly speaking, there is a starting position 
\[
((\{v_0\}, \emptyset, \emptyset, \emptyset, \emptyset), v_0, \{(v_0, \CM_*(\bbA)\}, \{(v_0, \CM_*(\bbB)\})
\]
that determines the initial position of the game as described above.

The \textbf{rules} of the game are defined as follows.
Assume that the position in a play of the game is $P = (\cF, U, \mathrm{left}, \mathrm{right})$ with     $\abs{\cF} = r$, and
    %
    %
    $\cF = (V, E, B, \rho, \lambda)$. 
    %
    %

The position that follows $P$ is denoted below by $P' = (\cF', U', \mathrm{left}', \mathrm{right}', r')$, and $\cF'$ is denoted by $(V', E', B', \rho', \lambda')$.
Samson loses if $\abs{\cF} > k$, or for some $v \in V$, there is $(M,w,\ell) \in \leftf(v)$ with $\ell >0$ and $\rho(v) = (X, \mathsf{base})$, or he cannot make the choices required by the move. Moreover, if in the definition of a move below we do not explicitly define a component of $P'$, then the component is the same as in $P$.

Before Samson decides which move is played, Delilah chooses a node $v$ from $U$ and the move is played in $v$. We let 
    %
    %
    $\leftf(v) = \cL$ and $\rightf(v) = \cR$.
    %
    %
If $v \not\in \dom(\lambda)$, then Samson has the option to make one of the following moves. 
\begin{itemize}
    \item \textbf{$\neg\,$-move}: Let $v'$ be a fresh node not in the domain $V$. Now the position $P'$ that follows $P$ is intuitively obtained by swapping the sets $\cL$ and $\cR$. Formally, $P'$ is defined as follows: $V' = V \cup \{v'\}$, 
    $E' = E \cup \{(v, v')\}$, $U' = (U \setminus \{v\}) \cup \{v'\}$, 
    $\rho' = \rho[\rho(v)/v']$,
    $\lambda' = \lambda[\neg/v]$, 
    $\mathrm{left}' = \mathrm{left}[\cR/v', \emptyset/v]$, and $\mathrm{right}' = \mathrm{right}[\cL/v', \emptyset/v]$. 
    \item \textbf{$\lor$-move}: Intuitively, Samson splits the set $\cL$.
    Formally, Samson gives two sets $\cL_1, \cL_2 \subseteq \cL$ such that $\cL_1 \cup \cL_2 = \cL$. Then Delilah chooses an index $i \in \{1,2\}$. 

    Let $v_1$ and $v_2$ be fresh nodes not in the domain $V$. Now the position $P'$ following $P$ is defined as follows: $V' = V \cup \{v_1, v_2\}$, 
    $E' = E \cup \{(v, v_1), (v, v_2)\}$, $U' = (U \setminus \{v\}) \cup \{v_1, v_2\}$, 
    $\rho' = \rho[\rho(v)/v_1, \rho(v)/v_2]$,
    $\lambda' = \lambda[\lor/v]$, 
    $\mathrm{left}' = \mathrm{left}[\cL_1/v_1, \cL_2/v_2, \emptyset/v]$, 
    $\mathrm{right}' = \mathrm{right}[\cR/v_1, \cR/v_2, \emptyset/v]$. %
    \item \textbf{$\land$-move}: The move is identical to the $\lor$-move except that the node $v$ is labeled with $\land$ and the roles of $\cL$ and $\cR$ are switched as follows: Samson gives two sets $\cR_1, \cR_2 \subseteq \cR$ such that $\cR_1 \cup \cR_2 = \cR$. Then we define $\mathrm{left}' = \mathrm{left}[\cL/v_1, \cL/v_2, \emptyset/v]$ and $\mathrm{right}' = \mathrm{right}[\cR_1/v_1, \cR_1/v_2, \emptyset/v]$.
    \item \textbf{$\Diamond_{\geq m}$-move}: Intuitively, Samson provides $m$-successor models for every model in $\cL$, which Delilah can challenge by selecting a finite subset, while Delilah provides $m$-successor models only for a chosen subset of models in $\cR$, which Samson must then respond to by selecting finite, non-empty successor sets. 
    
    Formally, Samson gives an $m$-successor function $f$ for $\cL$ and Delilah gives a \emph{partial} $m$-successor function $g$ for $\cR$. 
    Then Delilah chooses a finite subset $\cL' \subseteq \Diamond_{f} \cL$. For every $(N, u, \ell) \in \dom(g)$, Samson chooses a \emph{non-empty} finite subset $\cR_{(N, u, \ell)} \subseteq \Diamond_{g} \{(N, u, \ell)\}$, then we set $\cR' = \bigcup_{(N, u, \ell)\in \dom(g)} \cR_{(N, u, \ell)}$.

    Let $v'$ be a fresh node not in $V$. The position $P'$ that follows $P$ is defined as follows:
    $U' = (U \setminus v) \cup \{v'\}$,
    $V' = V \cup \{v'\}$, 
    $E' = E \cup \{(v, v')\}$, 
    $\rho' = \rho[\rho(v)/v']$,
    $\lambda' = \lambda[\Diamond_{\geq m}/v]$,
    $\mathrm{left}' = \mathrm{left}[\cL' /v', \emptyset/v]$, $\mathrm{right}' = \mathrm{right}[\cR'/v', \emptyset/v]$. 
    \item \textbf{$\Box_{< m}$-move}: The move is identical to the $\Diamond_{\geq m}$-move except that $v$ is labeled with $\Box_{< m}$ and the roles of $\cL$ and $\cR$ are switched as follows. Samson gives an $m$-successor function $f$ for $\cR$ and Delilah gives a \emph{partial} $m$-successor function $h$ for $\cL$. 
    Then Delilah chooses a finite subset $\cR' \subseteq \Diamond_{f} \cR$. For every $(M, w, \ell) \in \dom(h)$, Samson chooses a \emph{non-empty} finite subset $\cL_{(M, w, \ell)} \subseteq \Diamond_{h} \{(M, w, \ell)\}$, then we set $\cL' = \bigcup_{(M, w, \ell)\in \dom(h)} \cR_{(M, w, \ell)}$. Then we define $\mathrm{left}[\cL' /v', \emptyset/v]$ and $\mathrm{right}' = \mathrm{right}[\cR'/v', \emptyset/v]$
    \item \textbf{$\DiamondG_{\geq m}$-move}: Identical to the $\Diamond_{\geq m}$-move except that $v$ is labeled with $\DiamondG_{\geq m}$, Samson gives an $m$-global function $f$ over $\cL$ instead of an $m$-successor function over $\cL$ and Delilah gives a partial $m$-global function $h$ over $\cR$. 
    \item \textbf{$\BoxG_{< m}$-move}: Identical to the $\Box_{< m}$-move except that $v$ is labeled with $\BoxG_{< m}$, Samson gives an $m$-global function $f$ over $\cR$ instead of an $m$-successor function over $\cR$ and Delilah gives a partial $m$-global function $h$ over $\cL$.
    \item \textbf{Sig-move}: Samson chooses a symbol $\varphi \in \Pi \cup \{\top, \bot\}$.
    The position $P'$ following $P$ is defined as follows: $\lambda' = \lambda[\varphi/v]$, and $U' = (U \setminus\{v\}) \cup \{v'\}$.
    If $p$ separates $\cL$ from $\cR$, then Samson wins.
    Otherwise, Delilah wins. 
    \item \textbf{$X$-move}: 
    Intuitively, Samson picks a variable $X$, and Delilah decides whether to challenge. If she challenges, a fresh base rule for $X$ is constructed, considering only models with zero clocks. If not, the game continues (or begins constructing) the induction rule for $X$, clocks are updated, and models with zero clocks are stored in the current node. 
    
    Formally, Samson chooses a variable $X \in \mathrm{VAR}$. 
    If $\rho(v) = (Y,\mathsf{base})$ for any $Y \in \mathrm{VAR}$, 
    then Samson loses.
    Delilah can choose to \textbf{challenge} Samson.

    First assume that Delilah does not challenge Samson. If $\mathrm{iter}(\cL) = \mathrm{iter}(\cR) = \emptyset$, then Samson wins.
    Assume that there are nodes $u, v' \in V$ such that $(u, v') \in B$ and $\rho(v') = (X,\mathsf{iter})$. If such nodes do not exist, then we let $v'$ denote a fresh node.
    The position $P'$ is defined as follows: 
    \begin{itemize}
        \item 
        $U' = (U \setminus \{v\}) \cup \{v'\}$,
        \item 
        $V' = V\cup \{v'\}$, $B' = B \cup \{(v, v')\}$,
        \item 
        $\rho' = \rho[(X, \mathsf{iter})/v']$,
        \item 
        $\lambda' = \lambda[X/v]$,
        \item 
        $\leftf' = \leftf[\mathrm{iter}(\cL) \cup \leftf(v') /v', \mathrm{init}(\cL)/v]$ (omitting $\leftf(v')$, if $v'$ is a fresh node), 
        \item 
        $\rightf' = \rightf[\mathrm{iter}(\cR) \cup \rightf(v') /v', \mathrm{init}(\cR)/v]$ (omitting $\rightf(v')$, if $v'$ a fresh node). 
    \end{itemize}

    Assume that Delilah challenges Samson. 
    Let $v'$ be a fresh node not in $V$. The position $P'$ is defined as follows: 
    \begin{itemize}
        \item $U' = (U \setminus \{v\}) \cup \{v'\}$,
        \item $V' = V \cup \{v'\}$, $B' = B \cup \{(v, v')\}$,
        \item $\rho' = \rho[(X, \mathsf{base})/v']$,
        \item $\lambda' = \lambda[X/v]$,
        \item $\leftf' = \leftf[\mathrm{init}(\cL) /v', \emptyset/v]$,
        \item $\rightf' = \rightf[\mathrm{init}(\cR)/v', \emptyset/v]$,
    \end{itemize}
\end{itemize}

If $v \in \dom(\lambda)$, then Samson has to play a move according to the label given by $\lambda(v)$.  
\begin{itemize}
    \item $\lambda(v) = \neg$: The position $P'$ is defined as follows: $U' = (U \setminus \{v\}) \cup \{v'\}$, 
    $\mathrm{left}' = \mathrm{left}[\cR/v', \emptyset/v]$, and $\mathrm{right}' = \mathrm{right}[\cL/v', \emptyset/v]$.
    \item $\lambda(v) = \lor$: 
    Samson gives two sets $\cL_1, \cL_2 \subseteq \cL$ such that $\cL_1 \cup \cL_2 = \cL$. 
    Let $v_1$ and $v_2$ be the successors of $v$. Now the position $P'$ that follows $P$ is defined as follows: $U' = (U \setminus \{v\}) \cup \{v_1, v_2\}$,  
    \begin{itemize}
        \item $\mathrm{left}' = \mathrm{left}[\cL_1 \cup \leftf(v_1)/v_1, \cL_2 \cup \leftf(v_2)/v_2, \emptyset/v]$ and
        \item $\mathrm{right}' = \mathrm{right}[\cR \cup \rightf(v_1)/v_1, \cR \cup \rightf(v_2)/v_2, \emptyset/v]$. 
    \end{itemize}
    \item $\lambda(v) = \land$:
    The case is identical to $\lambda(v) = \lor$ except that the roles of $\cL$ and $\cR$ are switched in an analogous way as in the unlabeled case.
    \item $\lambda(v) = \Diamond_{\geq m}$:
    Samson gives an $m$-successor function $f$ for $\cL$ and Delilah gives a \emph{partial} $m$-successor function $h$ for $\cR$. 
    Then Delilah chooses a finite subset $\cL' \subseteq \Diamond_{f} \cL$. For every $(N, u, \ell) \in \dom(h)$, Samson chooses a \emph{non-empty} finite subset $\cR_{(N, u, \ell)} \subseteq \Diamond_{h} \{(N, u, \ell)\}$, then we set $\cR' = \bigcup_{(N, u, \ell)\in \dom(h)} \cR_{(N, u, \ell)}$.  
    
    Let $v'$ be the successor of $v$. The position $P'$ that follows $P$ is defined as follows: 
    \begin{itemize}
        \item $U' = (U \setminus \{v\}) \cup \{v'\}$,
        \item $\mathrm{left}' = \mathrm{left}[\cL' \cup \leftf(v) /v', \emptyset/v]$ and 
        \item $\mathrm{right}' = \mathrm{right}[\cR' \cup \rightf(v) /v', \emptyset/v]$. 
    \end{itemize}
    \item  $\lambda(v) = \Box_{< m}$: The case is identical to the case $\lambda(v) = \Diamond_{\geq m}$ except that the roles of $\cL$ and $\cR$ are switched in an analogous way as in the unlabeled case.
    \item $\lambda(v) = \DiamondG_{\geq m}$ and $\lambda(v) = \BoxG_{<m}$ are analogously obtained from the unlabeled cases.
    \item $\lambda(v) \in \mathrm{VAR}$: 
    This is similar to the unlabeled $X$-move.
\end{itemize}
Note that Samson might not be able to perform a move that he chose, e.g., a $\Diamond_{\geq m}$-move if there is a model in $\cL$ (or resp. in $\cR$) that does not have $m$ successor models.

Now, we have defined the formula size game and can start to study its properties. 
We first present a proposition which shows that every play of the formula size game ends in a finite number of steps.
\begin{proposition}\label{prop: finite game tree}
    Every play of the game $\mathrm{FS}^\Pi_k(\bbA, \bbB)$ is finite.
\end{proposition}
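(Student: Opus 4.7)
The plan is to combine two finiteness observations into a well-founded measure on game positions. First, the size $|\cF|$ of the (partial) syntax forest is monotone non-decreasing and bounded above by $k$, since Samson loses the moment $|\cF| > k$; consequently only finitely many moves can introduce fresh nodes. Second, once the forest has stabilized, every move is a labeled move on an existing node, and in this regime the clocks act as a descending ordinal whose decreases at $X$-moves are what ultimately force termination.

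First I would identify the \emph{growth moves} --- the unlabeled $\neg, \lor, \land, \Diamond_{\geq m}, \Box_{< m}, \DiamondG_{\geq m}, \BoxG_{< m}$ moves, the Sig-move (which immediately ends the play), and any $X$-move whose back-edge target does not yet exist --- and note that each strictly increases $|\cF|$ by at least one, so the growth phase contains at most $\ordo(k)$ such moves. In the subsequent stable regime, every non-$X$ labeled move transfers an active node $v$ to one of its tree-children, strictly decreasing the tree-descent measure $T(s) = \sum_{v \in U_s} \delta(v)$ (where $\delta(v)$ counts the proper tree-descendants of $v$ in the tree part of $\cF$), and leaves every clock untouched. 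A labeled $X$-move, on the other hand, either terminates the play via the Samson-wins clause when $\mathrm{iter}(\cL) = \mathrm{iter}(\cR) = \emptyset$, or relocates the iter portions of $\cL \cup \cR$ to $v'$ with all their clocks decreased by $1$. Packaging these into the lex product
\[
\mu(s) \;=\; \bigl(\, k - |\cF_s|,\ \bigl(N_L(s), N_{L-1}(s), \ldots, N_1(s)\bigr),\ T(s)\,\bigr),
\]
where $L$ is the a~priori upper bound on clocks fixed at the start of play and $N_\ell(s)$ is the total multiplicity of clock-$\ell$ model-instances at active nodes, the intention is that $\mu$ strictly decreases in the well-founded lex order on $\N \times \N^L \times \N$ after every move of the game.

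The hard part will be correctly accounting for non-$X$ labeled moves that can locally inflate $N_\ell$: a labeled $\lor$- or $\land$-move with $\cL_1 \cap \cL_2 \neq \emptyset$ can instantiate a model at two children simultaneously, and a labeled $\Diamond_{\geq m}$- or $\DiamondG_{\geq m}$-move can replace $|\cL|$ source instances by up to $m |\cL|$ successor instances of the same clock. To keep the middle coordinate of $\mu$ from increasing under these moves, I would refine the definition of the clock profile to weight each instance by the tree-depth of its host node, or else restrict the lex comparison to positions reached after a full tree-descent phase has completed. In either formulation the inflation is bounded by the finite depth of the stabilized $\cF$, so a subsequent $X$-move (or Sig-move) is always reached within $\ordo(k)$ moves and the clock profile is forced strictly downward before further duplication can undo the progress. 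A careful bookkeeping along these lines shows $\mu$ is strictly decreasing at every move, and since a well-founded order admits no infinite descending chain, every play of the game $\mathrm{FS}^\Pi_k(\bbA, \bbB)$ must end in finitely many moves.
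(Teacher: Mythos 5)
Your high-level strategy---an explicit well-founded ranking function---is a genuinely different route from the paper's proof, and you have correctly isolated the two mechanisms that actually force termination: the bound $\abs{\cF} \leq k$ (so only finitely many moves can create or label nodes) and the fact that an $X$-move either ends the play with a Samson win (when $\mathrm{iter}(\cL) = \mathrm{iter}(\cR) = \emptyset$) or strictly decrements the clocks of every model it carries across the back edge. The paper instead argues by contradiction in three lines: in an infinite play some variable $X$ would have to be iterated indefinitely, but since no move ever increases a clock and each iteration strictly decreases the clocks of the models passing through, an $X$-move must eventually be played at a node $v$ with $\mathrm{iter}(\leftf(v)) = \mathrm{iter}(\rightf(v)) = \emptyset$, at which point Samson wins and the play ends. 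Your route, if completed, would give a quantitative length bound that the paper's argument does not; but as submitted it is not complete.

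The gap is the one you name yourself and then wave away: the measure $\mu$ you define does \emph{not} strictly decrease, and neither of your sketched repairs works as stated. Literal tree-depth weighting fails because a labeled $\lor$-move with $\cL_1 = \cL_2 = \cL$ replaces an instance of weight $d$ by two instances of weight $d-1$, and $2(d-1) < d$ only for $d < 2$; a $\Diamond_{\geq m}$-move is worse. You need weights that dominate the duplication factor, e.g.\ $(k+1)^{d}$ where $d$ is the remaining descent potential of the host node, or equivalently a Dershowitz--Manna multiset ordering on pairs (clock, descent potential), under which every move replaces each affected instance by finitely many strictly smaller ones. Your alternative repair---comparing the plain clock profile only at positions where a descent phase has completed---also fails: a clock-$L$ instance duplicated by a $\lor$-move can leave one copy parked at another active node while the other copy passes the $X$-move, so at the next phase boundary $N_L$ is unchanged and $N_{L-1}$ has grown, which is a lexicographic \emph{increase}. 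Since your conclusion rests on the unproved assertion that ``careful bookkeeping along these lines'' makes $\mu$ strictly decrease at every move---an assertion that is false for the $\mu$ you defined and for both repairs as stated---the proof has a genuine hole, albeit one that the multiset/geometric-weight construction (together with your coordinates $k - \abs{\cF}$ first and $T$ last, the latter needed for moves whose model sets are empty or all clock-zero) would close.
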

\begin{proof}
    Suppose, for the sake of contradiction, that some play of the game continues indefinitely. Thus, during the play of the game there must be a variable $X$ that has been iterated indefinitely. Therefore, there is a position $(\cF, U, \leftf, \rightf)$ of the play where an $X$-move is played in a node $v \in U$ such that $\mathrm{iter}(\leftf(v)) = \mathrm{iter}(\rightf(v)) = \emptyset$ in which case Samson wins.
\end{proof}

Next, we define the notion on uniform strategies. Informally, Samson's strategy is uniform if he has a program of $\GGMSC$ in his mind and during the game he constructs the syntax forest of that program, regardless of how Delilah plays.

Formally, let $\Lambda$ be a $\Pi$-program of $\GGMSC$ and let $\cF_\Lambda = (V_\Lambda, E_\Lambda, B_\Lambda, \rho_\Lambda, \lambda_\Lambda)$ be its syntax forest.
Now, let 
$P = (\cF, U, \mathrm{left}, \mathrm{right})$ be a position of a play of $\mathrm{FS}^\Pi_k(\bbA, \bbB)$, where $\cF = (V, E, B, \rho, \lambda)$. A function $f \colon V \to V_\Lambda$ is a \textbf{position embedding (w.r.t. $\Lambda$ and $P$)} if it satisfies the following conditions.
\begin{itemize}
    \item If $u$ is the root of a tree in $\cF$, then $f(u)$ is the root of a tree in $\cF_\Lambda$.
    \item $f$ is an embedding, i.e., satisfies the following properties.
    \begin{itemize}
        \item $f$ is an injection.
        \item For every $u, u' \in V$, $(u, u') \in S$ iff $(f(u), f(u')) \in S_\Lambda$, where $S \in \{E, B\}$.
        \item For every $u \in \dom(\rho)$, $\rho(u) = \rho_\Lambda(f(u))$.
        \item For every $u \in \dom(\lambda)$, $\lambda(u) = \lambda_\Lambda(f(u))$.
    \end{itemize} 
\end{itemize}
Note that by the definition of position embedding $\abs{\cF} \leq \abs{\cF_\Lambda}$.
Let $\sigma$ be a strategy for Samson in $\mathrm{FS}^\Pi_k(\bbA, \bbB)$.
We say that $\sigma$ is \textbf{uniform} (w.r.t. $\Lambda$), if in every position in every play of the game there is a position embedding w.r.t. $\Lambda$ such that in each play of the game the initial position is of the form 
\[
((\{U_0\}, \emptyset, \emptyset, \rho_0, \emptyset), U_0, \leftf_0, \rightf_0), 
\]
where $U_0$ consists of two nodes $u, u' \in V$ for each accepting predicate $Y$ of $\Lambda$ such that $\rho_0(u) = (Y, \mathsf{base})$ and $\rho_0(u') = (Y, \mathsf{iter})$.

Now, we shall prove that the formula size game (w.r.t. uniform strategies) characterizes the logical equivalence of $\GGMSC$-programs of a given program size.
\begin{theorem}\label{thrm: uniform fs-game characterization}
    The following claims are equivalent.
    \begin{enumerate}
        \item Samson has a uniform winning strategy in $\mathrm{FS}^\Pi_k(\bbA, \bbB)$. 
        \item There is a $\Pi$-program of $\GGMSC$ of size at most $k$ that separates $\bbA$ from $\bbB$.
    \end{enumerate}
\end{theorem}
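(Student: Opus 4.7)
The plan is to prove both directions by using the standard semantic games of Theorem \ref{thrm: semantic game} as a bridge, so that the formula size game is essentially a simultaneous ``meta‑game'' played across all semantic games for models in $\bbA$ (for Eloise) and $\bbB$ (for Abelard). The key invariant I will maintain throughout any play is that at a position $(\cF,U,\leftf,\rightf)$ with position embedding $f \colon V \to V_\Lambda$, for every $v \in U$:
\begin{itemize}
\item for every $(A,w,\ell) \in \leftf(v)$, Eloise wins $\cG(A,w,\Lambda)$ starting from $(\El, w, \Lambda_{f(v)}, \ell)$;
\item for every $(B,u,\ell) \in \rightf(v)$, Abelard wins $\cG(B,u,\Lambda)$ starting from $(\El, u, \Lambda_{f(v)}, \ell)$.
\end{itemize}

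For the direction (2) $\Rightarrow$ (1), I assume a $\Pi$-program $\Lambda$ of size at most $k$ that separates $\bbA$ from $\bbB$. By Theorem \ref{thrm: semantic game}, for each $(A,w) \in \bbA$ Eloise has a winning strategy in $\cG(A,w,\Lambda)$ starting from some $(\El,w,X_{A,w},\ell_{A,w})$ with $X_{A,w}$ an accepting predicate; these $\ell_{A,w}$ are what Samson assigns as clocks in the starting move. For any clocked model $(B,u,\ell)$ chosen by Delilah from $\bbB$, the falsifier Abelard wins $\cG(B,u,\Lambda)$ from every initial position, including $(\El,u,X,\ell)$ for each accepting $X$. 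Samson then builds a position embedding copy of $\cF_\Lambda$ step by step: the label $\lambda_\Lambda(f(v))$ dictates which move he plays, and his ``hard'' choices (disjunct splits, successor functions, challenge‑free variable transitions) are supplied by Eloise's strategies on the left and Abelard's strategies on the right. The case analyses for $\neg,\lor,\land,\Diamond_{\geq m},\Box_{<m},\DiamondG_{\geq m},\BoxG_{<m}$ are uniform propagations of the semantic‑game moves; the Sig‑move and empty‑iter winning conditions are the leaves of the semantic games. Since $f$ is an embedding into $\cF_\Lambda$, the resource count $\abs{\cF} \leq \abs{\cF_\Lambda} = \abs{\Lambda} \leq k$, so Samson never exceeds the budget.

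For the direction (1) $\Rightarrow$ (2), a uniform winning strategy $\sigma$ for Samson already comes equipped with a witness program $\Lambda$ (by definition of uniform strategy), and by the existence of a position embedding into $\cF_\Lambda$ we also get $\abs{\Lambda} \leq k$. What must be shown is that $\Lambda$ separates $\bbA$ from $\bbB$. Fix $(A,w) \in \bbA$ and let Delilah play $\bbA' = \{(A,w)\}$; Samson answers with some clock $\ell$ and some initial frontier reflecting the accepting predicates of $\Lambda$. Letting Delilah play, as $\cB$, all finite sets of clocked models from $\bbB$ and play arbitrarily thereafter, Samson's guaranteed win by $\sigma$ translates into a winning strategy for Eloise in $\cG(A,w,\Lambda)$ from the corresponding initial position by reading off Samson's moves through the position embedding; hence $A,w \models \Lambda$ by Theorem \ref{thrm: semantic game}. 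Symmetrically, for $(B,u) \in \bbB$ Delilah picks $\cB \supseteq \{(B,u,\ell)\}$ for arbitrary $\ell$ and, during play, challenges all variables and mimics any winning semantic‑game line for the falsifier; a winning play for Samson then furnishes an Abelard strategy in $\cG(B,u,\Lambda)$ at every iteration depth, so $B,u \not\models \Lambda$.

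The main obstacle will be the variable move, where the single label $X$ in $\cF_\Lambda$ carries back edges to both the base and the iter subtrees. I need to argue carefully that the clock decrement in the non‑challenge branch matches the ``$\ell \to \ell-1$'' step of the semantic game at a schema variable, while the challenge branch together with the $\mathrm{init}/\mathrm{iter}$ split enforces precisely the $\ell=0$ case where base rules fire; the loss condition ``$\ell > 0$ with $\rho(v)=(X,\mathsf{base})$'' and the immediate win condition ``$\mathrm{iter}(\cL)=\mathrm{iter}(\cR)=\emptyset$ with no challenge'' are exactly the boundary cases needed to keep the invariant intact. Proposition \ref{prop: finite game tree} guarantees that this inductive analysis is well founded, so a straightforward (but bookkeeping‑heavy) induction on the length of play closes both directions.
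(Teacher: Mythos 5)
You take essentially the same route as the paper, with one cosmetic difference and one presentational one. The invariant you state is exactly the paper's condition $(*)$, except that the paper phrases it directly as $M, w \models \Lambda^{\ell}_{g(v)}$ for members of $\leftf(v)$ and $N, u \not\models \Lambda^{\ell}_{g(v)}$ for members of $\rightf(v)$, rather than via winning strategies in the games $\cG(\cdot,\cdot,\Lambda)$; by Theorem \ref{thrm: semantic game} (plus determinacy of these finite-depth games) the two formulations coincide, so this buys nothing and costs nothing. Your direction $(2)\Rightarrow(1)$ is the paper's construction: Samson follows the syntax forest of $\Lambda$, splits $\cL$ at $\lor$ and $\cR$ at $\land$, supplies $m$-successor and $m$-global functions from witnesses, and uses the init/iter bookkeeping at $X$-moves; the paper reads these choices off the satisfaction relation where you read them off Eloise's and Abelard's semantic-game strategies, which is the same content. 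In direction $(1)\Rightarrow(2)$ you differ in presentation: you extract, model by model, a semantic-game strategy from plays in which Delilah tracks that single model and mimics the semantic-game adversary. (Here your phrase ``challenges all variables'' for right models is not quite right: Delilah must challenge exactly when the tracked clock is $0$; challenging at a positive clock drops the tracked model into the init set, and never challenging at clock $0$ can hand Samson the empty-iter win.) The paper instead runs one backward induction, from the leaves of the finite game tree induced by $\sigma$ (finiteness is Proposition \ref{prop: finite game tree}), establishing $(*)$ at every position for all models simultaneously, and then reads separation off the starting position; that backward induction is precisely the clean way to discharge the ``bookkeeping-heavy'' induction you defer, so I would recommend adopting it.

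One step of yours is wrong as written: ``by the existence of a position embedding into $\cF_\Lambda$ we also get $\abs{\Lambda}\leq k$.'' A position embedding is an injection of the forest built so far into $\cF_\Lambda$, so it yields $\abs{\cF}\leq\abs{\cF_\Lambda}=\abs{\Lambda}$; combined with the rule that Samson loses when $\abs{\cF}>k$, this bounds the constructed forest, but it bounds nothing about $\abs{\Lambda}$ from above --- a winning play ends at its first Sig-move or unchallenged empty-iter $X$-move and need not ever construct all of $\cF_\Lambda$. So the size bound required by statement 2 does not follow from this inference. To be fair, the paper's own proof of $(1)\Rightarrow(2)$ is also silent on the size bound (it only proves that the program $\Lambda$ underlying the uniform strategy separates $\bbA$ from $\bbB$), so your proposal is no worse off than the paper here; but since you made the inference explicit, note that it is invalid, and a self-contained write-up should either build $\abs{\Lambda}\leq k$ into the notion of uniform strategy being used or argue for it separately.
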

\begin{proof}
    ``$1 \Rightarrow 2$''
    Assume that Samson has a uniform winning strategy $\sigma$ in $\mathrm{FS}^\Pi_k(\bbA, \bbB)$ w.r.t. a $\Pi$-program $\Lambda$ of $\GGMSC$. 
    We let $\cF_\Lambda = (V_\Lambda, E_\Lambda, B_\Lambda, \rho_\Lambda, \lambda_\Lambda)$ denote the syntax forest of $\Lambda$.
    By Proposition \ref{prop: finite game tree} every play of the game is finite and thus the game tree has finite depth.
    Therefore, we may prove by induction on the positions of the game tree induced by $\sigma$ (starting from the leaves) that the following condition holds in every position $P = (\cF, U, \leftf, \rightf)$ and in every node $v \in U$.
    \begin{equation}\tag{$*$}\label{induction_hypo_1}
    \begin{split}
    M, w \models \Lambda_{g(v)}^\ell &\text{ for each } (M, w, \ell) \in \cL, \\
    N, u \not\models \Lambda_{g(v)}^\ell &\text{ for each } (N, u, \ell ) \in \cR, \\
    \end{split}
    \end{equation}
    where $\leftf(v) = \cL$, $\rightf(v) = \cR$ and $g$ is a position embedding w.r.t. $\Lambda$ and $P$.

    First assume that $v \notin \dom(\lambda)$.
    \begin{itemize}
        \item $\lambda(g(v)) = \varphi \in \Pi \cup \{\top, \bot\}$ for some $p \in \Pi$: Since the game tree is induced by $\sigma$, the next move of Samson is the Sig-move choosing the symbol $\varphi$. Since $\sigma$ is a winning strategy, $\varphi$ separates $\cL$ from $\cR$. 
        Thus Condition (\ref{induction_hypo_1}) holds in the position $P$.
        \item $\lambda(g(v)) = \neg$: Let $s'$ be the successor of $g(v)$. By the induction hypothesis Condition~(\ref{induction_hypo_1}) holds in the positions following $P$. Therefore, for each $(M, w, \ell) \in \cL$, $M, w \not\models \Lambda^\ell_{s'}$ and for each $(N, u, \ell) \in \cR$, $N, u \models \Lambda^\ell_{s'}$. Since $\Lambda^\ell_{g(v)} = \neg \Lambda^\ell_{s'}$, Condition (\ref{induction_hypo_1}) holds for the node $v$ in the position $P$ also. 
        \item $\lambda(g(v)) = \lor$: Let $s_1$ and $s_2$ be the successors of $g(v)$. Let $\cL_1, \cL_2 \subseteq \cL$ be the selections of Samson according to $\sigma$. 
        By the induction hypothesis Condition (\ref{induction_hypo_1}) holds in the positions following $P$ w.r.t. the selections of Samson.
        Therefore, for every $i \in \{1, 2\}$ and for each $(M, w, \ell) \in \cL_i$, $M, w \models \Lambda^\ell_{s_i}$. Also, for each $(N, u, \ell) \in \cR$, $N, u \not\models \Lambda^\ell_{s_i}$. 
        Since $\cL_1 \cup \cL_2 = \cL$ and $\Lambda_{g(v)}^\ell = \Lambda_{s_1}^\ell \lor \Lambda_{s_2}^\ell$, for each $(M, w, \ell) \in \cL$ and $(N, u, \ell) \in \cR$, we have $M, w \models \Lambda^\ell_{g(v)}$ and $N, u \not\models \Lambda^\ell_{g(v)}$. Thus Condition (\ref{induction_hypo_1}) holds for the node $v$ in the position $P$.
        \item $\lambda(g(v)) = \Diamond_{\geq m}$: 
        Let $s'$ be the successor of $g(v)$. Let $f$ be the $m$-successor function for $\cL$ selected by Samson according to $\sigma$.
        Let $h$ be a partial $m$-successor function for $\cR$ given by Delilah. 
        By induction hypothesis Condition (\ref{induction_hypo_1}) holds in every position following $P$, no matter which subset Delilah chooses $\cL' \subseteq \Diamond_{f} \cL$, 
        for each $(N, u, \ell) \in \dom(h)$, Samson can choose a non-empty finite subset $\cR_{(N, u, \ell)} \subseteq \Diamond_{h} \{(N, u, \ell)\}$ according to $\sigma$. 
        Since $f$ is an $m$-successor function it means that for each $(M, w, \ell) \in \cL$ there are at least $m$ models $(M, w', \ell) \in \Diamond_f\cL$ such that $w'$ is a successor of $w$ and $M, w' \models \Lambda^\ell_{s'}$. Thus $M, w \models \Diamond_{\geq m} \Lambda^\ell_{s'}$, i.e., $M, w \models \Lambda^\ell_{g(v)}$. Similarly, we can show that for each $(N, u, \ell) \in \cR$ it holds that $N, u \not\models \Lambda^\ell_{g(v)}$. Therefore, Condition (\ref{induction_hypo_1}) holds for the node $v$ in the position $P$.
        \item $\lambda(g(v)) = X \in \mathrm{VAR}$: 
        There are two cases, either Delilah challenges Samson or not.
        First assume that Delilah challenges Samson. Let $s'$ be the successor of $g(v)$ over back edges $B_\Lambda$ such that $\rho_\Lambda(s') = (X, \mathsf{base})$. By induction hypothesis Condition (\ref{induction_hypo_1}) holds in the position following $P$, i.e., 
        for every $(M, w, 0) \in \mathrm{init}(\cL) $, $M, w \models \Lambda^0_{s'}$ and for every $(N, u, 0) \in \mathrm{init}(\cR)$, 
        $N, u \not\models \Lambda^0_{s'}$. Therefore, for every $(M, w, 0) \in \cL$ and $(N, u, 0) \in \cR$, $M, w \models \Lambda^0_{g(v)}$ and $N, u \not\models \Lambda^0_{g(v)}$. 
        Assume that Delilah does not challenge Samson. By induction hypothesis Condition (\ref{induction_hypo_1}) holds in the position following $P$, i.e., for every $\ell \geq 1$, and for every $(M, w, \ell) \in \mathrm{iter}(\cL)$, $M, w \models \Lambda^{\ell-1}_{s'}$ and for every $(N, u, \ell) \in \mathrm{iter}(\cR)$, $N, u \not\models \Lambda^{\ell-1}_{s'}$. Therefore, for every $(M, w, \ell) \in \cL$ and $(N, u, \ell) \in \cR$, we have $M, w \models \Lambda^\ell_{g(v)}$ and $N, u \not\models \Lambda^\ell_{g(v)}$. 
        
        Therefore, Condition (\ref{induction_hypo_1}) holds for the node $v$ in the position $P$.
    \end{itemize}
    The cases, where $\lambda(g(v))$ is $\land$, $\Box_{< m}$, $\DiamondG_{\geq m}$ or $\BoxG_{< m}$, are handled in an analogous way.
    Moreover, the cases, where $v \in \dom(\lambda)$, are proved similarly to non-labeled moves.

    Now, consider the starting position before the initial position of the game is determined. In the initial position Delilah first chooses a finite subset $\bbA' \subseteq \bbA$. 
    By induction hypothesis and since $\sigma$ is a uniform winning strategy, Samson can choose an integer $\ell_A \in \N$ for every $(A, w) \in \bbA'$, a set of nodes $U_0$, a function $\rho_0$ and a function $\leftf_0$ such that Samson wins, no matter which finite subset $\cB \subseteq \CM_*(\bbB)$ Delilah chooses.
    That is, for each $\bbA' \subseteq \bbA$, every $(A, w) \in \bbA'$ is accepted in round $\ell_A \in \N$ by an accepting predicate $X$ of $\Lambda$, and for each $\bbB' \subseteq \bbB$, for every $(B, w) \in \bbB'$, there is no round $\ell_B \in \N$ such that $\Lambda$ accepts $(B, w)$.
    Therefore, $\Lambda$ separates $\bbA$ from $\bbB$.

    ``$2 \Rightarrow 1$''
    Let $\Lambda$ be a $\Pi$-program of $\GGMSC$ of size at most $k$ that separates $\bbA$ from $\bbB$, and let $\cF_\Lambda = (V_\Lambda, E_\Lambda, B_\Lambda, \rho_\Lambda, \lambda_\Lambda)$ be its syntax.
    
    We build a uniform winning strategy w.r.t. $\Lambda$ for Samson and show by induction that in every position $P = (\cF, U, \leftf, \rightf)$ and for every $v \in U$ the following holds.
    \begin{equation}\tag{$*$}\label{induction_hypo_2}
    \begin{split}
    M, w \models \Lambda_{g(v)}^\ell &\text{ for each } (M, w, \ell) \in \cL, \\
    N, u \not\models \Lambda_{g(v)}^\ell &\text{ for each } (N, u, \ell) \in \cR, \\
    \end{split}
    \end{equation}
    where $\leftf(v) = \cL$, $\rightf(v) = \cR$ and $g$ is a position embedding w.r.t. $\Lambda$. 
    In the proof we let $P' = (\cF', v', \leftf', \rightf' )$ denote the position following $P$ and we let $g'$ denote a position embedding in the position $P'$.

    In the starting position, Delilah begins by choosing a subset $\bbA' \subseteq \bbA$. Then Samson chooses the smallest $\ell_A \in \N$ for every $(A, w) \in \bbA'$ such that $\Lambda$ accepts $(A, w)$ in round $\ell_A$. Now, let $\cA = \{\, (A, w, \ell_A) \mid (A, w) \in \bbA' \,\}$
    and let $\cV = \{Y_1, \ldots, Y_m\}$ be the set of accepting predicates of $\Lambda$ containing precisely $m$ distinct predicates.
    For each $i \in [m]$, we define $\cA^{\mathsf{base}}_i = \{\, (A, w, 0) \mid A, w \models Y_i^{0} \,\}$ and $\cA^{\mathsf{iter}}_i = \{\, (A, w, \ell_A) \mid A, w \models Y_i^{\ell_A+1}, \ell_A \geq 0 \,\}$.
    Samson chooses $2m$ distinct nodes $U_0 = \{v^0_1, v^1_1 \ldots, v^0_{m}, v^1_{m}\}$ and defines the following functions.
    \begin{itemize}
        \item A function $\rho_0 \colon U_0 \to \cV \times \{\mathsf{base}, \mathsf{iter}\}$ s.t. $\rho_0(v^0_i) = (Y_i, \mathsf{base})$ and $\rho_0(v^1_i) = (Y_i, \mathsf{iter})$.
        \item A function $\leftf_0 \colon U_0 \to \wp(\bbA^*)$, $\leftf_0(v^0_i) = \cA^{\mathsf{base}}_i$ and $\leftf_0(v^1_i) = \cA^{\mathsf{iter}}_i$.
    \end{itemize}
    Now, we see that for all $(B, w, \ell_B) \in \bbB^*$, $(B, w)$ is not accepted by $\Lambda$ in the round $\ell_B$.
    Thus, no matter which subset $\cB \subseteq \CM_*(\bbB)$ Delilah chooses, in each initial position 
    \[
    ((\{U_0\}, \emptyset, \emptyset, \rho_0, \emptyset), U_0, \leftf_0, \rightf_0), 
    \]
    where $\rightf_0$ is induced by $\cB$ and $\rho_0$,
    Condition (\ref{induction_hypo_2}) holds.

    Assume that Condition (\ref{induction_hypo_2}) holds in the current position $P$. We prove that Condition (\ref{induction_hypo_2}) holds in the position following $P$.
    
    First assume that $v \notin \dom(\lambda)$.
    \begin{itemize}
        \item $\lambda_\Lambda(g(v)) = \varphi \in \Pi \cup \{\top, \bot\}$: By induction hypothesis $\lambda_\Lambda(g(v))$ separates $\cL$ from $\cR$.
        \item $\lambda_\Lambda(g(v)) = \neg$: By induction hypothesis Condition (\ref{induction_hypo_2}) holds trivially in the position following $P$. 
        \item $\lambda_\Lambda(g(v)) = \lor$: Let $s_1$ and $s_2$ be the successors of $g(v)$. Samson splits the set $\cL$ as follows $\cL_1 = \{\, (M, w, \ell) \in \cL \mid M, w \models \Lambda^\ell_{s_1} \,\}$ and $\cL_2 = \{\, (M, w, \ell) \in \cL \mid A, w \models \Lambda^\ell_{s_2} \,\}$. On the other hand, for every $(N, u, \ell) \in \cR$ and for every $i \in \{1,2\}$ it holds that $N, u \not\models \Lambda^\ell_{s_i}$, and since $\cR$ is not split Condition (\ref{induction_hypo_2}) holds in the position following $P$. 
        Let $v_1$ and $v_2$ be fresh nodes not in $V$.
        To ensure uniformity, we set $g' = g[v_1/s_1, v_2/s_2]$. 
        \item $\lambda_\Lambda(g(v)) = \Diamond_{\geq m}$: Let $s$ be the successor of $g(v)$. 
        Now, $\Lambda_{g(v)} \colonequals \Diamond_{\geq m} \Lambda_s$.
        Therefore, for every $(M, w, \ell) \in \cL$ and for every $(N, u, \ell) \in \cR$, it holds $M, w \models \Diamond_{\geq m} \Lambda_s$ and $N, u \not\models \Diamond_{\geq m} \Lambda_s$.
        Thus, for every $(M, w, \ell) \in \cL$, there are $m$ distinct successors $w_1, \ldots, w_m$ of $w$ such that for every $i \in [m]$, $M, w_i \models \Lambda_s$. Moreover, for every $(N, u, \ell) \in \cR$, for some $m' < m$, there are precisely $m'$ distinct successors $v_1, \ldots, v_{m'}$ of $v$ such that for every $i \in [m']$, $N, u_i \models \Lambda_s$.

        Thus, Samson can define an $m$-successor function $f$ of $\cL$ such that for every $(M, w, \ell) \in \cL$ and for every $(M, w', \ell) \in f(M, w, \ell)$, we have $M, w' \models \Lambda^\ell_s$.
        On the other hand, for every non-empty partial $m$-successor function $h$ for $\cR$, for each $(N, u, \ell) \in \dom(h)$, there is $(N, u', \ell) \in h(N, u, \ell)$ such that $N, u' \not\models \Lambda^\ell_s$.
        
        Therefore, Condition (\ref{induction_hypo_2}) holds in the next position.
        To ensure uniformity, we set $g' = g[v'/s]$, where $v'$ is a fresh node. 
        \item 
        $\lambda_\Lambda(g(v)) = X \in \mathrm{VAR}$: By induction hypothesis Condition (\ref{induction_hypo_2}) holds in the current position. Let $\varphi_X$ be the body of the base rule of $X$ and let $\psi_X$ be the body of the induction rule of $X$. 

        Now, for every $(M, w, \ell) \in \cL$ with $\ell > 0$, we have $M, w \models \psi_X^{\ell-1}$. Also, for every $(N, u, \ell) \in \cR$ with $\ell > 0$, we have $N, u \not\models \psi_X^{\ell-1}$. Moreover, for every $(M, w, 0) \in \cL$, we have $M, w \models \varphi_X$ and for every $(N, u, 0) \in \cR$, we have $N, u \not\models \varphi_X$.

        Therefore, for every $(M, w, \ell) \in \mathrm{iter}(\cL)$, we have $M, w \models \psi_X^{\ell}$ and for every $(M, w, 0) \in \mathrm{init}(\cL)$, we have $M, w \models \varphi_X$.
        Analogously, for every $(N, u, \ell) \in \mathrm{iter}(\cR)$ we have $N, u \not\models \psi_X^{\ell}$ and for every $(N, u, 0) \in \mathrm{init}(\cR)$, we have $N, u \models \varphi_X$. Thus, Condition (\ref{induction_hypo_2}) holds in the next position.

        We can ensure the uniformity as follows.
        Let $s$ be a successor of $g(v)$ through $B$ back edges such that $\rho_\Lambda(s) = (X, \mathsf{iter})$ and let $s'$ be a successor of $g(v)$ through $B$ back edges such that $\rho_\Lambda(s') = (X, \mathsf{base})$. Let $v'$ be a fresh node. If Delilah challenges Samson, then we set $g' = g[v'/s']$. If Delilah does not challenge Samson, then there are two cases: either $s$ is in the range of $g$ or it is not. Assume that $s$ is in the range of $g$, then the uniformity holds trivially. If $s$ is not in the range of $g$, then we set $g' = g[v'/s']$.        %
    \end{itemize}
    The cases for dual operators are analogous and the cases for $\DiamondG_{\geq m}$ and $\BoxG_{< m}$ are analogous to standard counting operators.

    If $g(v) \in \dom(\lambda_\Lambda)$, then the arguments are similar. The only difference is that, in the case of disjunction and conjunction, there might be models that are left at a node of the forest to wait for verification. We consider the case where during a play of the game we enter into a node, where there are models already. 

    $\lambda(v) = \lor$: Let $\cL_1, \cL_2 \subseteq \cL$ be the sets of models chosen by Samson, analogous to the unlabeled case. 
    Let $v_1$ and $v_2$ be the successors of $v$.
    By the induction hypothesis, Condition (\ref{induction_hypo_2}) holds for the sets $\leftf(v_i)$ and $\rightf(v_i)$, i.e., for every $i \in \{1,2\}$, for every $(M, w, \ell) \in \leftf(v_i)$, $M, w \models \Lambda^\ell_{g(v_i)}$ and also for every $(N, u, \ell) \in \rightf(v_i)$, $N, u \models \Lambda^\ell_{g(v_i)}$. 
    Therefore, for every $i \in \{1,2\}$ every $(M, w, \ell) \in \cL_i \cup \leftf(v_i)$, we have $M, w \models \Lambda^\ell_{g(v)}$. Similarly, for every $i \in \{1,2\}$ and every $(N, u, \ell) \in \cR \cup \rightf(v_i)$, we have $N, u \not\models \Lambda^\ell_{g(v)}$. Thus, Condition (\ref{induction_hypo_2}) holds in the next position.

    The case where $\lambda(v) = \land$ is analogous.
\end{proof}

Next, we prove that, over finite models, if Samson has a non-uniform strategy in a formula size game, then he also has a uniform strategy in the game, and vice versa.
\begin{lemma}\label{lem: finite models non-uniform equiv uniform}
    Let $\bbC$ and $\bbD$ be the classes of finite pointed $\Pi$-models. The following claims are equivalent.
    \begin{enumerate}
        \item Samson has a uniform winning strategy in $\mathrm{FS}^\Pi_k(\bbC, \bbD)$.
        \item Samson has a non-uniform winning strategy in $\mathrm{FS}^\Pi_k(\bbC, \bbD)$. 
    \end{enumerate}
\end{lemma}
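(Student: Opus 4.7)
The direction $1 \Rightarrow 2$ is immediate because every uniform winning strategy is, by definition, a non-uniform winning strategy. For the converse direction, my plan is to invoke Theorem~\ref{thrm: uniform fs-game characterization}: Samson has a uniform winning strategy in $\mathrm{FS}^\Pi_k(\bbC, \bbD)$ if and only if there exists a $\Pi$-program of $\GGMSC$ of size at most $k$ that separates $\bbC$ from $\bbD$. Hence, starting from a non-uniform winning strategy $\sigma$, it suffices to construct such a separating program.

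The key finiteness input is the following. Any program of size at most $k$ uses at most $k$ distinct proposition symbols and at most $k$ schema variables (which we may canonically rename into $\{X_1, \ldots, X_k\}$), with all counting thresholds bounded by $k$. Thus, over any fixed finite alphabet $\Pi_0 \subseteq \Pi$, the set $\mathcal{P}_k(\Pi_0)$ of such programs using propositions from $\Pi_0$ is finite up to variable renaming. Moreover, each model in $\bbC \cup \bbD$ is finite, so only finitely many proposition symbols are relevant for any single model; propositions used by a program but not appearing in a particular model evaluate to false there, and can be replaced by $\bot$ without affecting the verdict or increasing the size.

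I plan to proceed via a two-stage argument. In the first stage, for any finite $\Pi_0 \subseteq \Pi$, I consider the equivalence relation $\sim_{\Pi_0}$ on the restricted classes $\bbC_{\Pi_0}, \bbD_{\Pi_0}$ of models whose propositions lie in $\Pi_0$, where $m \sim_{\Pi_0} m'$ iff $m$ and $m'$ satisfy the same programs of $\mathcal{P}_k(\Pi_0)$. Applying $\sigma$ to singleton pairs shows that no $\sim_{\Pi_0}$-class can contain models from both $\bbC$ and $\bbD$, as otherwise $\sigma$ would yield a size-$k$ program distinguishing two $\sim_{\Pi_0}$-equivalent models. Since $\mathcal{P}_k(\Pi_0)$ is finite, there are only finitely many such classes; choosing one representative per class gives a finite set $F_{\Pi_0}$, and by $\sigma$ some program $\Lambda^*_{\Pi_0} \in \mathcal{P}_k(\Pi_0)$ correctly classifies $F_{\Pi_0}$. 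By $\sim_{\Pi_0}$-invariance, $\Lambda^*_{\Pi_0}$ then separates $\bbC_{\Pi_0}$ from $\bbD_{\Pi_0}$.

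The main obstacle is lifting this to the whole classes $\bbC, \bbD$, because the combined alphabet $\bigcup_{m \in \bbC \cup \bbD} \mathrm{Props}(m)$ need not be finite. My plan is to exploit once more the bound on program size: since any candidate separating program uses at most $k$ proposition symbols, only finitely many propositions can be syntactically relevant. Concretely, I would argue that as $\Pi_0$ is enlarged, the family $\{\Lambda^*_{\Pi_0}\}$ eventually stabilizes, via a pigeonhole on the finitely many syntactic shapes of programs of size at most $k$ combined with the fact that any infinite enlargement cannot impose new constraints resolvable by a size-$k$ program without already being present on finite projections. Once a stable $\Lambda^*$ is obtained, a further appeal to the $\sim$-class argument—lifted to the full alphabet of $\Lambda^*$—shows that $\Lambda^*$ correctly classifies every model in $\bbC \cup \bbD$, yielding the desired separating program. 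This stabilization and the accompanying alphabet-indexed diagonalization constitute the delicate technical heart of the proof.
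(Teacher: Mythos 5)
Your proposal has a genuine gap at its core. The step ``applying $\sigma$ to singleton pairs shows that no $\sim_{\Pi_0}$-class can contain models from both $\bbC$ and $\bbD$, as otherwise $\sigma$ would yield a size-$k$ program distinguishing two $\sim_{\Pi_0}$-equivalent models'' (and, later, ``by $\sigma$ some program $\Lambda^*_{\Pi_0}$ correctly classifies $F_{\Pi_0}$'') assumes precisely the statement being proved: that a \emph{non-uniform} winning strategy over (finite, even singleton) classes yields a single separating program of size at most $k$. A non-uniform strategy does not determine any one program: Samson's syntactic choices may depend on Delilah's moves --- on the clocks she assigns to the right-hand models, on whether she challenges an $X$-move, on which finite subsets she picks in diamond/box moves --- so different plays of the same winning strategy can trace out different syntax forests. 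Extracting one program from this branching object is the actual content of the lemma, and your argument is circular at exactly this point. The paper's proof does this work explicitly: it takes the game tree induced by $\sigma$ restricted to finite subclasses in which the right-hand models are \emph{fully clocked} (all clocks up to $2^{\abs{D}k}$, so that every global configuration of every right-hand model gets verified --- a point about Delilah's clock choices that your proposal never addresses), \emph{prunes} it so that each forest node receives a consistent label across all branches, and then \emph{extends} the pruned tree by letting Samson mimic the move he played where a node was first labeled; only after this does every position embed into a single program $\Gamma_i$.

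Your second stage has a related weakness: the claimed ``stabilization'' of $\{\Lambda^*_{\Pi_0}\}$ as $\Pi_0$ grows is asserted via a vague pigeonhole and is not a proof (indeed, as stated there are infinitely many finite $\Pi_0 \subseteq \Pi$, hence a priori infinitely many candidates). The paper lifts from finite subclasses to the full classes by a cleaner counterexample-collection argument: enumerate the finitely many programs \emph{realized} from $\sigma$; if each one failed to separate $\bbC$ from $\bbD$, pick one witness model per program, gather these witnesses into finite classes $\bbC', \bbD'$, and apply the finite-class result to get a realized program separating $\bbC'$ from $\bbD'$ --- contradicting that every realized program fails on its own witness. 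Your overall plan (reduce, via Theorem \ref{thrm: uniform fs-game characterization}, to exhibiting a separating program) matches the paper, but both of the technical pillars that make that plan work are missing from your writeup.
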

\begin{proof}
The direction ``$1 \Rightarrow 2$'' is trivial, so let us consider the direction ``$2 \Rightarrow 1$''.

Let $T$ be the subtree of the game tree of $\mathrm{FS}^\Pi_k(\bbC, \bbD)$ induced by the winning strategy $\sigma$ of Samson. 
Given a $\bbD' \subseteq \bbD$, we say that $\bbD'$ is \emph{fully clocked} in an initial position $P = (\cF, v, \leftf, \rightf, \res)$ of $T$, if for each $(D, w) \in \bbD$ and $\ell \in [2^{\abs{D}k}]$ we have $(D, w, \ell) \in \rightf(v)$.
Let $\Gamma_1, \ldots, \Gamma_K$ enumerate all $\Pi$-programs of $\GGMSC$ of size at most $k$.

We show that for all finite subsets $\bbC'\subset \bbC$ and $\bbD' \subset \bbD$, there is a subtree $T'$ of $T$ that induces a uniform winning strategy for Samson in $\mathrm{FS}^\Pi_k(\bbC', \bbD')$. We construct $T'$ from $T$ as follows.
\begin{itemize}
    \item First, let $T'$ be a subtree of $T$ induced from the initial position $P$ of $T$ where the models in $\bbC'$ are clocked and the models in $\bbD'$ are fully clocked.
    \item 
    We then prune $T'$ recursively in the positions where an unlabeled move is played, starting from the root. Assume that in the current position $P = (\cF, U, \leftf, \rightf, \res)$ a move is played in an unlabeled node $v \in U$, and
    let $S$ be the set of successors of $v$ after the move is played.
    For each node $u \in S$, there is a position $P_u$ where $u$ is labeled. We prune all the positions from $T'$ where $u$ is labeled with a different symbol than in $P_u$. We continue this process until $T'$ can no longer be pruned.
\end{itemize}
After $T'$ is pruned it is clear that there is a program $\Gamma_i$ such that for each position in $T'$ there is a position embedding to $\Gamma_i$.

Now, we use the constructed $T'$ to induce a uniform winning strategy for Samson in $\mathrm{FS}^\Pi_k(\bbC', \bbD')$ w.r.t. $\Gamma_i$ as follows. The tree $T'$ already induces a \emph{partial} uniform winning strategy for Samson starting from the initial position, where $\bbD'$ is fully clocked. Next, we extend $T'$ into a tree $\tilde{T}$ such that it induces a full uniform winning strategy for Samson in $\mathrm{FS}^\Pi_k(\bbC', \bbD')$ \emph{starting from the initial position of} $T'$. 
Assume that there is a position $P = (\cF, U, \leftf, \rightf)$ in $T'$ such that for some unlabeled node $v \in U$ chosen by Delilah, a strategy for Samson is not described. By the construction of $T'$, there is a position $P_v = (\cF_v, U_v, \leftf_v, \rightf_v)$ in $T'$ where the node $v$ is labeled for the first time. Now, in the position $P$, Samson mimics the move played in $P_v$. This is possible, because $P_v$ is the position where $v$ is labeled for the first time, and thus, $\leftf(v) \subseteq \leftf_v(v)$ and $\rightf(v) \subseteq \rightf_v(v)$. It is clear that if Samson mimics the strategy played in $P_v$, then there is a position embedding to $\Gamma_i$. 
The case where $v$ is labeled is analogous, but in that case Samson mimics a move played in a position $P_v = (\cF_v, U_v, \leftf_v, \rightf_v)$ with $\leftf(v) \subseteq \leftf_v(v)$ and $\rightf(v) \subseteq \rightf_v(v)$. This process can be continued until we cannot extend $T'$ any more and as a result we obtain the tree $\tilde{T}$.

Now, $\tilde{T}$ induces a uniform winning strategy starting $\tilde{\sigma}$ from its initial position, since by the construction above there is a program $\Gamma_i$ such that for each position in $\tilde{T}$ there is a position embedding to $\Gamma_i$. By using $\tilde{\sigma}$ we can show with a similar idea as in the proof of Theorem \ref{thrm: uniform fs-game characterization} that the following condition holds in every position $P = (\cF, U, \leftf, \rightf)$ following the initial position of $\tilde{T}$:
In every node $v \in U$.
\begin{equation*}
\begin{split}
M, w \models \Lambda_{g(v)}^\ell &\text{ for each } (M, w, \ell) \in \cL, \\
N, u \not\models \Lambda_{g(v)}^\ell &\text{ for each } (N, u, \ell ) \in \cR, \\
\end{split}
\end{equation*}
where $\leftf(v) = \cL$, $\rightf(v) = \cR$ and $g$ is a position embedding w.r.t. $\Lambda$ and $P$. 
Now, since in the initial position of $\tilde{T}$ the set $\bbD'$ is fully clocked it follows that the program and $\tilde{T}$ induces a uniform winning strategy starting from its initial position w.r.t. a program $\Gamma_i$, it follows that $\Gamma_i$ separates $\bbC'$ form $\bbD'$. To see this, note that since $\bbD'$ is fully clocked in the initial position of $\tilde{T}$, every possible global configuration for every pointed model in $\bbD'$ is ``verified'' during the game starting from the initial position.

Now, for all finite subsets $\bbC' \subset \bbC$ and $\bbD' \subset \bbD$, 
there is a program $\Gamma_i$ that separates $\bbC'$ from $\bbD'$, and we say that $\Gamma_i$ is \emph{realized} from $\sigma$.
Next, we show that one of the programs that is realized from $\sigma$ must separate $\bbC$ from $\bbD$.
Let $\Lambda_1, \ldots, \Lambda_m$ enumerate all the programs that are realized from $\sigma$. Aiming for a contradiction, suppose that none of the programs $\Lambda_1, \ldots, \Lambda_m$ separates $\bbC$ from $\bbD$. Therefore, for each $\Lambda_i$, there is either a model $(A_i, w_i) \in \bbC_0$ such that $A_i, w_i \not\models \Lambda_i$ or there is a model $(B_i, w_i) \in \bbD$ such that $B_i, w_i \models \Lambda_i$.

Now, consider the finite sets $\bbC' \colonequals \bigcup_i (A_i, w_i)$ and $\bbD' \colonequals \bigcup_i (B_i, w_i)$. Samson has a winning strategy in $\mathrm{FS}^\Pi_k( \bbC', \bbD')$ by the observation above. 
Thus, there is a program $\Lambda_i$ separates $\bbC'$ from $\bbD'$, which is a contradiction. Hence, there must be a program that separates $\bbC$ from $\bbD$.
\end{proof}

From Theorem \ref{thrm: uniform fs-game characterization} and Lemma \ref{lem: finite models non-uniform equiv uniform} the following Theorem follows, which informally extends the game characterization for non-uniform strategies over finite models.

\begin{theorem}\label{thrm: finite uniform non-uniform characterization}
    Let $\bbC$ and $\bbD$ be the classes of finite pointed $\Pi$-models. The following claims are equivalent.
    \begin{enumerate}
        \item Samson has a uniform winning strategy in $\mathrm{FS}^\Pi_k(\bbC, \bbD)$.
        \item Samson has a non-uniform winning strategy in $\mathrm{FS}^\Pi_k(\bbC, \bbD)$. 
        \item There is a $\Pi$-program of $\GGMSC$ of size at most $k$ that separates $\bbC$ from $\bbD$.
    \end{enumerate}
\end{theorem}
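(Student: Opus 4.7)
The plan is to observe that this theorem is a direct corollary of the two preceding results, Theorem \ref{thrm: uniform fs-game characterization} and Lemma \ref{lem: finite models non-uniform equiv uniform}, together with a trivial observation. I would set up a cycle of implications $1 \Rightarrow 2 \Rightarrow 1$ and $1 \Leftrightarrow 3$, rather than attempt a direct three-way argument, since all pieces are already available.

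First, I would dispense with $1 \Rightarrow 2$ in one line: every uniform winning strategy for Samson is a winning strategy in the formula size game by definition, since uniformity is an additional constraint on the strategy (namely, that all positions admit a position embedding w.r.t.\ a fixed program $\Lambda$), not a different notion of winning. Hence condition 1 implies condition 2 trivially. Next, I would invoke Lemma \ref{lem: finite models non-uniform equiv uniform}, whose content is precisely $2 \Rightarrow 1$ over classes $\bbC, \bbD$ of finite pointed $\Pi$-models. This closes the equivalence $1 \Leftrightarrow 2$.

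For the equivalence $1 \Leftrightarrow 3$, I would simply apply Theorem \ref{thrm: uniform fs-game characterization}, noting that it is stated for arbitrary classes $\bbA, \bbB$ of pointed $\Pi$-models, so in particular it applies when the classes consist only of \emph{finite} pointed models. Thus Samson has a uniform winning strategy in $\mathrm{FS}^\Pi_k(\bbC, \bbD)$ iff some $\Pi$-program of $\GGMSC$ of size at most $k$ separates $\bbC$ from $\bbD$. Combining the two equivalences yields all three conditions equivalent.

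There is no real obstacle here, because the genuine content of the theorem has been established in Theorem \ref{thrm: uniform fs-game characterization} (the bridge between games and programs) and in Lemma \ref{lem: finite models non-uniform equiv uniform} (the non-trivial uniformity reduction over finite models, which uses the pruning of the game tree and the ``fully clocked'' initial positions). The role of the present statement is only to package these two facts into a single clean characterization over finite models, so the proof amounts to two citations plus the trivial direction.
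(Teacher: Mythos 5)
Your proposal is correct and matches the paper exactly: the paper derives this theorem as an immediate consequence of Theorem \ref{thrm: uniform fs-game characterization} (giving $1 \Leftrightarrow 3$) and Lemma \ref{lem: finite models non-uniform equiv uniform} (giving $1 \Leftrightarrow 2$ over finite models), with no further argument. Your explicit note that $1 \Rightarrow 2$ is trivial because uniformity is just an added constraint on a winning strategy is the same observation the paper makes inside the proof of that lemma.
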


\begin{remark}
For $\GMSC$, $\MSC$ and $\SC$ it is trivial to obtain an analogous result for both Theorem \ref{thrm: uniform fs-game characterization} and Theorem \ref{thrm: finite uniform non-uniform characterization}. The formula size game is also trivial to extend for 
asynchronous variants as follows. In asynchronous variants, clocks act as binary flags: $1$ allows iteration, $0$ disables it. Moreover, whenever an $X$-move is played in a position $(\cF, U, \leftf, \rightf)$ in a node $v \in U$, Samson chooses which flags are set to 0 in each model of $\leftf(v)$, and Delilah does the same for each model of $\rightf(v)$.
\end{remark}

Next, we recall some notions related to bisimulations.
Global counting bisimilarity is defined as follows. Given two pointed $\Pi$-models, $((W, R, V), w)$ and $((W', R', V'), w')$, we say that they are \textbf{globally counting bisimilar} if there is a relation $Z$ (called global counting bisimulation) defined as follows: 
\begin{itemize}
    \item $(w, w') \in Z$.
    \item \emph{atomic}: For all $(w, w') \in Z$, $V(w) = V'(w')$. 
    \item \emph{local forth}: If $(v, v') \in Z$, then for all $k \in \N$, for each $k$ distinct out-neighbours $v_1, \ldots, v_k \in W$ of $v$, there are $k$ distinct out-neighbours $v'_1, \ldots, v'_k \in W'$ of $v'$ such that $(v_1, v'_1), \ldots, (v_k, v'_k) \in Z$.
    \item \emph{local back}: If $(v, v') \in Z$, then for all $k \in \N$, for each $k$ distinct out-neighbour $v'_1, \ldots, v'_k \in W'$ of $v'$, there are $k$ distinct out-neighbours $v_1, \ldots, v_k \in W$ of $v$ such that $(v_1, v'_1), \ldots, (v_k, v'_k) \in Z$.
    \item \emph{global forth}: For every $k \in \N$, and for every $k$ distinct nodes $v_1, \ldots, v_k \in W$, there are $k$ distinct nodes $v'_1, \ldots, v'_k \in W'$ such that $(v_1, v'_1), \ldots, (v_k, v'_k) \in Z$.
    \item \emph{global back}: For every $k \in \N$, and for every $k$ distinct nodes $v'_1, \ldots, v'_k \in W'$, there are $k$ distinct nodes $v_1, \ldots, v_k \in W$ of $v$ such that $(v_1, v'_1), \ldots, (v_k, v'_k) \in Z$.
\end{itemize}

The corresponding bisimulation game is played between two players, $\I$ and $\II$, over two pointed $\Pi$-models $((W, R, V), w)$ and $((W', R', V'), w')$. The set of positions of the game consists of pairs $(v, v') \in W \times W'$ and the initial position of the game is $(w, w')$. In each position $(v, v')$, $\I$ can play one of the moves below:
\begin{itemize}
    \item \emph{atomic-move}: The game ends, and $\I$ wins, if $V(v) = V(v')$, otherwise $\II$ wins. 
    \item \emph{local-move}: $\I$ picks a node $u \in \{v, v'\}$ and chooses a finite non-empty subset of out-neighbours of $u$, then $\II$ responds by choosing a subset of out-neighbours of $u' = \{v, v'\} \setminus \{u\}$ of the same size. After that, $\I$ chooses a node from the set given by $\II$, and $\II$ chooses a node from the set given by $\I$, the game continues from the pair induced by these choices. 
    \item \emph{global-move}: $\I$ picks a node $u \in \{v, v'\}$ and chooses a finite non-empty subset of nodes from its model, then $\II$ responds by choosing a set of nodes from the model of $u' = \{v, v'\} \setminus \{u\}$ of the same size. After that, $\I$ chooses a node from the set given by $\II$, and $\II$ chooses a node from the set given by $\I$, the game continues from the pair induced by these choices. 
\end{itemize}
If a player cannot make a choice required by a move, then the player loses. It is not hard to show that the corresponding bisimulation game characterizes global counting bisimilarity, i.e., two pointed models are globally counting bisimilar iff the player $\II$ has a winning strategy in the corresponding bisimulation game. 
Moreover, given an $n \in \N$, we say that $(M, w)$ and $(N, u)$ are \textbf{global counting $n$-bisimilar} if $\II$ has a winning strategy in the corresponding $n$ round bisimulation game.

Now, we can prove the following lemma 
which intuitively states that Delilah has a winning strategy if there are two globally counting $n$-bisimilar pointed models (one on the left and another on the right) in the current positions, where $n$ is sufficiently large w.r.t. to the clocks of the bisimilar models and the number of diamonds and boxes available.
\begin{lemma}\label{lemma: finite-bisimilar-delilah}
    Let $P =(\cF, U, \mathrm{left}, \mathrm{right}, \mathrm{res})$ be a position in $\mathrm{FS}^\Pi_k(\bbA, \bbB)$ and let $K = \res(v)-1 + M$, where $M$ is the number of diamonds and boxes appearing in $\cF$.
    If there is a node $v \in U$ and clocked models $(A, w_A, \ell) \in \leftf(v)$ and $(B, w_B, \ell) \in \rightf(v)$ such that $(A, w_A)$ and $(B, w_B)$ are globally counting $n$-bisimilar, for some $n \geq K\ell$, then Delilah has a winning strategy from position $P$.
\end{lemma}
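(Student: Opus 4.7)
I would construct a winning strategy for Delilah directly from the bisimulation between $(A, w_A)$ and $(B, w_B)$, maintaining as an invariant throughout the play that some node $v \in U$ still carries a pair $(A, w_A, \ell) \in \leftf(v)$, $(B, w_B, \ell) \in \rightf(v)$ of globally counting $n$-bisimilar clocked models with $n \geq K\ell$ for the current values of $\res(v)$ and $M$. Delilah's overarching tactic is always to select this witnessing $v$ whenever she is asked to pick from $U$, so that she only has to react to Samson's move at that single node.

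\textbf{Responses to Samson's moves.} The Sig-move is the base case: atomic bisimilarity forces $V(w_A) = V(w_B)$, so no symbol in $\Pi \cup \{\top, \bot\}$ separates $\leftf(v)$ from $\rightf(v)$ and Samson loses. A $\neg$-move swaps $\cL$ and $\cR$, and since bisimulation is symmetric the invariant transfers unchanged. For $\lor$ and $\land$ only one side is split, so the unsplit partner stays attached to whichever child still contains $w_A$ (resp.\ $w_B$). For a $\Diamond_{\geq m}$-move I would invoke local forth of the $n$-bisimulation: from Samson's $f(A, w_A, \ell) = \{(A, w_A^i, \ell)\}_{i \le m}$ produce matching successors $w_B^1, \dots, w_B^m$ of $w_B$ such that each pair $(A, w_A^i), (B, w_B^i)$ is $(n{-}1)$-bisimilar, let Delilah publish the partial $m$-successor function $h$ with $h(B, w_B, \ell) = \{(B, w_B^i, \ell)\}_{i \le m}$, and then have her choose $\cL' = f(A, w_A, \ell)$ in full. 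Whatever nonempty $\cR_{(B, w_B, \ell)}$ Samson returns contains some $(B, w_B^j, \ell)$, yielding an $(n{-}1)$-bisimilar pair at the successor node $v'$. The $\Box_{<m}$-move is dual via local back, and the global modalities $\DiamondG_{\geq m}, \BoxG_{<m}$ are handled identically via global forth and back. For an $X$-move with $\ell \geq 1$, Delilah declines to challenge; the pair relocates to the successor as $(A, w_A, \ell{-}1), (B, w_B, \ell{-}1)$, still $n$-bisimilar, and the well-founded measure $\ell$ strictly decreases.

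\textbf{Main obstacle.} I expect the technical core of the proof to be verifying that the numerical invariant $n \geq K\ell$ survives every transition, since $n$, $K$, and $\ell$ all shift on each move. The calibration $K = \res(v) - 1 + M$ should be exactly so that a $\Diamond_{\geq m}$-move, which costs $m$ units of resource but adds only one modality to $\cF$, gives $K' \leq K - (m{-}1)$, so the single unit of bisimilarity consumed in $n \mapsto n{-}1$ is absorbed by the drop in $K$ (using $m \geq 1$). An $X$-iteration releases a full $K$ worth of bisimilarity by decrementing $\ell$, which should dominate any increase in $K$ arising from re-entering a rule body whose cost was already booked into $\res$. The most delicate subcase is an $X$-move at which the distinguished pair has $\ell = 0$: here $\mathrm{iter}(\leftf(v))$ may be empty and Samson could claim an automatic win, and the resolution should be that Delilah instead challenges, sending the game into the base rule of $X$, where the remaining play reduces to a $\GGML$-style semantic game on the bisimilar pair—a situation Delilah wins by the standard bisimulation invariance of $\GGML$.
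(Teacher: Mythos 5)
Your proposal follows essentially the same strategy as the paper's proof: Delilah always selects the node carrying the distinguished bisimilar pair, answers $\Diamond_{\geq m}$, $\Box_{<m}$ and the global moves with the forth/back clauses of global counting bisimulation (publishing a partial $m$-successor or $m$-global function built from the matching successors, so that any non-empty subset Samson returns preserves a bisimilar pair), follows the pair through $\lor$/$\land$/$\neg$ moves, declines to challenge $X$-moves while the clock is positive, and wins any Sig-move by atomic equivalence. If anything, you are more explicit than the paper on the two points its proof glosses over — the $\ell = 0$ case, where Delilah must challenge and win the base-rule play by bisimulation invariance of $\GGML$, and the numerical bookkeeping for $n \geq K\ell$ — so the proposal matches the paper's argument.
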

\begin{proof}
Delilah only needs to maintain the following invariant: in each position $P' = (\cF', U', \leftf', \rightf', \res')$ after $P$ there is a \emph{suitable} node $v'$ such that $(A, w_A, \ell') \in \leftf'(v')$ and $(B, w_B, \ell') \in \rightf'(v')$ for some $\ell' \leq \ell$. 
Assume that the invariant holds in the current position $P$, i.e., $(A, w_A, \ell) \in \leftf(v)$ and $(B, w_B, \ell) \in \rightf(v)$ and we show that the invariant can be maintained in the next position. Furthermore, we let $\leftf(v) = \cL$ and $\rightf(v) = \cR$. 
\begin{itemize}
    \item Clearly, Delilah wins if a Sig-move is played.
    \item If a $\lor$-move or a $\land$-move is played, Delilah a new suitable node is the one where both $(A, w_A)$ and $(B, w_B)$ appear. The case for $\neg\,$-moves are trivial. 
    \item If a $\Diamond_{\geq m}$-move is played, then either Samson loses (if the $(A, w_A, \ell)$ does not have successor models) or picks an $m$-successor function $f$ for $\leftf(v)$ and assume that $f(\{(A, w_A, \ell)\}) = \{(A, w_A^1, \ell), \ldots, (A, w_A^m, \ell) \}$. 
    Since $(A, w_A, \ell)$ and $(B, w_B, \ell)$ are global counting $n$-bisimilar for some $n \geq K \ell$, there is a globally counting $n$-bisimilar pointed model $(B, w_B^i, \ell)$ for each $(A, w_A^i, \ell)$ where $(B, w_B^i, \ell)$ is a successor model of $(B, w_B, \ell)$.
    Then Delilah gives a partial successor function $h$ for $\rightf(v)$ such that for $h(\{(B, w_B, \ell)\}) = \{(B, w_B^1, \ell), \ldots, (B, w_B^m, \ell)\}$. Finally, Delilah chooses a $\Diamond_f \cL$ and thus no matter which subsets Samson chooses the invariant holds in the following position. Other diamond and box moves are handled in an analogous way. The successor of $v'$ after the move is played is a new suitable node.
    \item If an $X$-move is played, then Delilah does not challenge Samson if $\ell > 0$. Thus, the invariant clearly holds in the following position. A new suitable node is the one which replaces $v'$.
\end{itemize}
\end{proof}

\begin{remark}\label{remark: bisimilar-delilah}
Counting bisimilarity is defined analogously to the global counting bisimilarity, except that the global forth and back conditions are omitted. Non-counting bisimilarity, in turn, is obtained from the counting bisimilarity by restricting $k$ to $k \leq 1$ in the local forth and local back conditions.
Lemma \ref{lemma: finite-bisimilar-delilah} trivially generalizes for $\GMSC$ and the counting bisimilarity, and $\MSC$ and the non-counting bisimilarity. 
\end{remark}

Next we demonstrate the power of the formula size game and show for a fragment of $\GMSC$ that a certain reachability property is not expressible.

We say that a program of $\GGMSC$ is in \textbf{strong positive normal form}, if it is in strong negation normal form, $\land$-symbols do not appear in induction rules and it does not contain any boxes of the form $\BoxG_{\geq k}$.
As an example, consider a variant of the centre-point property (cf. Example \ref{example: centre-point}), which is satisfied by a pointed model $(M, w)$ iff ``There exists \(n \in \N\) such that from every neighbour of \(w\) there is a walk of length \(n\) to a node that does not have any out-neighbours''. 
This property can be expressed by the following $\MSC$-program in strong positive normal form: $X (0) \colonminus \Box \bot$, $X \colonminus \Diamond X$, $Y (0) \colonminus \bot$, $Y \colonminus \Box X$, where $Y$ is an accepting predicate. 
By using a similar argument as in the proof of Proposition 6 in \cite{Kuusisto13}, it is easy to show this variant of the centre-point property is not expressible in $\MSO$. Assume the contrary, that this property is expressible by an $\MSO$-formula $\varphi$. Using $\varphi$, we can define an $\MSO$-formula $\psi$ which defines the following language $L = \{\, 0^n10^n \in \{0,1\} \mid n \in \N \,\}$.
Now, $L$ is regular since it is definable in $\MSO$, but by using the pumping lemma for regular languages, it is easy to show that $L$ is actually not a regular language, which leads to a contradiction.

\begin{remark}\label{remark: positive normal form}
    Lemma \ref{lemma: finite-bisimilar-delilah} can be generalized for $\GGMSC$-programs in positive normal form as follows.  
    Let $P =(\cF, U, \mathrm{left}, \mathrm{right}, \mathrm{res})$ be a position in the $\mathrm{FS}^\Pi_k(\bbA, \bbB)$ when restricted to programs in positive normal form. Let $K = \res(v)-1 + M$, where $M$ is the number of diamonds and boxes appearing in $\cF$.
    Below, \emph{positive global counting bisimilarity} refers to a global counting bisimilarity, where the global back condition is omitted.
    Now, if there is a node $v \in U$ and clocked models $(A, w_A, \ell) \in \leftf(v)$ and $(B, w_B, \ell) \in \rightf(v)$ such that $(A, w_A)$ and $(B, w_B)$ are \emph{positive} globally counting $n$-bisimilar for some $n \geq K\ell$ and globally counting $K$-bisimilar, then Delilah has a winning strategy from position $P$.
\end{remark}

The \textbf{prime reachability property} contains all pointed Kripke models $(M, w)$ where there is a directed path $p$ from $w$ to $u$ in $M$ such that the length of $p$ is a prime number.
The following proposition shows that the prime reachability is not definable by any program of $\GMSC$ that is in strong positive normal form.

\begin{proposition}\label{prop: formulasize game demo}
    There is no $\Pi$-program of $\GGMSC$ in strong positive normal form that defines the prime reachability.
\end{proposition}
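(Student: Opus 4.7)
The plan is to argue by contradiction: I would assume that some $\Pi$-program $\Lambda$ of $\GGMSC$ in strong positive normal form of size $k$ defines prime reachability, and then derive a contradiction by exhibiting a pair of pointed Kripke models on which $\Lambda$ would have to disagree but which the formula size game certifies as inseparable.

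First I would exhibit two pointed Kripke models that differ on prime reachability but are fully globally counting bisimilar. Take $(A, w_A)$ to be the directed $3$-cycle on vertices $\{w_A, a_1, a_2\}$ with edges $(w_A, a_1)$, $(a_1, a_2)$, $(a_2, w_A)$, and take $(B, w_B)$ to be the disjoint union of the directed $2$-cycle with edges $\{(w_B, b), (b, w_B)\}$ and an isolated self-loop $\{(c, c)\}$, pointed at $w_B$. Then $w_A, a_1, a_2$ is a directed path of length $2$ (a prime), so $(A, w_A)$ has prime reachability. In $(B, w_B)$, however, every attempted directed walk from $w_B$ of length at least $2$ would have to traverse $b \to w_B$ and therefore repeat $w_B$, and the node $c$ is unreachable from $w_B$; hence the only directed paths from $w_B$ have lengths $0$ and $1$, neither of which is prime, so $(B, w_B)$ does not have prime reachability.

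Next, since every node of either model has exactly one out-neighbour and no proposition symbols are assigned, the full relation $Z \colonequals W_A \times W_B$ is immediately a global counting bisimulation between $(A, w_A)$ and $(B, w_B)$: the atomic and local conditions are trivial, and the global forth and back conditions hold because both models have exactly three nodes and $Z$ relates all pairs. Thus $(A, w_A)$ and $(B, w_B)$ are globally counting $n$-bisimilar for every $n \in \N$. I would then invoke Lemma~\ref{lemma: finite-bisimilar-delilah} to get a winning strategy for Delilah in $\mathrm{FS}^\Pi_k(\{(A, w_A)\}, \{(B, w_B)\})$, which by Theorem~\ref{thrm: uniform fs-game characterization} rules out any $\GGMSC$-program of size at most $k$ that separates these two pointed models. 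Since $\Lambda$ accepts $(A, w_A)$ and rejects $(B, w_B)$, this is the desired contradiction.

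The main obstacle is identifying the right pair of models. The key observation is that prime reachability speaks about directed \emph{paths} (distinct nodes) rather than walks, and hence is \emph{not} invariant under global counting bisimulation; the $2$-cycle structure on the right-hand side is essential precisely because it forces the otherwise-available length-$2$ walk $w_B, b, w_B$ to repeat its starting node, thereby failing to be a directed path. With the pair in hand, the remainder is a routine application of the game-theoretic machinery developed earlier in this section.
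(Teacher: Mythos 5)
Your two models, the bisimulation between them, and the computation of which of them contains a prime-length directed path are all correct under a strictly literal reading of the paper's definition, in which a ``directed path'' must have pairwise distinct nodes and the endpoint $u$ is existentially quantified. But that reading cannot be the intended one, and your proof does not engage with the actual content of the proposition. Under your reading, prime reachability collapses to ``there is a directed path of length exactly $2$ from $w$'' (every prime-length path has a length-$2$ prefix, and $2$ is prime), and, as your example shows, this is not invariant under global counting bisimulation. Since every iteration formula $X^n$ of a $\GGMSC$-program is a $\GGML$-formula and $\GGML$ is invariant under global counting bisimulation, your argument would prove inexpressibility in \emph{full} $\GGMSC$, with the size bound $k$, the formula size game, and the strong positive normal form hypothesis playing no role whatsoever. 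The paper states immediately after the proposition that inexpressibility in full $\GGMSC$ ``seems non-trivial,'' and its own proof places the composite-length path models $P_h$ (which certainly contain directed paths of length $2$) in the rejected class $\bbB$. Both facts show that the intended property is the bisimulation-invariant one: reaching the dead-end of a path-shaped model in a prime number of steps, so that $(P_t, w_t)$ has the property precisely when $t$ is prime, rather than the mere existence of some prime-length path.

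Under the intended reading your approach fails at the first step: neither the $3$-cycle nor the $2$-cycle-plus-loop has a dead-end, so both models lie outside the property; and, more fundamentally, no pair of globally counting bisimilar models can separate a bisimulation-invariant property, so no argument of this shape can work. This is exactly why the paper's proof is long: it must exploit the prime/composite gap together with the syntactic restrictions, taking $\bbA = \{(P_{p_k}, w_{p_k})\}$ and $\bbB = \{\,(P_h, w_h) \mid h \in H\,\}$ with $H$ a set of composites of the form $p_k + d\ell_{d,n,k}$, and equipping Delilah with a strategy that tracks relevant models, follows cycles in the partial syntax forest, and pumps clocks until the positive-normal-form version of Lemma \ref{lemma: finite-bisimilar-delilah} (Remark \ref{remark: positive normal form}) applies. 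A further, more minor point: even on your own terms, Lemma \ref{lemma: finite-bisimilar-delilah} requires the two bisimilar clocked models to carry the \emph{same} clock at some reachable position; this is arrangeable because Delilah chooses $\cB$ only after Samson fixes the clocks, but your write-up skips this step --- and in fact, under your reading, the game machinery is unnecessary altogether, since bisimulation invariance of $\GGMSC$ would yield the conclusion directly.
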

\begin{proof}
    Assume, for the sake of contradiction, that there is a $\Pi$-program $\Lambda$ of $\GGMSC$ in strong positive normal form that defines prime reachability. Let $\mathscr{P}$ denote the set of prime numbers. For each $n \in \N$, we let $p_n$ denote the $n$th prime number.

    Before we start describing a strategy for Delilah, we introduce some auxiliary notions and notations.
    Let $k$ be the size of $\Lambda$. 
    For each $d, n \in [k]$ such that $d + n \leq k$, we let $\ell_{d,n,k} \in \N$ denote the smallest number of the form $n \ell$ such that $p_k + d\ell_{d,n,k} \in \N \setminus \primes$.
    Now, let 
    \[
    H = \{\, h \in \N \setminus \primes \mid h = p_k + d\ell_{d,n,k}, d,n \in [k], d + n \leq k \,\}
    \]
    A pointed model $(W, R, V)$ is \emph{path-shaped} if $R$ forms a directed path over $W$. We let $P_t = (W, R, V)$ denote the path-shaped model where $W = \{w_t, \ldots, w_1\}$ and $R$ forms the directed path $w_t, \ldots, w_1$.

    We define $\bbA = \{ ( P_{p_k}, w_{p_k} ) \}$ and $\bbB = \{\, (P_h, w_h) \mid h \in H \,\}$. 
    Intuitively, $d$ corresponds to the modal depth of a partial subprogram of $\Lambda$ and $n$ corresponds to the number of the variables in that partial subprogram.

    Let $P = (\cF, U, \leftf, \rightf)$ be a position in a formula size game. 
    Given a node $v$ from $\cF$, we let $\mathrm{reach}(\cF,v)$ denote all the reachable nodes from $v$.\footnote{In a forest, a node $u$ is reachable from a node $v$, if there is a directed path from $v$ to $u$ through edges or back edges.}
    We define
    \[
    \leftf(\cF, v) \colonequals \bigcup_{u \in \mathrm{reach}(\cF, v)} \leftf(u), \qquad \rightf(\cF, v) \colonequals \bigcup_{u \in \mathrm{reach}(\cF, v)} \rightf(u).
    \]
    We say that $(A, w, \ell) \in \leftf(v) \cup \leftf(\cF, v)$ is \emph{left-relevant (w.r.t. $v$)} and resp. we say that $(A, w, \ell) \in \rightf(v) \cup \rightf(\cF, v)$ is \emph{right-relevant (w.r.t. $v$)} if $\ell$ is the lowest clock of the model $(A, w)$ in $\leftf(v) \cup \leftf(\cF, v)$ and $\rightf(v) \cup \rightf(\cF, v)$, respectively. A (directed) \emph{cycle} in a forest is a walk starting from a node and ending to the same node through edges or back edges.

    Now, we may describe a winning strategy for Delilah in $\mathrm{FS}^\Pi_k(\bbA, \bbB)$ and then we conclude that there is no program of $\GGMSC$ in positive normal form which defines the prime reachability.

    Before we formally describe an invariant that Delilah tries to maintain during the game $\mathrm{FS}^\Pi_k(\bbA, \bbB)$, we describe the invariant informally.
    Intuitively, in the beginning of the game Delilah uses the set $H$ to choose suitable clocks for the models in $\bbB$ such that she can direct the game into a position such that there bisimilar clocked model on the ``left'' and ``right''. During the game Delilah follows a left-relevant model and retains all the clocked models on the ``right'' before a cycle is constructed within the partial subprogram of the game. After a cycle is constructed Delilah uses the cycle to pump one of the clocked models on the right such that she end ups to a position where there are bisimilar clocked models on the left and right with the same clocks. There are also some (easy) pathological cases that Delilah has to take into account, e.g., if Samson chooses in the beginning of the game a ``bad clock'' for the model in $\bbA$.

    Next, with this intuition, we formally define the invariant that Delilah tries to maintain the game in each position $P = (\cF, U, \leftf, \rightf)$. There exists a node $v \in U$ such that the following holds. 
    Below, we let $(P_{p_k}, w_{j}, \ell)$ denote a left-relevant model in node $v'$ w.r.t. $v$, and we let $\leftf(v) = \cL$ and $\rightf(v) = \cR$.
    %
    \begin{enumerate}
        \item If $v = v'$, then one of the below holds.
        \begin{enumerate}
            \item For all $h \in H$, there is a clocked model $(P_h, w_i, \ell') \in \cR$ such that $h - i = p_k - j$, $j > 1$ and $\ell' = \ell + \ell_{n,d,k}$ for some $d, n \in [k]$.
            \item There is a clocked model $(P_h, w_j, \ell)$ in $\cR$.
        \end{enumerate}
        \item If $v' \neq v$, then $v'$ is reachable from $v$ with a walk that consists of $d$ diamonds or boxes and $n$ variables, and there is a clocked model $(P_h, w_{j+d}, \ell+n)$ in $\cR$.    
    \end{enumerate}
    If there \emph{does not exist} left-relevant model for any node in $U$, then following holds. There is a node $v$ and a clocked model $(P_h, w_i, \ell') \in \cR$ in a cycle such that there is a walk $\bw$ starting from $v$ such that one of the following holds. 
    \begin{enumerate}[resume]
        \item The walk $\bw$ consists of at most $\ell'$ variables and at least one diamond, 
        \item or $\bw$ consists of $i+1$ boxes and fewer than $\ell'$ variables.
    \end{enumerate}
    
    Now, we can describe a winning strategy for Delilah in $\mathrm{FS}^{\Pi}_k( \bbA, \bbB)$.
    \begin{itemize}
        \item In the beginning of the game Delilah chooses $\bbA$ and then Samson chooses a clock $\ell$ for the model $(P_{p_k}, w_{p_k})$. If $\ell k \leq p_k$, then Delilah wins by choosing $(P_{p_k +1}, w_{p_k+1}, \ell)$ by Lemma \ref{lemma: finite-bisimilar-delilah}. 
        Thus, we suppose that $\ell k > p_k$, and in that case Delilah chooses a set $\cR$ of clocked models which consists of for each $h \in H$, a clocked model $(P_h, w_h, \ell + \ell_{d,n,k})$, where $h = p_k + d \ell_{d,n,k}$ for some $d, n \in [k]$.
        Clearly, in the starting position of the game Condition $1(a)$ of the invariant holds. 
        \item Now, we describe a winning strategy for Delilah in every position $(\cF, U, \leftf, \rightf)$ of the game $\mathrm{FS}^\Pi_k(\bbA, \bbB)$. In each position Delilah chooses a \emph{relevant} node $v \in U$, i.e., a node which satisfies the invariant described above. It does not matter which relevant node Delilah chooses in each position. After the starting position there is relevant node that contains a clocked model for a model in $\bbA$. We assume that Condition $1(a)$ or $2$ of the invariant holds in the current position, and show that Condition $1(b)$, $3$, or $4$ will eventually hold, and then conclude how Delilah can win if one of those conditions holds.
        \begin{itemize}
            \item Assume that Samson plays a Sig-move. Clearly, Condition $2$ cannot hold in that situation so Condition $1$ must hold and in that case Delilah clearly wins. 
            \item Assume that Samson plays a $\lor$-move. 
            \begin{enumerate}
                \item If the first condition of the invariant holds, then there are two cases: If $1 (a)$ holds and $\cF$ does not contain any cycle, then Delilah follows the left-relevant model. If $1 (a)$ holds and $v$ is in a cycle, then Delilah chooses a disjunct which stays in a cycle. Note that if Samson moves all the left-relevant models out of the cycle, then Condition $2$ holds instead of Condition $1(a)$ after the move is finished. The reason is that $\cR$ consists of a clocked model for each possible cycle that Samson could construct such that at least one of the clocked models in $\cR$ satisfies Condition $2$. 
                \item If Condition $2$ of the invariant holds, then Delilah chooses a disjunct which follows the walk to $v'$ which satisfies Condition $2$. Note that after the move is played Condition $1(b)$ might hold.
            \end{enumerate}
            After a $\lor$-move is played at least one of the disjuncts is a relevant node.
            \item Assume that Samson plays an $X$-move.
            \begin{enumerate}
                \item If Condition $1(a)$ of the invariant holds, then Delilah does not challenge Samson.
                \item If Condition $2$ of the invariant holds, then Delilah does not challenge Samson and after that Condition $2$ must hold.
            \end{enumerate}
            Note that after an $X$-move is played there is a new relevant node which is the successor $v$.
            \item Assume that Samson plays a $\Diamond_{\geq m}$-move (resp. a $\Box_{< m}$-move), Delilah moves every model (that she can) towards the dead-end. 
            Note that Delilah might win the game if Condition $2$ holds and the $\cL$ is empty. Note that after a box-move is played, then either Condition $3$ or $4$ might hold.
            \item 
            Assume that Samson plays a $\DiamondG_{\geq m}$-move and gives an $m$-global function $g$ over $\cL$. First, suppose that Condition $1(a)$ holds, and let $(P_{p_k}, w_j, \ell)$ be a left-relevant model in the node $v$, and thus there exists $(P_h, w_i, \ell') \in \cR$ that satisfies Condition $1(a)$. Then according to the set $g(P_{p_k}, w_j, \ell)$, Delilah gives a partial $m$-global function $f$ such that $f(P_h, w_i, \ell')$ contains for every clocked model $(P_{p_k}, w_{j + t}, \ell) \in g(P_{p_k}, w_j, \ell)$ a clocked model $(P_h, w_{i+t}, \ell') \in f(P_h, w_i, \ell')$, where $t \in \{-j, \ldots, p_{p_k}-j\}$. If Condition $2$ holds, then Delilah gives a partial function that maps each clocked model $(P_h, w_j, \ell) \in \cR$ to a set of clocked models that contains $(P_h, w_j, \ell)$. 
        \end{itemize}
    \end{itemize}
    Therefore, either Condition $1(b)$, $3$ or $4$ will eventually hold in the current position of the game.
    If Condition $1(b)$ holds, then by remark \ref{remark: positive normal form} Delilah has a winning strategy from that position. If Condition $3$ or $4$ holds, then Samson will eventually lose if Delilah follows the walk that satisfies Condition $3$ or $4$. Now, clearly since Delilah has a winning strategy in $\mathrm{FS}^\Pi_k(\cL_0, \cR_0)$ she also has a winning strategy in $\mathrm{FS}^\Pi_k(\bbP, \overline{\bbP})$, where $\bbP$ is the class of pointed $\Pi$-models that have the prime reachability property and $\overline{\bbP}$ is the class of pointed $\Pi$-models that do not have this property. Thus, the prime reachability property is not definable by any program of $\GGMSC$ in strong positive normal form.
\end{proof}

We believe that the prime reachability property is also not expressible in the full $\GGMSC$, but this seems non-trivial to proof.

\section{Model checking, satisfiability and expressivity results}

In this section, we study the expressive power of variants of the logic $\MSC$ and the combined and data complexity of the model checking problem for these logics. First, we link the asynchronous variants to the corresponding variants of modal computation logic $\MCL$ and the modal $\mu$-calculus (cf. Theorem \ref{thrm: AMSC=MCL=mu-fragment}). Then we show that, over words, $\MSC$ (as well as $\GMSC$ and $\GGMSC$) has the same expressive power as deterministic tape-bounded Turing machines (cf. Theorem \ref{thrm: lbas equiv 2-way GMSC}). 
By applying these expressivity results, we obtain the following under the standard semantics (resp. under the asynchronous semantics): both the combined and the data complexity of the model checking problem for $\GGMSC$, $\GMSC$ and $\MSC$ are PSPACE-complete (resp. PTIME-complete), see the corresponding Theorem \ref{thrm: pspace model checking} (resp. Theorem \ref{thrm: ptime model checking}).
For the logic $\SC$, in Theorem \ref{thrm: SC PSPACE}, we show that under the standard semantics (resp. under the asynchronous semantics) the combined complexity of the model checking problem for $\SC$ is PSPACE-complete (resp. PTIME-complete); see Theorem \ref{thrm: SC PSPACE} (resp. Theorem \ref{thrm: SC async PTIME}). As a corollary, we obtain that the satisfiability problem for $\SC$ is PSPACE-complete under the standard semantics and NP-complete for the asynchronous variant of $\SC$.

Lastly, we show that the logic $\MSC$ is ``universal'' in the following sense. For any recursively enumerable problem there is a reduction to the model checking problem for $\MSC$ when we pad the input, see Theorem \ref{thrm: computable reduction}. In particular, Theorem \ref{thrm: time complexity reduction} shows that if a given Turing machine uses $\ordo(f)$ space, then there is a reduction to the model checking problem for $\MSC$ which uses $\ordo(f)$ amount of time.

\subsection{Asynchronous variants meets computation logic and mu-calculus}

In this section, we conclude that asynchronous  $\GGMSC^A$ has the same expressive power as \textbf{global graded modal computation logic} (or GGMCL) and \textbf{the mu-fragment of the global graded modal mu-calculus} (or $\mathrm{GGML}^\mu_1$). 
We also obtain an analogous result for $\GMSC^A$ and $\MSC^A$.  

First, we introduce the global graded modal mu-calculus and its fragments, and then global graded modal computation logic.
The global graded modal $\mu$-calculus extends $\GGML$ with the least and greatest fixed-point operators $\mu X$ and $\nu X$, where $X$ is a variable. The semantics for these operators are the same as for the modal $\mu$-calculus; see \cite{DBLP:books/el/07/BBW2007} for more details.
The logic $\GGML^\mu_1$ is the fragment of the global graded modal mu-calculus that does not allow $\nu$-operators and negations only at the atomic level. Analogously, we define $\GML^\mu_1$ and $\ML^\mu_1$. 
Modal computation logic (or $\MCL$) was introduced in \cite{HELLA2022104882} and is a fragment of static computation logic (or SCL) that was recently studied in \cite{jaakkola2022firstorderlogicselfreference}. Furthermore, SCL is a fragment of computation logic that was introduced in \cite{Kuusisto_2014} and shown to be Turing-complete. Here, we also consider more general variants of $\MCL$.

Let $\mathrm{LAB} = \{\, L_i \mid i \in \N\,\}$ be a set of \textbf{label symbols} and $\mathrm{CS} = \{\, C_{L_i} \mid L_i \in \mathrm{LAB}\,\}$ the set of \textbf{claim symbols}.
The set of $\Pi$-formulae of \(\GGMCL\) is given by the grammar
\[
\varphi ::= \bot \mid \top \mid p \mid C_L \mid \neg \varphi \mid \varphi \lor \varphi \mid \varphi \land \varphi \mid \Diamond_{\geq k} \varphi \mid \Box_{< k} \varphi \mid \DiamondG_{\geq k} \varphi \mid \BoxG_{< k} \varphi \mid L\varphi,
\]
where \(p \in \Pi\) is a proposition symbol, $k \in \Z_+$, \(C_L\) is a claim symbol and \(L\) is a label.

The \textbf{reference formula} of \(C_L\) in a formula \(\varphi\) of \(\GGMCL\), denoted by \(\Rf_\varphi(C_L)\), is the unique (if it exists) subformula occurrence \(L \psi\) of \(\varphi\) such that there is a directed path from \(L\psi\) to \(C_L\) in the syntax tree of \(\varphi\), and \(L\) does not occur strictly between \(L\psi\) and \(C_L\) on that path.

Let \(\varphi\) be a $\Pi$-formula of \(\GGMCL\) and \((M,w)\) a pointed $\Pi$-model, where \(M = (W,R,V)\). We define the \textbf{unbounded evaluation game} \(\cG_\infty(M,w,\varphi)\) as follows. The game has two players, Abelard and Eloise. A \textbf{position} of the game is tuple \((\bV, v, \psi )\), where $\bV \in \{\El, \Ab\}$ is the current verifier (resp. $ \{\El, \Ab\}$ is the current falsifier), $\psi$ is a subformula of $\varphi$ and \(v \in W\).
    
The game begins from the \textbf{initial position} \((\El, w, \varphi)\) and it is then played according to the following \textbf{rules}.
\begin{itemize}
        \item In a position \((\bV, v, \chi)\), where \(\chi \in \{\bot, \top\}\), the play of the game ends and the current verifier wins if \(\chi\) is $\top$. Otherwise the falsifier wins.
        \item In a position \((\bV, v, p)\), where \(p\) is a propositional symbol, the play of the game ends and the current verifier wins if \(v \in V(p)\). Otherwise the falsifier wins.
        \item In a position \((\bV, v, \neg \psi)\), the game continues from the position \((\bV', v, \psi)\), where $\bV' = \{\El, \Ab\}\setminus \{\bV\}$. 
        \item In a position \((\bV, v, \psi \lor \theta)\), the current verifier chooses whether the game continues from the position \((\bV, v, \psi)\) or \((\bV, v, \theta)\).
        \item In a position \((\bV, v, \psi \land \theta)\), the current falsifier chooses whether the game continues from the position \((\bV, v, \psi)\) or \((\bV, v, \theta)\).
        \item In a position \((\bV, v, \Diamond_{\geq k} \psi)\), the current verifier chooses $k$ distinct nodes \(u_1, \ldots, u_k\) such that \((u_i,v) \in R\) for all $i \in [k]$. Then the current falsifier chooses an $i \in [k]$ and the game continues from the position \((\bV, u_i, \psi)\).
        \item In a position \((\bV, v, \Box_{< k} \psi)\), the game continues similarly as in \((\bV, v, \Diamond_{\geq k} \psi)\) but the roles of the verifier and the falsifier are switched.
        \item In a position \((\bV, v, \DiamondG_{\geq k} \psi)\), the current verifier chooses $k$ distinct nodes \(u_1, \ldots, u_k\) from $W$. Then the current falsifier chooses an $i \in [k]$ and the game continues from the position \((\bV, u_i, \psi)\).
        \item In a position \((\bV, v, \BoxG_{< k} \psi)\), the game continues similarly as in \((\bV, v, \DiamondG_{\geq k} \psi)\) but the roles of the verifier and the falsifier are switched.
        \item In a position \((\bV, v, C_L)\). If \(\Rf_\varphi(C_L)\) exists, then the next position is \((\bV, v, \Rf_\varphi(C_L))\). Otherwise the game stops and neither player wins.
        \item If the position is \((\bV, v, L\psi)\), then the next position is \((\bV, v, \psi)\).
\end{itemize}
If the Verifier has a winning strategy in \(\cG_\infty(M,v,\varphi)\), then we write \(M,w \Vdash_\infty \varphi\) and say that $\varphi$ \textbf{satisfies} $(M, w)$.

Analogously, we obtain \textbf{graded modal computation logic} ($\GMCL$) by excluding all the global diamonds $\DiamondG_{\geq k}$ and global boxes $\BoxG_{\geq k}$ from the syntax of $\GGMCL$. Respectively, the formulae of \textbf{modal computation logic} (or $\MCL$) are the formulae of $\GMCL$, which may contain only diamonds of the type $\Diamond$ or boxes of the type $\Box$.

Next, we define some notions on the expressive power.
Let $L_1$ and $L_2$ be logics. The notation $L_1 \leq L_2$ means that \textbf{$L_1$ is contained in $L_2$}, i.e., for each formula $\varphi_1$ in $L_1$, there is a formula $\varphi_2$ in $L_2$ such that $\varphi_2$ accepts (or satisfies) precisely the same Kripke models as $\varphi_1$. Moreover, $L_1 \equiv L_2$ means that $L_1$ and $L_2$ \textbf{have the same expressive power}, i.e., $L_1 \leq L_2$ and $L_2 \leq L_1$ holds. 
Lastly, $L_1 \leq^{\mathrm{fin}} L_2$ or $L_1 \equiv^{\mathrm{fin}} L_2$ means that, over finite Kripke models, $L_1$ is contained in $L_2$ or $L_1$ and $L_2$ have the same expressive power, respectively.

Now, it is easy to obtain the following theorem from the previous results in \cite{Kuusisto13} and \cite{jaakkola2022firstorderlogicselfreference}.
Below, we let $* \in \{\epsilon, \mathrm{G}, \mathrm{GG}\}$, where $\epsilon$ denotes the empty string.
\begin{theorem}\label{thrm: AMSC=MCL=mu-fragment}
    The following holds: $*\MCL \equiv *\MSC^A$ and $*\MCL \equiv *\ML^\mu_1$.
    Moreover, over finite Kripke models, we have that
    \[
    *\ML^\mu_1 \equiv^{\mathrm{fin}} *\MSC^A \equiv^{\mathrm{fin}} *\MCL \leq^{\mathrm{fin}} *\MSC.
    \]
\end{theorem}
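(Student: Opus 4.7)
The plan is to derive all claims by adapting translations already established in the literature for the non-graded case ($\ast = \epsilon$). Specifically, $\MCL \equiv \MSC^A$ follows from the correspondence observed in \cite{Kuusisto13} and $\MCL \equiv \ML^\mu_1$ from the results of \cite{jaakkola2022firstorderlogicselfreference}. The first main task is to confirm that the translations in these papers extend without complication to the graded ($\ast = \mathrm{G}$) and global graded ($\ast = \mathrm{GG}$) settings. Since the counting modalities $\Diamond_{\geq k}, \Box_{<k}$ and their global counterparts are primitive and have identical game rules in the asynchronous semantic game (Section~3.3) and in the $\MCL$-evaluation game $\cG_\infty$, only routine syntactic bookkeeping is required to handle them.

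For $\ast\MCL \equiv \ast\MSC^A$, the translation I would use associates each labeled subformula $L\psi$ of a $\ast\MCL$-formula with a fresh head predicate $X_L$ of a $\ast\MSC^A$-program, whose base rule body is $\bot$ and whose induction rule body is the translation of $\psi$ (with each claim symbol $C_L$ replaced by the corresponding head predicate). Conversely, a head predicate $X$ with rules $X(0) \colonminus \varphi_X$ and $X \colonminus \psi_X$ is sent to the $\MCL$-subformula $L_X(\varphi_X \lor \hat{\psi}_X)$, where $\hat{\psi}_X$ replaces inner variable occurrences $Y$ by claim symbols $C_{L_Y}$ referring to outer labels. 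Correctness is verified by aligning the two game semantics: the verifier's binary choice between the base and induction rule bodies at a variable corresponds to the resolution of the outer disjunction in $\MCL$, while a play that iterates the induction rule forever avoids the unsatisfiable $\bot$-branch and is a loss for the verifier in both game semantics (since neither game rewards non-terminating plays).

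For $\ast\MCL \equiv \ast\ML^\mu_1$, the natural correspondence maps each labeled subformula $L\psi$ to $\mu X.\psi'$ where $\psi'$ replaces each claim symbol $C_L$ by $X$, and conversely. Semantic equivalence is verified via the game semantics for $\ast\ML^\mu_1$ from \cite{HELLA2022104882, arksiivi, kandalffi}, which coincides with $\cG_\infty$ modulo the winning convention on infinite plays; since $\ast\ML^\mu_1$ excludes greatest fixed-points and allows negation only at the atomic level, infinite plays are losses for the verifier in both games. The finite-model equivalences $\ast\ML^\mu_1 \equiv^{\mathrm{fin}} \ast\MSC^A \equiv^{\mathrm{fin}} \ast\MCL$ then drop out of the two global equivalences, and the final containment $\ast\MCL \leq^{\mathrm{fin}} \ast\MSC$ follows by combining $\ast\MCL \equiv \ast\ML^\mu_1$ with the fact (from \cite{Kuusisto13} for the non-graded case) that $\ast\ML^\mu_1 \leq \ast\MSC$ via the translation of $\mu X.\psi$ into a head predicate $X$ with base rule body $\bot$ and induction rule body $\psi$, which transfers verbatim to the graded and global graded settings. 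The only step requiring genuine care is the consistent matching of winning conventions for infinite plays across the three game frameworks; but in each of the games considered here the verifier loses any non-terminating play, so the correspondences go through directly.
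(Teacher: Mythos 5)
There is a genuine gap in your proposal, and it lies exactly in the direction $*\MSC^A \leq *\MCL$. Your translation sends a head predicate $X$ with rules $X(0) \colonminus \varphi_X$, $X \colonminus \psi_X$ to $L_X(\varphi_X \lor \hat{\psi}_X)$, replacing inner occurrences of variables $Y$ by claim symbols $C_{L_Y}$ ``referring to outer labels.'' This is precisely the na\"ive translation that the paper shows to be unsound: a claim symbol $C_{L_Y}$ has a reference formula only if the labelled subformula $L_Y\theta$ lies \emph{above} $C_{L_Y}$ in the syntax tree (there must be a directed path from $L_Y\theta$ to $C_{L_Y}$), and with mutual recursion this cannot always be arranged. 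The paper's counterexample is the program $\Gamma$ with rules $X(0) \colonminus \varphi_X$, $X \colonminus Y \land Z$; $Y(0) \colonminus \varphi_Y$, $Y \colonminus \Diamond X$; $Z(0) \colonminus \varphi_Z$, $Z \colonminus X \lor Y$. Unfolding from the accepting predicate $X$ yields $L_X(\varphi_X \lor (L_Y(\varphi_Y \lor \Diamond C_{L_X}) \land L_Z(\varphi_Z \lor C_{L_X} \lor C_{L_Y})))$, in which $L_Y$ sits in the left conjunct while $C_{L_Y}$ sits in the right conjunct; hence $C_{L_Y}$ has no reference formula, the evaluation game halts there with neither player winning, and the formula is not equivalent to $\Gamma$. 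The paper repairs this with a non-trivial construction: label symbols are indexed by strings of predicates ($X_\bs$ with $\bs \in \cT^*$), and whenever the needed label is not an ancestor of the current position, a \emph{fresh copy} of that predicate's rule is re-expanded under a new string-indexed label rather than referenced by a claim symbol; termination of this duplication process then has to be argued. Your phrase ``only routine syntactic bookkeeping is required'' passes over exactly the step where all the work of the paper's proof is concentrated.

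Two smaller points. First, your translation $*\MCL \to *\ML^\mu_1$ (replace $L\psi$ by $\mu C_L.\psi'$) is only well defined after the $\MCL$-formula has been brought into strong negation normal form, since $*\ML^\mu_1$ permits negation only at the atomic level and fixed-point variables must occur positively; the paper invokes Lemma 5.1 of the cited work for this normalization, which your proposal omits. Second, your claim that an infinite play ``is a loss for the verifier in both game semantics'' is imprecise: in both the asynchronous game and $\cG_\infty$, infinite plays are won by \emph{neither} player. This happens to be harmless for the equivalence of ``Eloise has a winning strategy,'' but it should be stated correctly, since it is not the same as the falsifier winning. The remaining directions of your proposal ($*\ML^\mu_1 \leq *\MSC^A$ and $*\ML^\mu_1 \leq *\MSC$ via base rule $\bot$, and the finite-model consequences) match the paper's route and are fine.
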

\begin{proof}
    With a proof analogous to the proof of Proposition $7$ in \cite{Kuusisto13} it follows that $*\ML^\mu_1 \leq^{\mathrm{fin}} *\MSC^A$ and $*\ML^\mu_1 \leq^{\mathrm{fin}} *\MSC$. 
    A formula $\varphi$ of $*\MCL$ is in \emph{strong negation normal form} if the only negated subformulae of $\varphi$ are atomic FO-formulas.
    From the proof of Lemma 5.1 in \cite{jaakkola2022firstorderlogicselfreference} it follows that each formula of $*\MCL$ can be translated into a formula of $*\MCL$ that is in strong negation normal form. Formulas of $*\MCL$ in strong negation normal form can be translated in a straightforward way into $*\ML^\mu_1$. Indeed, by viewing claim symbols as second-order variables, we can translate a given formula \(\varphi\) of \(*\MCL\) into an equivalent formula of \(*\ML^\mu_1\) simply by replacing each label symbol \(L\) with \(\mu C_L\). 
    
    Finally, we show that each $(\Pi, \cT)$-program $\Lambda$ of $*\MSC^A$ translates into a $\Pi$-formula of $*\MCL$. 
    At first look, this might look like a straightforward translation by viewing schema variables as label symbols and identifying each pair of rules $X(0) \colonminus \varphi$, $X \colonminus \psi$ in $\Lambda$ as the corresponding $*\MCL$-formula $X(\varphi \lor \psi)$. Then by starting from the disjunction consisting of the corresponding $*\MCL$-formulae of accepting predicates, one could recursively substitute label symbols with the corresponding $*\MCL$-formula. 
    If during this recursive process a label symbol $X$ appears twice in the formula, we replace the most recently added label symbol $X$ with the claim symbol $C_X$. However, this is a naïve translation, as we might obtain a formula in which a claim symbol does not have a reference formula! 
    
    For example, consider the program $\Gamma$ below
    \[
    \begin{aligned}
        &X (0) \colonminus \varphi_X &X \colonminus Y \land Z \\
        &Y (0) \colonminus \varphi_Y &X \colonminus \Diamond X \\
        &Z (0) \colonminus \varphi_Z &Z \colonminus X \lor Y,
    \end{aligned}
    \]
    where $X$ is an accepting predicate.
    By using the naïve translation described above, the resulting formula is $\theta_\Gamma \colonequals X( \varphi_X \lor ( Y (\varphi_Y \lor \Diamond C_X) \land Z ( \varphi_Z \lor C_X \lor C_Y)) ) $. Now, $\theta_\Gamma$ is not equivalent to $\Gamma$, since $C_Y$ does not have a reference formula.
    However, the formula 
    \[
    X( \varphi_X \lor ( Y (\varphi_Y \lor \Diamond C_X) \land Z ( \varphi_Z \lor C_X \lor Y_{Z} ( \varphi_Y \lor \Diamond X))) ),
    \]
    which is obtained from $\theta_\Gamma$ by replacing $C_Y$ with $Y_Z(\varphi_Y \lor \Diamond X)$, is equivalent to $\Gamma$.

    Now, we give a proper translation, inspired by the example above. If $\Lambda$ does not consist of any accepting predicates, then the corresponding $*\MCL$-formula is $\bot$. 
    If the set of accepting predicates $\cA$ of $\Lambda$ is non-empty, we recursively construct the corresponding formula $\varphi_\Lambda$ of $*\MCL$ as follows. During the construction, we will use label symbols of the form $\{\, X_\bs \mid \bs \in \cT^*, X \in \cT\,\}$, where $\cT^*$ denotes the set of finite strings over $\cT$, to make sure that claim symbols refer to a correct formula, and that we do not end up in a situation where a claim symbol does not have a reference formula. Moreover, before the construction is ready we will use label symbols as atomic subformulae, and the construction is ready when there are no label symbols appearing as an atomic subformula. In other words, before we obtain the final formula, the formula under the construction can be seen as a ``pseudoformula''. In the beginning, the formula $\varphi_\Lambda$ is $\bigvee_{X \in \cA} X_\epsilon$, where $\epsilon$ is the empty string. 
    Then we recursively update the formula $\varphi_\Lambda$ as follows. If there is a label symbol $X_\bs$ as an atom, we replace it with the formula $X_\bs( \varphi_X \lor \psi_{X_\bs})$, where $\varphi_X$ is the base rule of $X$ and $\psi_{X_\bs}$ is obtained by replacing each predicate $Y$ appearing in the induction rule of $X$---depending on the current formula $\varphi_\Lambda$---as follows:
    \begin{enumerate}
        \item If there is no label symbol $Y_\bt$ of the form in the syntax tree of $\varphi_\Lambda$, then we replace $Y$ with the claim symbol $Y_\epsilon$, where $\epsilon$ is the empty string.
        \item If there is a label symbol of the form $Y_\bt$ in the syntax tree of $\varphi_\Lambda$ and there is no path from $X_\bs$ to $Y_\bt$, then we replace $Y$ with $Y_{\bs X}$.
        \item If none of the above holds, there must be the closest label symbol of the form $Y_\bt$ reachable from $X_\bs$ in the syntax tree of $\varphi_\Lambda$. In that case, we replace $Y$ with the claim symbol $C_{Y_\bt}$.
    \end{enumerate}
    By the last condition, this recursive construction will terminate after a finite number of steps.
    
    Lastly, it is easy to show that the formula $\varphi_\Lambda$ obtained from the construction is equivalent to $\Lambda$. First of all, note that the semantics of $*\MCL$ and $*\MSC^A$ are almost identical. Secondly, it is easy to construct a winning strategy for Eloise, in $\cG_\infty(M, v, \varphi_\Lambda)$ if Eloise has a winning strategy in $\cA\cG(M, w, \Lambda)$, and vice versa, by simply interpreting each claim symbol $C_{(X, \bs)}$ as the predicate $X$ (and label symbols are trivial to handle). Then by a routine induction one can show that $\Lambda$ and $\varphi_\Lambda$ are equivalent.
\end{proof}

\subsection{Relating linear tape-bounded Turing machines and MSC}

In this section, we give a logical characterization of deterministic linear tape-bounded Turing machines via $\MSC$, $\GMSC$ and $\GGMSC$.
Informally, linear tape-bounded Turing machines are Turing machines in which the length of the tape is linear in the length of a given input string.

\subsubsection{Turing machines and tape-bounded Turing machines}

We begin by defining Turing machines and linear tape-bounded Turing machines.
First, we fix some constant symbols.
We let $\br$, $\bl$ and $\bs$ denote the \textbf{direction symbols} for ``right'', ``left'', and ``stay'', respectively.

A deterministic \textbf{Turing machine} (or $\mathrm{TM}$) is a tuple 
\[
T = (Q, \Gamma, \blacksquare, E_\bl, \Sigma, q_0, \delta, A, R),
\]
where
%
%
    %
    $Q$ is a set of \textbf{states},
    %
    $\Gamma$ is a \textbf{tape alphabet},
    %
    %
    $\blacksquare \in \Gamma$ is the \textbf{blank symbol}, 
    $E_\bl$ is the \textbf{left end marker},
    %
    %
    $\Sigma \subseteq \Gamma \setminus \{\blacksquare, E_\bl\}$ is an \textbf{input alphabet},
    %
    $q_0$ is an \textbf{initial state},
    %
    $A \subseteq Q$ is a set of (halting) \textbf{accepting states},
    $R \subseteq Q$ is a set of (halting) \textbf{rejecting states} such that $A \cap R = \emptyset$. 
%
Finally, 
\[
\delta \colon Q \times \Gamma \to Q \times \Gamma \times \{ \bl, \bs, \br\}
\]
is a \textbf{transition function} which is restricted such that it cannot print other symbols over the left end marker and cannot move on the left of the left end marker. In other words, for every $q \in Q$, we require that
$\delta(q, E_\bl) \in Q \times \{E_\bl\} \times \{\bs, \br\}$. The set $A \cup R$ forms the set of \textbf{halting states} of $T$.

Next, we define how $\TM$s compute over strings.
A \textbf{configuration} of a $\TM$ is a tuple consisting of the contents of the work tape, the current position of the header on the work tape, and the current state. The work tape is one-way infinite to the right. A \textbf{run} of $T$ over an input string $w \in \Gamma^*$ is a configuration sequence starting from the initial configuration defined as follows: The leftmost cell of the work tape contains $E_\bl$ followed by the input string and then infinitely many blank symbols, the header is in the first symbol of the input string, and the current state is $q_0$. From the current configuration, new configurations are obtained as follows. Let $a$ be the symbol in the current position of the header in the work tape, and let $q$ be the current state. If $\delta(q, a) = (q', a', \bd)$, then the new configuration is a tuple consisting of the state $q'$, the header moved one step in the direction given by $\bd$ and $a$ replaced in the old position of the header by $a'$. We say that $T$ \textbf{accepts} (resp. \textbf{rejects}) a string $\bs \in \Gamma^*$ if it enters an accepting (a rejecting) state during a run with the input $\bs$. If $T$ accepts or rejects a string during a run, then it \emph{halts} meaning that configuration is not updated any more, i.e., the last configuration of such a run is the first configuration, where $T$ is in an accepting or rejecting state.  
Moreover, we say that $T$ recognizes a language $L \subseteq \Gamma^*$ if $L = \{\, \bs \in \Gamma^* \mid \text{$T$ accepts $\bs$}\}$. For more concepts of Turing machines, see \cite{papadimitriou2003computational}.

Given a $k \in \N$, a deterministic \textbf{$k$-bounded Turing machine} (or a linear tape-bounded $\mathrm{TM}$) is a tuple 
\[
T' = (k, Q, \Gamma, \blacksquare, E_\bl, E_\br, \Sigma, q_0, \delta, A, R),
\]
where
$(Q, \Gamma, \blacksquare, E_\bl, \Sigma, q_0, \delta, A, R)$ is a Turing machine, $E_\br \in \Gamma$ with $E_\br \notin \Sigma$, is the \textbf{right end marker}, and 
$\delta$ is restricted as follows (in addition to the restrictions in the definition of TMs): it cannot print other symbols over the right end marker and cannot move to the right of the right end marker. More formally, for every $q \in Q$, we require that
$\delta(q, E_\br) \in Q \times \{E_\br\} \times \{\bs, \bl\}$. 
Linear tape-bounded $\TM$s compute over strings analogously to $\TM$s, but the initial configuration of $T'$ over an input string $\bs = s_1 \cdots s_n \in \Gamma^*$ is defined as follows. The current state is $q_0$, the work tape consists of the string $E_\bl \bs \blacksquare^k E_\br$ followed by an infinite sequence of blank symbols, and the header of the work tape is the cell that contains $s_1$.

\subsubsection{Languages accepted by programs}\label{section: language by MSC}

From now on, we do not make a distinction between proposition symbols and alphabet symbols. For this section we fix an arbitrary alphabet $\Pi$.

We can make programs reject by associating them with rejecting head predicates. A $(\Pi, \cT)$-program of $\GGMSC$ can be associated with a set of \textbf{rejecting predicates} $\cR \subseteq \cT$ that are distinct from the set of accepting states of the program. Now, such a program accepts (resp. rejects) a pointed $\Pi$-model $(M, w)$ if during the computation there is an $n \in \N$ and an accepting predicate (resp. a rejecting predicate) $X$ such that $M, w \models X^n$ and there is no rejecting predicate (resp. accepting predicate) $Y$ such that $M, w \models Y^m$ for any $m \leq n$.

Let $p_{\blacksquare}$, $p_\bl$ and $p_\br$ be three distinct proposition symbols that are not in the set $\Pi$.
Given an $\ell \in \N$ and a string $\bs \in \Pi^*$, its \textbf{$\ell$-extended word model} $M^n_w = ([0;\abs{\bs}+ \ell + 1], R, V)$ is a $\Pi \cup \{p_\blacksquare, p_\bl, p_\br\}$-model, where $R$ is the symmetric closure of the successor relation over integers in $[0;\abs{\bs} + \ell + 1]$ and for each $p \in \Pi$ and for each $i \in [\abs{\bs}]$, we define $V(i) = \{p\}$ if $\bs(i) = p$, and lastly for each $\abs{\bs} < j \leq \abs{\bs} + \ell$, we have $V(j) = \{p_\blacksquare\}$, $V(0) = \{p_\bl\}$ and $V(\abs{\bs}+\ell+1) = \{p_\br\}$.

Given a $\Pi$-program $\Lambda$ of $\GGMSC$, a $k \in \N$, and a string $\bs \in \Pi^*$, we say that $\Lambda$ \textbf{$k$-accepts (resp. $k$-rejcets) $\bs$} if its pointed $k$-extended word model $(M^k_w, 1)$ is accepted (resp. rejected) by $\Lambda$.
Moreover, given a language $L \subseteq \Pi^*$, we say that $\Lambda$ \textbf{$k$-recognizes $L$} if $\Lambda$ precisely $k$-accepts the strings in $L$. On the other hand, we say that a language $L \subseteq \Pi^*$ is \textbf{recognized} by $\Lambda$ if there is a $k \in \N$ such that $\Lambda$ precisely $k$-accepts the strings in $L$.

\begin{example}
Given a word $ppqqpp \in \{p,q\}^*$ 
its $3$-extended word model is drawn below.
\begin{center}
\begin{tikzpicture}[scale=0.8, every node/.style={scale=0.8}, nodes={draw, circle}, <-]

\node (X0) {$p_\bl$};
\node[right of=X0, node distance=1.5cm, fill=green!20] (X1) {$p$};
\node[right of=X1, node distance=1.5cm, fill=green!20](X2) {$p$};
\node[right of=X2, node distance=1.5cm, fill=blue!20](X3) {$q$};
\node[right of=X3, node distance=1.5cm, fill=blue!20](X4) {$q$};
\node[right of=X4, node distance=1.5cm, fill=green!20](X5) {$p$};
\node[right of=X5, node distance=1.5cm, fill=green!20](X6) {$p$};
\node[right of=X6, node distance=1.5cm](X7) {$p_\blacksquare$};
\node[right of=X7, node distance=1.5cm](X8) {$p_\blacksquare$};
\node[right of=X8, node distance=1.5cm](X9) {$p_\blacksquare$};
\node[right of=X9, node distance=1.5cm](X10) {$p_\br$};

\path [-stealth, thick]
(X0) edge [bend left=40] node [draw=none] {} (X1)
(X1) edge [bend left=40] node [draw=none] {} (X2)
(X2) edge [bend left=40] node [draw=none] {} (X3)
(X3) edge [bend left=40] node [draw=none] {} (X4)
(X4) edge [bend left=40] node [draw=none] {} (X5)
(X5) edge [bend left=40] node [draw=none] {} (X6)
(X6) edge [bend left=40] node [draw=none] {} (X7)
(X7) edge [bend left=40] node [draw=none] {} (X8)
(X8) edge [bend left=40] node [draw=none] {} (X9)
(X9) edge [bend left=40] node [draw=none] {} (X10);
%

\path [-stealth, thick]
(X1) edge [bend left=40] node [draw=none] {} (X0)
(X2) edge [bend left=40] node [draw=none] {} (X1)
(X3) edge [bend left=40] node [draw=none] {} (X2)
(X4) edge [bend left=40] node [draw=none] {} (X3)
(X5) edge [bend left=40] node [draw=none] {} (X4)
(X6) edge [bend left=40] node [draw=none] {} (X5)
(X7) edge [bend left=40] node [draw=none] {} (X6)
(X8) edge [bend left=40] node [draw=none] {} (X7)
(X9) edge [bend left=40] node [draw=none] {} (X8)
(X10) edge [bend left=40] node [draw=none] {} (X9);
\end{tikzpicture}   
\end{center}
\end{example}

\subsubsection{Translations}

Now, we are ready to show that $\GGMSC$, $\GMSC$ and $\MSC$ recognize precisely the same languages as linear tape-bounded $\TM$s. In the theorem below, we also consider rejecting and assume that programs can be associated with rejecting states. 

\begin{theorem}\label{thrm: lbas equiv 2-way GMSC}
Linear tape-bounded $\TM$s recognize precisely the same languages as $\GGMSC$, $\GMSC$ and $\MSC$.

More precisely, the following holds:
\begin{enumerate}
    \item Given a program $\Lambda$ of $\GGMSC$ and $k \in \N$, there is a deterministic $k$-bounded $\TM$ $T_\Lambda$ such that the following holds. If $\bw$ is $k$-accepted (resp. $k$-rejected) by $\Lambda$, then $\bw$ is accepted (resp. rejected) by $T_\Lambda$.
    \item Given a deterministic $k$-bounded $\TM$ $T$, there is a program $\Lambda_T$ of $\MSC$ such that the following holds: If $\bw$ is accepted (resp. rejected) by $T$, then $\bw$ is $k$-accepted (resp. $k$-rejected) by $\Lambda_T$.
\end{enumerate}
\end{theorem}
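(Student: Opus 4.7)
For direction 1, the plan is to construct a $k$-bounded TM $T_\Lambda$ whose tape alphabet is the product $(\Pi \cup \{\blacksquare, E_\bl, E_\br\}) \times \wp(\cT)$, so that each cell stores both its $\Pi$-content and the current subset of schema variables of $\Lambda$ true at it. $T_\Lambda$ first writes the round-$0$ labels by one sweep evaluating the base rules locally; then it iterates, performing one $\Lambda$-round per pass: for each cell $i$ and each induction rule $X \colonminus \psi$, it evaluates $\psi$ by structural recursion, using the cell and its two neighbors for $\Diamond_{\geq m}/\Box_{< m}$ and a left-to-right sweep with an $O(\log(|w|+k+2))$-bit counter for $\DiamondG_{\geq m}/\BoxG_{< m}$. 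After each round $T_\Lambda$ inspects cell $1$ and halts accepting (resp.\ rejecting) as soon as some accepting (resp.\ rejecting) predicate of $\Lambda$ becomes true there and no rejecting (resp.\ accepting) predicate was true there in any earlier round. Since the alphabet depends only on $\Lambda$ and all auxiliary workspace is local, the simulation fits a $k$-bounded tape.

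Direction 2 is the heart of the proof. The main obstacle is that the $k$-extended word model carries an undirected accessibility relation, so an interior cell cannot distinguish its two neighbors via $\Diamond$ or $\Box$ alone. I will break this symmetry by maintaining two orientation predicates $\sfL$ and $\sfR$ with the invariant that $\sfL$ (resp.\ $\sfR$) holds precisely at the cells strictly to the left (resp.\ right) of the current head. Together with $H$ (head here), $S_q$ for each state $q$ (true only at the head cell, encoding the current state), and $P_a$ for each tape symbol $a \in \Gamma$ (cell content), these predicates encode the full TM configuration. The base rules initialize everything at cell $1$: $H(0) \colonminus \Diamond p_\bl \land \neg p_\bl$, $S_{q_0}(0)$ is the same, $S_q(0) \colonminus \bot$ for $q \neq q_0$, $\sfL(0) \colonminus p_\bl$, $\sfR(0) \colonminus \neg p_\bl \land \neg \Diamond p_\bl$, and each $P_a(0)$ is the obvious literal indicating content $a$.

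Setting $\mathrm{Move}_\bd \colonequals H \land \bigvee_{(q,a) :\, \delta(q,a)_3 = \bd}(S_q \land P_a)$ for each direction $\bd \in \{\bl, \bs, \br\}$, the key update rules are
\[
H \colonminus \mathrm{Move}_\bs \lor (\sfR \land \Diamond \mathrm{Move}_\br) \lor (\sfL \land \Diamond \mathrm{Move}_\bl),
\]
together with $\sfL \colonminus (\sfL \land \neg \Diamond \mathrm{Move}_\bl) \lor \mathrm{Move}_\br$ and $\sfR \colonminus (\sfR \land \neg \Diamond \mathrm{Move}_\br) \lor \mathrm{Move}_\bl$; since the head is unique, each cell adjacent to the moving head consults its own $\sfL/\sfR$ flag to decide whether the head is moving towards it. The content rule is $P_a \colonminus (P_a \land \neg H) \lor \bigvee_{(q,b) :\, \delta(q,b)_2 = a}(H \land S_q \land P_b)$ and each $S_{q'}$ updates in parallel to $H$ but restricted to transitions landing in state $q'$. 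Acceptance and rejection are then broadcast to cell $1$ by accepting predicate $\mathrm{Acc}$ and rejecting predicate $\mathrm{Rej}$ with $\mathrm{Acc} \colonminus \bigvee_{q\in A} S_q \lor \Diamond \mathrm{Acc}$ and symmetrically for $\mathrm{Rej}$; because halting states are fixed points of $\delta$ and the accepting-state set $A$ is disjoint from the rejecting-state set, at most one of the two signals is ever triggered along a run.

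Correctness in both directions follows by a routine induction on the number of rounds: in direction~1 the tape of $T_\Lambda$ always carries the round-$n$ global configuration of $\Lambda$, while in direction~2 the round-$n$ values of $(H, \sfL, \sfR, S_q, P_a)$ encode the $n$-th configuration of $T$, with $\sfL$ and $\sfR$ partitioning the non-head cells. Since the simulating program already lives in $\MSC$ and the simulating TM already handles the richest logic $\GGMSC$, the three equivalences follow uniformly. The only genuinely delicate step is the $\sfL/\sfR$ orientation trick of direction~2, whose correctness rests on the uniqueness of the head in a deterministic TM and a case analysis of the three transition directions.
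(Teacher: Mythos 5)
Your proposal is correct and follows the same two-way simulation strategy as the paper, but your execution of direction 2 differs in ways worth recording. The paper breaks the left/right symmetry of the undirected word model with predicates $X_\bl$ and $X_\br$ that mark only the two cells immediately adjacent to the head (initialized by $p_\bl$ and $\neg p_\bl \land \Diamond\Diamond p_\bl$), whereas you maintain the global invariant that $\sfL$ and $\sfR$ partition all non-head cells into left and right regions; both invariants are maintainable, and yours buys simpler head/state update rules at the price of updating whole regions rather than two cells. More importantly, your construction supplies two correctness-critical details that the paper's written rules actually lack: (i) the retention clause $(P_a \land \neg H)$ in the tape-content rule --- the paper's rule $X_a \colonminus \bigvee_{(s,b)} X_s \land X_b$ has no disjunct for non-head cells, so as written it erases tape contents away from the head; and (ii) the broadcast predicates with $\mathrm{Acc} \colonminus \bigvee_{q\in A} S_q \lor \Diamond \mathrm{Acc}$ --- the paper instead takes its accepting predicates to be the $X_q$ with $q \in A$, which hold only at the head's current cell, while $k$-acceptance is evaluated at node $1$ of the word model, so without a broadcast the program accepts only if the head happens to sit at node $1$ when $T$ halts. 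In direction 1 the paper first normalizes $\Lambda$ so that induction rules have modal depth at most one (citing earlier lemmas), after which the machine needs only local scans plus an $m$-bounded multiset of tape symbols to handle the global modalities; you instead evaluate unnormalized schemata by structural recursion with counters, which also works, though the $\ordo(\log n)$-bit counter is not ``local'' and should be said to live on an extra tape track. Two points to tighten: state the (harmless, w.l.o.g.) assumption that halting states are fixed points of $\delta$ explicitly --- in the paper's model $\delta$ is total and the machine simply stops being updated after halting, so without this assumption your simulation could launch both an $\mathrm{Acc}$ and a $\mathrm{Rej}$ signal; and in direction 1 your accept/reject test at cell $1$ should also cover the case where accepting and rejecting predicates first become true in the same round, since the paper's definition of acceptance with rejecting predicates then yields neither acceptance nor rejection.
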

\begin{proof}
    First, let $\Lambda$ be a $(\Pi, \cT)$-program of $\GGMSC$ and $k \in \N$. 
    We can assume that the modal depth of each base rule is $0$ and the modal depth of each induction rule is at most $1$, since $\Lambda$ can be easily translated into an equivalent program in that form. An analogous result was shown for $\GMSC$ in \cite{ahvonen2024logical} (Lemma B.1) and for $\MSC$ in \cite{10.1093/logcom/exae087} (Theorem $5.4$).
    Furthermore, we can assume that proposition symbols do not appear in the bodies of the induction rules (since an equivalent program is easy to obtain). 
    
    Informally, we construct a linear tape-bounded $\TM$ $T_{\Lambda}$ that simulates in a periodic fashion global configurations of $\Lambda$. The machine $T_\Lambda$ has the corresponding position on the work tape for each node of the word model that is under the simulation.  
    To compute the following global configuration of $\Lambda$ from its previous global configuration at a single node, $T_\Lambda$ scans which head predicate are true at the node and its out-neighbours. After the scan $T_\Lambda$ writes in the corresponding position in the work tape which predicates are true. It is easy but tedious to implement $T_{\Lambda}$, so we only informally describe how $T_\Lambda$ works.
    \begin{enumerate}
        \item The machine $T_\Lambda$ is $k$-bounded, the tape alphabet of $T_{\Lambda}$ is $\wp(\cT)$, and consists of a single accepting state and a single rejecting state.
        \item The global configuration in the round $n=0$ is computed as follows. The header of $T_{\Lambda}$ scans the whole input string $\bs = s_1 \ldots s_m$ from left to right and marks during the scan which head predicates are true at each position as follows. If $s_i$ is the current alphabet symbol where the header is, then it is replaced by $\cT_i$, where $\cT_i$ are the true head predicates at the node $i$ in the extended word model $M^k_\bs$ in the round $0$. Since the modal depth of each base rule is zero, there is no need for scanning all the tape symbols on the left or right of each position. After updating the whole string the header moves from right to the left end marker. However, note that $T_{\Lambda}$ cannot overwrite the left end marker or right end marker, thus, instead of rewriting the tape symbols in the position $0$ and position $m$, $T_{\Lambda}$ keeps track on the current local configuration of these positions in its state.
        \item 
        Similarly, to compute the global configuration of $\Lambda$ in round $n+1$, $T_{\Lambda}$ proceeds as follows. Assume that the header is at the left end marker, i.e., in the position $0$.
        \begin{itemize}
            \item 
            Let $m$ denote the greatest counting threshold that appears in $\Lambda$. Recall that a multiset $\cM(S)$ over a set $S$ is a function of the form $S \to \N$, and an $m$-bounded multiset $\cM^m(S)$ over $S$ is a multiset $f$ such that $f(x) \leq m$ for each $x \in S$. 
            The machine $T_{\Lambda}$ scans the whole content of the tape and records the $m$-multiset of the tape symbols that appear in the tape (including the current local configurations in the first and last position). After that $T_\Lambda$ scans the content of cell in position $1$. Then $T_\Lambda$ updates the current local configuration of the position $0$ to its state according to the multiset and the tape symbol in the position $1$; this is possible since the modal depth of each induction rule is at most $1$. However, at this point $T_\Lambda$ also stores the old local configuration of the position $0$ to the state of $T_{\Lambda}$.    
            \item After updating the local configuration in the position $0$ to the state of $T_\Lambda$, it scans the content of the cell in the position $2$. After that based on the multiset scanned in the previous step, the old local configuration in the position $0$, the tape symbol in the position $1$ and the tape symbol in the position $2$, $T_{\Lambda}$ updates the tape symbol in the position $1$ w.r.t. the induction rules of $\Lambda$. Again, we record the old tape symbol of the position $1$ to the state of $T_{\Lambda}$ and we can forget the old local configuration in the position $0$ and $2$.
            \item This process continues in an analogous way until we reach the right end marker and update its local configuration to the state of $T_\Lambda$. Note that we do not need to scan on the right of the right end marker. 
            \item After computing a new local configuration for each position, we have computed the global configuration in round $n+1$ and move back to the left end marker and may start a cycle again, and when a new cycle starts $T_\Lambda$ only needs to store the local configuration of the left and right end marker.
        \end{itemize}
        \item If the tape symbol of the position $1$ (i.e. the position on the left of the left end marker) includes an accepting predicate (resp. a rejecting predicate), then $T_{\Lambda}$ enters into an accepting state (resp. a rejecting state).
    \end{enumerate}
    Clearly, the constructed linear tape-bounded $\TM$ accepts (resp. rejects) each word that is $k$-accepted (resp. $k$-rejected) by $\Lambda$.

    For the converse direction, let $T = (k, Q, \Gamma, \blacksquare, E_\bl, E_\br, \Pi, q_0, \delta, A, R)$ be a linear tape-bounded Turing machine.
    We will construct a $\Pi \cup \{p_\blacksquare, p_\bl, p_\br\}$-program $\Lambda_T$ of $\MSC$ that $k$-accepts (resp. $k$-rejects) each word accepted (resp. rejected) by $T$. 
    Informally, for each $a \in \Gamma$, the head predicate $X_a$ records the current tape symbol at the node and therefore we construct the program such that in each round precisely one of these predicates is true at each node. For each $q \in Q$, $X_q$ records the current state of the node and the position of the header, i.e., we construct the program such that in each round, precisely one of the predicates $X_q$ is true at a single node. 
    The head predicates $X_\bl$ and $X_\br$ record the position on the left and right of the current position of the header. 

    It is easy to define rules for the described head predicates, but we will formally define them below.
    For each $a \in \Gamma$, we define a head predicate $X_a$ and the rules as follows:
    \[
    X_a (0) \colonminus p_a \qquad X_a \colonminus \bigvee_{\substack{(s, b) \in Q \times \Gamma \\ \delta(s, b) = (q, a, \bd)}} X_s \land X_b. 
    \]
    For each state $q \in Q$, we define a head predicate $X_q$ as follows. For $X_{q_0}$, we set $X_{q_0} (0) \colonminus r $ and for other states $q \in Q \setminus \{q_0\}$, we set $X_q (0) \colonminus \bot$. 
    Before we define the induction rules, we define for each $a \in \Gamma$ and for each $q \in Q$ the following auxiliary formulae, which are used to ``tell the header where to move'',
    \[
    \varphi_{(q, a, \bl)} \colonequals X_\bl \land \Diamond (X_q \land X_a), \qquad \varphi_{(q, a, \bs)} \colonequals (X_q \land X_a), \qquad \varphi_{(q, a, \br)} \colonequals X_\br \land \Diamond (X_q \land X_a). 
    \]
    Now, the induction rules for each $q \in Q$ are defined as follows:
    \[
    X_q \colonminus \bigvee_{\substack{(q', a') \in Q \times \Gamma \\ \delta(q',a') = (q, a, \bd)}} \varphi_{\delta(q',a')}. 
    \]
    Now, for each $\bd \in \{\bl, \bs, \br\}$, we let 
    \[
    \psi_\bd \colonequals \bigvee_{\substack{(q', a') \in Q \times \Gamma \\ \delta(q',a') = (q, a, \bd)}} \varphi_{\delta(q',a')}.
    \]
    Lastly, we define
    \[
    \begin{aligned}
    &X_\bl (0) \colonminus p_\bl &&X_\bl \colonminus \overbrace{( X_\bl \land \Diamond \psi_\bs)}^{\text{stay}} \lor \overbrace{(\Diamond X_\bl \land \Diamond \psi_\br)}^{\text{move to the right}} \lor \overbrace{\Big(\neg \bigvee_{q \in Q} X_q \land \Diamond \psi_\bl \Big)}^{\text{move to the left}} \\
    &X_\br (0) \colonminus \neg p_\bl \land \Diamond\Diamond p_\bl &&X_\br \colonminus ( X_\br \land \Diamond \psi_\bs) \lor (\Diamond X_\br \land \Diamond\psi_\bl) \lor \Big(\neg \bigvee_{q \in Q} X_q \land \Diamond \psi_\br \Big).
    \end{aligned}
    \]
    Intuitively, in the induction rules for $X_\bl$ and $X_\br$, the first disjunct is true in the case where the header stays in the same position, the second disjunct handles the case where the header moves to the right, and the last disjunct takes care of the case where the header moves to the left.   
    
    The set of accepting predicates (resp. rejecting predicates) are the head predicates $X_q$, where $q \in A$ (resp. $q \in R$).
    By a routine induction, one can show that the constructed program $k$-accepts (resp. $k$-rejects) each word that is accepted (resp. rejected) by $T$.
\end{proof}

\subsection{Model checking and satisfiability}\label{sec: model checking}

In this subsection we study the computational complexity of the model checking problem for $\SC$, $\MSC$, $\GMSC$ and $\GGMSC$.

We start by considering asynchronous variants.

\begin{theorem}\label{thrm: ptime model checking}
    Both the combined and data complexity of the model checking problem for $\GGMSC^A$, $\GMSC^A$ and $\MSC^A$ are $\mathrm{PTIME}$-complete.
\end{theorem}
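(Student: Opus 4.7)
The plan is to handle the two directions separately: establish the PTIME upper bound for combined complexity (which immediately gives the bound for data complexity), and PTIME-hardness for data complexity (which immediately gives hardness for combined complexity).

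For the upper bound I would invoke Theorem \ref{thrm: AMSC=MCL=mu-fragment}, which identifies $\MSC^A$, $\GMSC^A$ and $\GGMSC^A$ with the corresponding mu-fragments $\ML^\mu_1$, $\GML^\mu_1$ and $\GGML^\mu_1$ of the (global, graded) modal $\mu$-calculus, and the translation given there is clearly computable in polynomial time. Since model checking for the alternation-free $\mu$-calculus, and in particular for its least-fixed-point-only fragment, is decidable in polynomial time by the standard fixed-point iteration algorithm, this yields the upper bound. Equivalently, I would give a direct algorithm: assuming a program in strong negation normal form (cf.\ Lemma \ref{lem: strong negation normal form}), compute the sequence of global configurations $g_0, g_1, \ldots$ where $g_0 \equiv \emptyset$ and
\[
g_{n+1}(v) \;=\; g_n(v) \;\cup\; \{\, X \in \cT \mid M, v \models \varphi_X \,\} \;\cup\; \{\, X \in \cT \mid M_{g_n}, v \models \psi_X \,\}.
\]
Monotonicity (which requires the strong negation normal form) guarantees convergence within $\abs{\cT} \cdot \abs{W}$ rounds; each round is a polynomial-time computation because $\GGML$ model checking is polynomial in combined complexity. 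The pointed model $(M,w)$ is accepted iff some accepting predicate eventually belongs to $g^*(w)$ at the fixed point.

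For PTIME-hardness of data complexity I would reduce from the alternating graph accessibility problem (AGAP), a canonical PTIME-complete problem. Encode an AND/OR graph as a Kripke model using propositions $p_\exists$, $p_\forall$ and $t$ to mark OR-nodes, AND-nodes and target nodes. Consider the \emph{fixed} $\MSC^A$ program with a single accepting predicate $X$ and rules
\[
X(0) \colonminus t, \qquad X \colonminus t \lor (p_\exists \land \Diamond X) \lor (p_\forall \land \Box X).
\]
In the asynchronous game from a position $(\El, v, X)$, Eloise can either pick the base rule (wins iff $v \models t$), or pick the induction rule and then select one of its three disjuncts: terminate at $t$, proceed through a self-chosen successor at an OR-node, or pass the successor choice to Abelard at an AND-node. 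The set of $v$ from which Eloise has a winning strategy therefore coincides with the least fixed point defining alternating reachability to the target. Since the program is fixed, this reduces AGAP to the data-complexity model-checking problem for $\MSC^A$, and the lower bound transfers to $\GMSC^A$ and $\GGMSC^A$ as fragments.

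The main delicate point to verify is that the least-fixed-point semantics used in the algorithm really matches the asynchronous game semantics, because infinite plays of the asynchronous game are drawn (neither player wins). The key observation is that Eloise can only win by eventually choosing a base rule to terminate the recursion, so every winning strategy is well-founded, and this well-foundedness is exactly what identifies the winning set of positions with the least fixed point of the monotone operator above. Once this correspondence is established, combining the two bounds gives PTIME-completeness of both combined and data complexity for all three logics.
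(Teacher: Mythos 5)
Your lower-bound argument is correct, and it takes a genuinely different route from the paper: the paper obtains \textsc{PTIME}-hardness of data complexity by citing the hardness of model checking a \emph{fixed} formula of MCL/the modal $\mu$-calculus (Propositions 8.2 and 8.5 of the cited work) and transferring it through Theorem \ref{thrm: AMSC=MCL=mu-fragment}, whereas you give a self-contained LOGSPACE reduction from alternating graph accessibility using a fixed $\MSC^A$ program. Both are sound; yours has the advantage of not depending on an external hardness result, and it transfers to $\GMSC^A$ and $\GGMSC^A$ exactly as you say.

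The upper bound, however, has a genuine gap. Your direct algorithm needs schema variables to occur only positively, and you obtain this by invoking Lemma \ref{lem: strong negation normal form}. But that lemma is proved only for the standard (synchronous) semantics, and it is \emph{false} under asynchronous semantics: dualizing a predicate whose asynchronous semantics is a least fixed point yields the complement of the \emph{greatest} fixed point, not of the least one. Concretely, take the program $X(0) \colonminus \bot$, $X \colonminus \neg X$ with $X$ accepting. Asynchronously it accepts no pointed model: once Eloise enters the induction body, the play reaches the position $(\Ab, w, X)$ with Abelard as verifier, and Abelard can iterate forever, forcing a draw. Its strong negation normal form per the lemma is $X(0) \colonminus \bot$, $X \colonminus X_d$, $X_d(0) \colonminus \top$, $X_d \colonminus X$, which asynchronously accepts \emph{every} pointed model (Eloise moves to $X_d$ and picks its base rule $\top$). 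So the first step of your algorithm already changes the semantics. Your alternative route is also not justified: Theorem \ref{thrm: AMSC=MCL=mu-fragment} is purely an expressivity statement, and the translation from $\MSC^A$ into MCL in its proof unfolds the program using label symbols indexed by strings over $\cT$, which can blow up exponentially, so ``clearly computable in polynomial time'' does not hold. The paper sidesteps both issues by simulating the asynchronous semantic game directly on an alternating LOGSPACE machine (and ALOGSPACE $=$ PTIME): negation is handled by swapping the verifier/falsifier roles inside the tracked game position, and draws are handled correctly because alternating acceptance requires a finite accepting computation tree. Your fixed-point iteration and its well-foundedness justification are fine for programs that are \emph{already} in strong negation normal form, but the reduction of the general case to that case is the missing step, and as stated it cannot be repaired by Lemma \ref{lem: strong negation normal form}.
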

\begin{proof}
    As already shown in \cite{HELLA2022104882} in the proof of Proposition 8.5, the data complexity of the model checking problem for $\MCL$ is PTIME-hard. The proof in that paper is based on the proof of Proposition 8.2 which shows that the model checking problem for a fixed formula of the modal $\mu$-calculus is PTIME-hard. Therefore, by Theorem \ref{thrm: AMSC=MCL=mu-fragment} it follows that the data complexity of the model checking for $\GGMSC^A$, $\GMSC^A$ and $\MSC^A$ in PTIME-hard and thus the combined complexity of the model checking problem for these logics is also PTIME-hard.
    
    To see that the model checking is in $\mathrm{PTIME}$, it is straightforward to simulate the asynchronous semantic game $\cG(M, w, \Lambda)$ for a program $\Lambda$ of $\GGMSC^A$, $\GMSC^A$ or $\MSC^A$ by an alternating $\mathrm{LOGSPACE}$ machine; the machine simply keeps track on the current position of the game. 
    In more detail:
    \begin{itemize}
        \item The machine keeps track on the position of the game, i.e., the work tape keeps a pointer to the current node and another pointer to the current subformulae of the program. 
        \item The machine simulates each rule in a standard way: informally, each existential state corresponds to the choice of the current verifier, while each universal state corresponds to the choice of the current falsifier. Also, each state tells if Eloise is the current falsifier or verifier. Proposition symbols, schema variables, constant symbols and Boolean connectives are easy to handle. 
        \item The most involved part is to simulate counting diamonds and boxes. We go through how to simulate a subformulae of the type $\Diamond_{\geq k} \varphi$ since other cases are analogous. The machine goes through the out-neighbours of the current node one by one by using the implicit linear order of the out-neighbours given by the input model. For each out-neighbour of the current node, the verifier tells if the node satisfy $\varphi$ or not; the machine stores (in binary) on the work tape the number of neighbours that the verifier claims satisfy $\varphi$. If the verifier claims that an out-neighbour $u$ satisfies $\varphi$, the current falsifier can challenge the current verifier and the machine proceeds to verify the formula $\varphi$ at $u$. If the falsifier does not challenge any of the claims made by the verifier and the machine runs out of out-neighbours, then the machine checks if the number $m$ of nodes that satisfy $\varphi$ is greater or equal to $k$. After that the machine may enter into a halting state which is an accepting state if $m \geq k$ and a rejecting if $m < k$. Note that the number of out-neighbours is always bounded by the number of nodes in the input model, and therefore, since $m$ is encoded in binary, the space used in the work tape is always logarithmic.
    \end{itemize}
\end{proof}

Next, we consider programs with the standard semantics.

\begin{theorem}\label{thrm: pspace model checking}
    Both the combined and data complexity of the model checking problem for $\GGMSC$, $\GMSC$ and $\MSC$ are $\mathrm{PSPACE}$-complete. 
\end{theorem}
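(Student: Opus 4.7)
The plan is to establish matching $\mathrm{PSPACE}$ upper and lower bounds. For the upper bound I would work directly with $\GGMSC$, since it is the most expressive of the three logics, and simulate global configurations iteratively. Given a $(\Pi, \cT)$-program $\Lambda$ and a pointed model $(M, w)$ with $M = (W, R, V)$, the algorithm maintains the current global configuration $g_n \colon W \to \wp(\cT)$, which occupies $O(|W| \cdot |\cT|)$ bits, together with a binary counter $n$. At each step the new global configuration $g_{n+1}$ is obtained from $g_n$ by evaluating, at every node, the body of the induction rule of each head predicate using the truth values stored in $g_n$; since these bodies are $\GGML$-formulas interpreted over a $(\Pi, \cT)$-model, the evaluation is polynomial time, hence polynomial space. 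The algorithm accepts as soon as some accepting predicate becomes true at $w$, and rejects once $n$ exceeds $2^{|W| \cdot |\cT|}$, because by a pigeonhole argument a global configuration must then have repeated, so the set of nodes that ever become accepted has already stabilised. The counter itself fits in $|W| \cdot |\cT|$ bits, so the total space is polynomial.

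For the lower bound it suffices to prove $\mathrm{PSPACE}$-hardness already for the data complexity of $\MSC$, the weakest of the three logics, since hardness then propagates both upward in expressiveness (to $\GMSC$ and $\GGMSC$) and from data complexity to combined complexity. Here I would invoke Theorem \ref{thrm: lbas equiv 2-way GMSC}: fix a $\mathrm{PSPACE}$-complete language $L$ recognized by a deterministic linear tape-bounded Turing machine $T_L$ that uses at most $c \cdot n$ cells on inputs of length $n$, and let $\Lambda_{T_L}$ be the fixed $\MSC$-program supplied by that theorem. The reduction maps an input word $\bw$ to the pointed extended word model $(M^{c|\bw|}_{\bw}, 1)$, which is trivially computable in polynomial time. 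By Theorem \ref{thrm: lbas equiv 2-way GMSC} we have $\bw \in L$ iff $T_L$ accepts $\bw$ iff $\Lambda_{T_L}$ $c|\bw|$-accepts $\bw$ iff $(M^{c|\bw|}_{\bw}, 1) \models \Lambda_{T_L}$, which gives data-complexity $\mathrm{PSPACE}$-hardness for $\MSC$, hence combined-complexity $\mathrm{PSPACE}$-hardness for all three logics.

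I do not foresee a serious obstacle: the upper bound is essentially a brute-force simulation with a polynomial-size counter, and the lower bound is handed to us by the characterization of linear tape-bounded Turing machines. The only mildly subtle point is ensuring that evaluating an induction-rule body against a stored global configuration $g_n$—in particular handling the counting modalities $\Diamond_{\geq k}, \Box_{< k}$ and their global counterparts $\DiamondG_{\geq k}, \BoxG_{< k}$ of $\GGMSC$—stays within polynomial time. This follows from the standard polynomial-time model checking of $\GGML$ over $(\Pi, \cT)$-models, where the truth of a schema variable $X$ at a node $v$ is resolved by a table lookup in $g_n$, and counting thresholds encoded in binary require only logarithmic work per modality.
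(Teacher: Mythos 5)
Your upper bound is correct and is essentially the paper's own argument: simulate the sequence of global configurations, each of which fits in polynomial space, for at most exponentially many rounds; your explicit binary counter and the pigeonhole/periodicity justification for cutting off after $2^{|W|\cdot|\cT|}$ rounds just spell out what the paper states more briefly.

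The lower bound, however, has a genuine gap, and it sits exactly where the paper does real work. First, in this paper a ``linear tape-bounded $\TM$'' is a $k$-bounded $\TM$: its tape has length $|\bw| + k + 2$ for a constant $k$ that is part of the machine's description, and Theorem~\ref{thrm: lbas equiv 2-way GMSC}(2) only yields the equivalence ``$T$ accepts $\bw$ iff $\bw$ is $k$-accepted by $\Lambda_T$'' for that fixed $k$. A machine ``using at most $c \cdot n$ cells'' is not a $k$-bounded $\TM$ for any fixed $k$, and the theorem gives no licence to instantiate the padding parameter as $c|\bw|$, growing with the input: running $\Lambda_{T_L}$ on $M^{c|\bw|}_{\bw}$ simulates the machine on a tape whose right end marker sits further out, a situation about which the theorem is silent (a $k$-bounded machine's behaviour can depend on where it encounters the end marker). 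To repair this you must first convert $T_L$ into a $k$-bounded machine in the paper's sense via the standard tape-compression trick (enlarge the alphabet so that $c \cdot n$ cells fit into $n + \ordo(1)$ cells) and then map $\bw$ to $M^{k}_{\bw}$ for that fixed $k$. Second, the assertion ``fix a $\mathrm{PSPACE}$-complete language recognized by a deterministic linear tape-bounded $\TM$'' is precisely the fact whose absence from the literature the paper comments on; the paper therefore proves it, by reducing QBF to the membership problem of one fixed $k$-bounded machine, the key idea being to encode the required workspace into the input string itself so that $n + \ordo(1)$ cells suffice. Appealing to this as known is defensible in spirit (that $\mathrm{TQBF}$ is decidable in deterministic linear space, or a padding argument, is textbook material), but as written your proof neither establishes it nor makes it compatible with the paper's machine model, so the chain of equivalences in your reduction is not justified by the results you cite.
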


\begin{proof}
    We show that the data complexity of the model checking problem for $\MSC$ is PSPACE-hard by showing that the data complexity of the membership problem for linear tape-bounded $\TM$s is PSPACE-hard. Recall that the membership problem for a linear tape-bounded $\TM$ asks if a given string is in the language accepted by the given linear tape-bounded $\TM$. It is a well-known fact that the combined complexity of the membership problem for linear tape-bounded $\TM$s is PSPACE-complete, but to our knowledge it seems that the data complexity of the membership problem for linear tape-bounded $\TM$s is not explicitly stated anywhere. Therefore, we give a reduction from the quantified Boolean formulae problem (or QBF) to the membership problem for linear tape-bounded $\TM$s with fixed machine. 
    
    Let $\varphi = Q_1 x_1 \cdots Q_n x_n \psi$ be an instance of QBF, where $\psi$ is a quantifier-free Boolean formulae. We can encode $\varphi$ into a string $\bw_\varphi$, which is linear in the size of $\varphi$ and consists of the following four parts. The first part $\bw_1$ is a binary string that stores the current interpretation of the variables that appear in $\psi$, the second part $\bw_2$ records $\varphi$, the third part $\bw_3$ is used to evaluate $\psi$ under the interpretation expressed in $\bw_1$, and last part $\bw_4$ is used to record truth values of quantifiers. Initially, $\bw_1$ is a zero string and the last part $\bw_4$ marks each universal quantifier as true and each existential quantifier as false. Then we construct a linear tape-bounded $\TM$ $T_{\mathrm{QBF}}$ which works in a periodic fashion as follows. The machine $T_{\mathrm{QBF}}$ goes through all the interpretation in the lexicographical order and evaluates $\psi$ according to each interpretation. The second part $\bw_2$ of the string is used to reset the third part $\bw_3$ between the different interpretations. After each evaluation, truth values of quantifiers are updated as follows. Assume that the leftmost index of $\bw_1$, where the bit is $1$, is $i$, which indicates the current quantifier being evaluated (in the beginning, where $\bw_1$ is the zero string, the quantifier $Q_n$ is under the evaluation). 
    If $Q_i$ is a universal quantifier and the formula is evaluated as false under the interpretation expressed in $\bw_1$, then $Q_i$ is marked as false. However, if the formula is never evaluated as false while $Q_i$ is being evaluated, then $Q_i$ is marked as true.
    The case for existential quantifiers is analogous. Now, $\varphi$ is true iff the outermost quantifier $Q_1$ is marked as true.

    We have now shown that the data complexity for the membership problem for linear tape-bounded $\TM$s is PSPACE-hard.
    Thus, by Theorem \ref{thrm: lbas equiv 2-way GMSC}, the data complexity of the model checking problem is also PSPACE-hard for $\MSC$, $\GMSC$ and $\GGMSC$. Furthermore, clearly the combined complexity of the model checking for $\GGMSC$ is in PSPACE: given a $\Pi$-program $\Lambda$ of $\GGMSC$ and a pointed $\Pi$-model $(M, w)$, we simulate $\Lambda$ in $(M, w)$ by tracking its global configurations at each node and accept if $w$ enters into an accepting state. Note that a single round of the simulation can be computed in polynomial time. Therefore, since storing the global configuration requires only a polynomial amount of space and we only need to simulate \(\Lambda\) for at most an exponential number of rounds, this simulation can be done using polynomial space.
\end{proof}

We conclude this section with results on the combined complexity of \(\SC\) and its asynchronous variant \(\SC^A\). Note that in both cases, the data complexity of the model checking problem is trivial, because if we fix a program of \(\SC\), then we also fix the underlying set of proposition symbols, which means that there are only a fixed number of possible inputs. Considering related work on model checking $\SC$, in the paper \cite{ahvonen_et_al:LIPIcs.CSL.2024.9} the logic $\SC$ was linked to Boolean networks, and it is folklore that the reachability problem for Boolean networks is a PSPACE-complete problem; see, for example, \cite{goles2015pspacecompletenessmajorityautomatanetworks, perrot2024complexitybooleanautomatanetworks}. The reachability problem for Boolean networks asks the following: if for a given input configuration $\bi$ and for a given configuration $\bc$, the studied Boolean network reaches with the input $\bi$ the configuration $\bc$. Thus, the PSPACE-completeness of the combined complexity of the model checking for $\SC$ (under the standard semantics) could be obtained via Boolean networks, but to keep the paper self-contained, we prove the result here without referring to Boolean networks. 
\begin{theorem}\label{thrm: SC async PTIME}
    The combined complexity of the model checking problem for $\SC^A$ is $\mathrm{PTIME}$-complete.
\end{theorem}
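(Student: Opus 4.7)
I would establish $\mathrm{PTIME}$-completeness via the usual two-sided attack: membership first, then hardness.

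For the upper bound, I would observe that $\SC^A$ sits inside $\MSC^A$: an $\SC$ program is formally an $\MSC$ program whose subschemata contain no modalities, and a propositional model can be regarded as a single-node Kripke model with empty accessibility relation. This gives a trivial logspace reduction from combined $\SC^A$ model checking to combined $\MSC^A$ model checking, so membership in $\mathrm{PTIME}$ follows immediately from Theorem \ref{thrm: ptime model checking}. One could equally solve the problem directly, since on a one-point model the asynchronous semantic game has only $O(|\Lambda|)$ positions (pairs of a verifier tag and a subschema occurrence), and its winning region can be computed by a backward fixed-point in polynomial time.

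For $\mathrm{PTIME}$-hardness, I would give a logspace reduction from the Monotone Circuit Value Problem, a standard $\mathrm{PTIME}$-complete problem under logspace reductions. Given a monotone Boolean circuit $C$ with input gates $x_1, \ldots, x_m$, internal AND/OR gates, output gate $o$, and an input assignment $\vec{a} \in \{0,1\}^m$, set $\Pi = \{p_1, \ldots, p_m\}$ and let $M$ be the single-node $\Pi$-model with $V(*) = \{\, p_i \mid a_i = 1 \,\}$. Construct $\Lambda_C$ with one variable per gate: for each input gate $x_i$ take rules $X_i(0) \colonminus p_i$ and $X_i \colonminus p_i$; for each AND gate $g$ with inputs $g_1, g_2$ take $X_g(0) \colonminus \bot$ and $X_g \colonminus X_{g_1} \land X_{g_2}$; for each OR gate replace $\land$ by $\lor$; and declare $X_o$ the sole accepting predicate. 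A short induction on the DAG structure of the circuit shows that Eloise wins $\cA\cG(M,*,\Lambda_C)$ starting at $(\El, *, X_g)$ iff the value of $g$ under $\vec{a}$ is $1$. Hence $\Lambda_C$ accepts $(M,*)$ asynchronously iff $C(\vec{a}) = 1$, and the construction is clearly computable in logspace.

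\textbf{Main obstacle.} The only point that genuinely needs care is the interplay between asynchronous semantics and potentially infinite (drawn) plays. Because the base rule of every non-input variable in $\Lambda_C$ is $\bot$, any verifier choosing the base rule at such a variable loses on the spot, so in any rationally played game the only sensible choice at a non-input variable is to unfold the induction rule. Acyclicity of the gate dependency graph then forces every maximal play to terminate at an atomic position $(\bV, *, p_i)$, so no infinite plays arise and the game value really coincides with the value of the circuit. Once this observation is in place, the inductive correctness argument is routine.
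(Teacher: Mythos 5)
Your proposal is correct, and at the top level it mirrors the paper's proof: membership in $\mathrm{PTIME}$ is inherited from Theorem \ref{thrm: ptime model checking}, and hardness comes from a logspace reduction from a circuit value problem in which each gate becomes a schema variable, internal gates get base rule $\bot$, and each induction rule is the Boolean combination of the variables of the incoming gates. The differences lie in the instantiation, and they are not cosmetic. The paper reduces from the \emph{general} circuit value problem (NOT gates permitted) and equips input gates with the rules $X_I(0) \colonminus p_i$, $X_I \colonminus X_I$, whereas you reduce from the \emph{monotone} circuit value problem and use the loop-free rules $X_i(0) \colonminus p_i$, $X_i \colonminus p_i$. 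Your ``main obstacle'' paragraph is precisely the point on which these choices diverge: in the asynchronous game an infinite play is a draw, i.e., not a win for Eloise. Under the paper's rules, a NOT gate above an input gate makes Abelard the verifier at a position $(\Ab, *, X_I)$, and when $p_i$ is false Abelard can unfold the self-loop $X_I \colonminus X_I$ forever and force a draw rather than concede. Concretely, for the one-gate circuit computing $\neg x_1$ with input $a_1 = 0$, the circuit evaluates to $1$ but the paper's program does not accept the corresponding one-point model, so the paper's construction (whose correctness it dismisses as ``easy to verify'') is in fact broken in the presence of negations; repairing it requires exactly the modifications you make. Your combination of monotone circuits (so Eloise is always the verifier) and loop-free input rules, together with the acyclicity argument showing all plays are finite, makes the inductive correctness claim legitimate. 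In short, your proof is not merely a correct variant of the paper's argument: it closes a genuine gap in it. The only redundancy is your alternative direct upper bound via an attractor computation on the $O(|\Lambda|)$-position game graph, which is valid but unnecessary once Theorem \ref{thrm: ptime model checking} is invoked.
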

\begin{proof}
    The model checking problem for $\SC^A$ is in PTIME by Theorem \ref{thrm: ptime model checking}. We show that it is PTIME-hard by giving a LOGSPACE reduction from the Boolean circuit value problem to the model checking of $\SC^A$. Note that a translation from Boolean circuits to programs of $\SC$ was already (implicitly) given in \cite{dist_circ_mfcs}, but we go through the details here for $\SC^A$.
    
    A Boolean circuit (with one output gate) is an acyclic graph defined as follows. It contains one node whose out-degree is zero, each node whose in-degree is zero is labeled with a variable of the form $\{\, x_i \mid i \in \N \,\}$, other nodes are labeled with $\land$ or $\lor$ symbols, and only nodes with in-degree one can be labeled with $\neg$. A Boolean circuit $C$ with $n$ distinct variable induces a Boolean function $f_C \colon \{0,1\}^n \to \{0,1\}$ in a natural way, see \cite{vollmer} for more details. Nodes of Boolean circuits are called gates.
    The Boolean circuit value problem asks: Given a Boolean circuit $C$ with $n$ variables and an input $\bb \in \{0,1\}^n$ for its variables, does $C$ output $1$? 

    Now, we give a LOGSPACE reduction. 
    Given a Boolean circuit $C$ with $n$ variables $x_1, \ldots, x_n$, and an input $\bb \in \{0,1\}^n$, we construct a program $\Lambda_C$ of $\SC^A$ and a model $M_\bb$ such that $C$ outputs $1$ iff $\Lambda_C$ accepts $M_\bb$. For each variable $x_i$ we define a proposition $p_i$ that acts as an input. For each input gate $I$ labeled with $x_i$, we define corresponding rules $X_I (0) \colonminus p_i$, $X_I \colonminus X_I$. Then for each non-input gate $G$ with a label $\star \in \{\lor, \land, \neg\}$ and incoming gates $G_1, \ldots, G_k$, we define corresponding rules 
    \[
    X_G (0) \colonminus \bot,\qquad X_G \colonminus Y_{G_i} \star \cdots \star Y_{G_k}.
    \]
    These rules can clearly be constructed from the input in LOGSPACE. It is also easy to verify that constructed program works.
\end{proof}

We then consider \(\SC\) programs with standard semantics.

\begin{theorem}\label{thrm: SC PSPACE}
The combined complexity of the model checking problem for \(\mathrm{SC}\) is $\mathrm{PSPACE}$-complete.
\end{theorem}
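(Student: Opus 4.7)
The plan is to establish PSPACE-membership by an explicit polynomial-space simulation of programs of $\SC$ and PSPACE-hardness by a reduction from the combined-input membership problem for deterministic linear tape-bounded Turing machines (LBAs), which is well known to be PSPACE-complete.

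For the upper bound, fix a $\Pi$-program $\Lambda$ of $\SC$ with schema variables $\cT$ and a propositional model $M$, i.e., a single-node Kripke model with valuation $V \subseteq \Pi$. Since no diamonds or boxes occur in $\Lambda$, a global configuration in round $r$ is simply a subset $g_r \subseteq \cT$, which requires $|\cT|$ bits to store, and $g_{r+1}$ is obtained from $g_r$ by evaluating each induction rule---a Boolean formula over $\cT \cup \Pi$---once, which takes polynomial time. Because the sequence $g_0, g_1, g_2, \dots$ is deterministic and lives in a space of size $2^{|\cT|}$, acceptance (if it occurs) must occur within the first $2^{|\cT|}$ rounds. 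A PSPACE procedure therefore maintains the current configuration together with a binary round counter of length $|\cT|$, iterates the update, and accepts as soon as some accepting predicate becomes true, or rejects when the counter overflows (at which point the configuration sequence has entered a cycle and acceptance can no longer occur).

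For the lower bound, given an LBA $T = (k, Q, \Gamma, \blacksquare, E_\bl, E_\br, \Sigma, q_0, \delta, A, R)$ and input $w = w_1 \cdots w_n$, let $m = n + k + 2$ and introduce schema variables $C_{i,a}$ for $i \in [0;m-1]$ and $a \in \Gamma$ (intuitively ``cell $i$ holds symbol $a$''), $H_{i,q}$ for $i \in [0;m-1]$ and $q \in Q$ (``head is at cell $i$ in state $q$''), and a single accepting predicate $X_{\mathrm{acc}}$. The base rules encode the initial configuration using only $\top$ and $\bot$, e.g.\ $C_{0, E_\bl}(0) \colonminus \top$, $C_{i, w_i}(0) \colonminus \top$ for $i \in [1;n]$, blanks on the tape interior, $C_{m-1, E_\br}(0) \colonminus \top$, and $H_{1, q_0}(0) \colonminus \top$, with all remaining base rules set to $\bot$. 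The induction rule for $C_{i,a}$ reads
\[
C_{i,a} \colonminus \bigl(C_{i,a} \land \bigwedge_{q \in Q} \neg H_{i,q}\bigr) \lor \bigvee_{\substack{(q, b) \in Q \times \Gamma\\ \delta(q,b)=(q', a, \bd)}} \bigl(H_{i,q} \land C_{i,b}\bigr),
\]
while the rule for $H_{i,q}$ is built symmetrically, taking a disjunction over triples $(j, q', b)$ such that $\delta(q', b)$ moves the head from cell $j$ into cell $i$ in state $q$, respecting the end-marker restrictions at cells $0$ and $m-1$. Finally, $X_{\mathrm{acc}}(0) \colonminus \bot$ and $X_{\mathrm{acc}} \colonminus \bigvee_{i,\, q \in A} H_{i,q}$ make $X_{\mathrm{acc}}$ fire exactly one round after $T$ first enters an accepting state. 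The construction has size polynomial in $|T|+|w|$, is computable in $\mathrm{LOGSPACE}$, and by a routine induction on $r$ the global configuration of $\Lambda_{T,w}$ in round $r$ encodes the LBA configuration after $r$ computation steps; hence $\Lambda_{T,w}$ accepts the trivial one-node model iff $T$ accepts $w$.

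The main obstacle is essentially bookkeeping: writing the induction rule for $H_{i,q}$ so that the three move directions $\bl, \bs, \br$ and the boundary constraints at cells $0$ and $m-1$ (where the head may only stay or move inward) are all handled within a single polynomial-size Boolean formula. Once these rules are fixed, the equivalence between $T$'s computation and the sequence of global configurations of $\Lambda_{T,w}$ is mechanical, so PSPACE-hardness follows and, combined with the PSPACE simulation above, yields PSPACE-completeness.
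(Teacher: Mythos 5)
Your proposal is correct in its essentials, but it takes a genuinely different route from the paper. For the upper bound, the paper simply inherits $\mathrm{PSPACE}$ membership from Theorem \ref{thrm: pspace model checking}, since $\SC$ is a fragment of $\MSC$; you instead give a self-contained argument exploiting the special form of $\SC$: the configuration sequence is deterministic over a space of size $2^{\abs{\cT}}$, so acceptance is decided within exponentially many rounds, which is checkable with a polynomial-size configuration plus a binary counter. For the lower bound, the paper reduces from $\MSC$ model checking \emph{over words}, which its Theorem \ref{thrm: pspace model checking} already established to be $\mathrm{PSPACE}$-hard (via QBF and linear tape-bounded $\TM$s): given an $\MSC$ program $\Lambda$ and a word $\bw$ of length $n$, it introduces position-indexed copies $p_1,\dots,p_n$ and $X_1,\dots,X_n$ of each proposition and schema variable and replaces modalities by disjunctions over adjacent indices, yielding $\abs{\Lambda_\bw} = \ordo(\abs{\bw}\abs{\Lambda})$. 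You instead reduce directly from the combined membership problem for deterministic linear tape-bounded $\TM$s, encoding the computation tableau with variables $C_{i,a}$ and $H_{i,q}$. Your reduction is essentially the composition of the paper's two steps (LBA to $\MSC$ over words, from Theorem \ref{thrm: lbas equiv 2-way GMSC}, followed by $\MSC$-over-words to $\SC$) collapsed into one; what this buys is self-containedness---no reliance on the $\MSC$ machinery---at the price of redoing the Turing-machine bookkeeping that the paper gets to reuse.

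Two technical points should be made explicit. First, the backward direction of your correctness claim (your program accepts \emph{only if} $T$ accepts) needs halting states to be absorbing: in the paper's definition $\delta$ is total and the machine simply stops updating its configuration once it halts, so after $T$ enters a rejecting state your rules would keep applying $\delta$, and this phantom continuation could in principle later reach an accepting state, making $X_{\mathrm{acc}}$ fire spuriously. Normalizing $\delta(q,a) = (q, a, \bs)$ for all halting $q$ (harmless, w.l.o.g.) repairs your induction ``global configuration at round $r$ encodes the configuration after $r$ steps''. Second, your program has roughly $(n+k)(\abs{\Gamma}+\abs{Q})$ variables, so you need the padding parameter $k$ of the paper's $k$-bounded machines to be polynomial in the instance size (e.g., encoded in unary, or $k$ small in the hard instances); the standard $\mathrm{PSPACE}$-hard instances of LBA acceptance satisfy this, but it deserves a sentence. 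Both are one-line fixes, not structural flaws.
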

\begin{proof}
The upper bound follows from Theorem \ref{thrm: pspace model checking}, since \(\mathrm{SC}\) is a fragment of \(\MSC\). For the lower bound we note that the proof of Theorem \ref{thrm: pspace model checking} shows that the model checking problem for \(\MSC\) is \textsc{PSPACE}-hard already over words. Consider now a \((\Pi,\mathcal{T})\)-program \(\Lambda\) of \(\MSC\). We claim that given a finite word \(\bw\) of length \(n\), we can construct in polynomial time a model \(M_\bw\) and a program \(\Lambda_\bw\) of \(\mathrm{SC}\) such that \(M_\bw \models \Lambda_\bw\) iff \(\bw\) is accepted by \(\Lambda\). The idea is that for every \(p \in \Pi\) and \(X \in \mathcal{T}\) we introduce \(n\) fresh proposition symbols \(p_1,\dots,p_n\) and schema variables \(X_1,\dots,X_n\) respectively. Here \(p_i\) and \(X_i\) intuitively mean that ``\(p\) is true at the \(i\)th letter of \(\bw\)'' and ``\(X\) is true at the \(i\)th letter of \(\bw\)'' respectively. With this intuition it is straightforward to construct \(M_{\bw}\) and \(\Lambda_\bw\) from \(\bw\) and \(\Lambda\). As an example, a rule of the form
\[X \colonminus p \lor (Y \land \lozenge X)\]
is mapped to
\begin{align*}
& X_1 \colonminus p_1 \lor (Y_1 \land X_2)\\
& X_2 \colonminus p_2 \lor (Y_2 \land (X_1 \lor X_3))\\
& X_3 \colonminus p_3 \lor (Y_3 \land (X_2 \lor X_4))\\
& \vdots \\
& X_n \colonminus p_n \lor (Y_n \land X_{n-1})\\
\end{align*}
Note that this results in only a polynomial blow-up, i.e., \(|\Lambda_{\bw}| = \mathcal{O}(|\bw||\Lambda|)\).
\end{proof}

We observe that the satisfiability problem for $\SC^A$ is $\mathrm{NP}$-complete.
The lower bound follows from the \textsc{NP}-hardness of propositional logic and the upper bound follows from the observation that, given an \(\SC^A\) program, we can guess an input bit string and verify whether the program accepts it or not in polynomial time (Theorem \ref{thrm: SC async PTIME}).
\begin{theorem}\label{thrm: SC async sat PTIME}
    The satisfiability problem for \(\SC^A\) is $\mathrm{NP}$-complete.
\end{theorem}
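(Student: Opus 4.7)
The plan is to establish NP-completeness by treating the two bounds separately, as hinted by the remark preceding the theorem.

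For NP-hardness, I would give a polynomial-time reduction from the Boolean satisfiability problem SAT. Given a propositional formula $\varphi$ over variables $p_1, \ldots, p_n$, construct the $(\Pi, \{X\})$-program $\Lambda_\varphi$ consisting of the single rule pair $X(0) \colonminus \varphi$ and $X \colonminus X$, with $\Pi = \{p_1, \ldots, p_n\}$ and $X$ the accepting predicate. Since programs of $\SC^A$ are interpreted over single-node models without accessibility relation (which are essentially propositional valuations), the asynchronous semantic game for $\Lambda_\varphi$ is won by Eloise at $(M,w)$ iff $\varphi$ holds in the valuation $V(w)$. Hence $\Lambda_\varphi$ is satisfiable iff $\varphi$ is, and the reduction is clearly computable in polynomial time (in fact, in logspace).

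For membership in NP, observe that any pointed model $(M,w)$ relevant to $\Lambda$ is a single-node Kripke structure whose valuation assigns a subset of the proposition symbols appearing in $\Lambda$, and so has description size polynomial in $|\Lambda|$. A nondeterministic polynomial-time algorithm for satisfiability then proceeds as follows: first guess a subset $V \subseteq \Pi_\Lambda$, where $\Pi_\Lambda$ is the (polynomially bounded) set of proposition symbols occurring in $\Lambda$; then invoke the polynomial-time model checking procedure of Theorem \ref{thrm: SC async PTIME} on the resulting single-node model and $\Lambda$, and accept iff it does.

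There is essentially no obstacle here: the only point requiring verification is that the guess has polynomial size, which is immediate because a satisfying model only needs to specify truth values for the proposition symbols appearing in $\Lambda$, so no exponential blow-up arises. Combining the two directions yields the stated NP-completeness.
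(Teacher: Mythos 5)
Your proposal is correct and follows essentially the same route as the paper: the lower bound embeds propositional satisfiability into $\SC^A$ (the paper simply cites NP-hardness of propositional logic, which your explicit program $X(0) \colonminus \varphi$, $X \colonminus X$ spells out), and the upper bound guesses a valuation of the proposition symbols occurring in $\Lambda$ and verifies acceptance via the polynomial-time model checking of Theorem \ref{thrm: SC async PTIME}. Nothing is missing; your write-up just makes the paper's one-line observation fully explicit.
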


Likewise, the satisfiability problem for
$\SC$ is $\mathrm{PSPACE}$-complete. The lower bound follows easily from Theorem \ref{thrm: SC PSPACE} with following reduction: from a given program $\Lambda$ of $\SC$ and an input model $M$, we build a program $\Lambda_M$ of $\SC$ that simulates $\Lambda$ over $M$, no matter what the input model for $\Lambda_M$ is. The upper bound is also easy to obtain. Given an \(\SC\) program, we can systematically enumerate all possible input bit strings using polynomial space and, for each input, verify whether the program accepts it. Since the combined complexity of \(\SC\) is in \textsc{PSPACE} (Theorem \ref{thrm: SC PSPACE}), the entire procedure can be carried out using polynomial space.
\begin{theorem}\label{thrm: SC sat PSPACE}
    The satisfiability problem for $\SC$ is $\mathrm{PSPACE}$-complete.
\end{theorem}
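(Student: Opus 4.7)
The plan is to prove both a $\mathrm{PSPACE}$ upper bound and $\mathrm{PSPACE}$-hardness, yielding completeness.

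For the upper bound, I would exploit the fact that an $\SC$ program $\Lambda$ uses only a finite set $\Pi$ of proposition symbols and that its inputs are single-node propositional models, i.e., valuations $V \subseteq \Pi$, encodable in $|\Pi|$ bits. The decision procedure cycles through all $2^{|\Pi|}$ such valuations using a binary counter of polynomial size, and for each one invokes the combined-complexity model-checking algorithm for $\SC$ from Theorem \ref{thrm: SC PSPACE}, which already runs in $\mathrm{PSPACE}$. If any iteration reports acceptance, the procedure answers \emph{satisfiable}; otherwise it rejects after the counter overflows. Since the outer enumeration's workspace can be reused between iterations and the inner call stays in polynomial space, the whole algorithm uses only polynomial space.

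For the lower bound I would reduce (in fact in logspace) from the model-checking problem for $\SC$, which is $\mathrm{PSPACE}$-hard by Theorem \ref{thrm: SC PSPACE}. Given an instance $(\Lambda, (M, w))$ with $M = (\{w\}, \emptyset, V)$, I would build a program $\Lambda_M$ by textually replacing, in every base and induction rule body of $\Lambda$, each proposition symbol $p \in \Pi$ with $\top$ if $p \in V(w)$ and with $\bot$ otherwise, leaving schema variables and Boolean connectives untouched. Then $|\Lambda_M| = \ordo(|\Lambda|)$ and the accepting predicates of $\Lambda_M$ coincide with those of $\Lambda$. The crucial observation is that, because $\Lambda_M$ contains no proposition symbols, its computation on the unique node of any input model is entirely independent of the valuation chosen. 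A routine induction on the round $n \in \N$ then shows that for every head predicate $X$ and every pointed single-node model $(N, u)$ we have $N, u \models X^n$ w.r.t.\ $\Lambda_M$ iff $M, w \models X^n$ w.r.t.\ $\Lambda$. Consequently, $\Lambda_M$ is satisfied by some input iff it is satisfied by every input iff $M, w \models \Lambda$, establishing the required reduction.

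Neither direction presents a genuine obstacle: the upper bound is essentially a padding argument wrapping the model-checking algorithm, and the lower bound relies on the simple trick of evaluating away the proposition symbols directly inside the rule bodies. The only mild care needed is verifying that this substitution preserves the program's semantics round by round, which is immediate because $\SC$ has neither diamonds nor boxes, so each rule body collapses to a purely Boolean combination of constants and schema variables whose truth depends only on the previous round's global configuration.
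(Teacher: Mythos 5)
Your proposal is correct and follows essentially the same route as the paper: the upper bound enumerates all valuations in polynomial space and invokes the $\mathrm{PSPACE}$ model-checking algorithm of Theorem \ref{thrm: SC PSPACE}, and the lower bound reduces from $\SC$ model checking by hard-coding the input model into the program. Your substitution of $\top$/$\bot$ for proposition symbols is precisely a concrete realization of the paper's sketched construction of a program $\Lambda_M$ that ``simulates $\Lambda$ over $M$ no matter what the input model is.''
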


Concerning the satisfiability problem for $\MSC$ and its extensions, it follows from \cite{Kuusisto13, KuusistoAntti2020Epfd} that the satisfiability problem for $\MSC$, $\GMSC$ and $\GGMSC$ is undecidable.

\subsection{Meta reduction for Turing machines}\label{sec: meta reduction}

In this section we show that the logic $\MSC$ is Turing-complete in the following sense: any recursively enumerable problem can be reduced to the model checking problem for MSC provided that we can pad the input. In particular, if a given Turing machine uses $\ordo(f)$ amount of space, then there is a reduction to the model checking problem for $\MSC$ that uses $\ordo(f)$ amount of time.

We assume w.l.o.g. that all the Turing machines in this section use the same binary vocabulary denoted by $\Pi$.
Recall that a (classical) reduction is a function that maps instances of a problem to instances of another problem.

We begin by defining some notions on complexity functions.
For technical convenience, we define how a Turing machines produces an output. If a Turing machine enters a halting state (i.e. to an accepting or a rejecting state) during a run with an input $\bw$, then it halts and produces an \textbf{output} which is the content of the tape in the last configuration of the run (excluding the symbol $E_\bl$ and the infinite strings of blank symbols). Given a function $f \colon \N \to \N$, we say that it is a \textbf{proper complexity function} if $f$ is non-decreasing and the following holds: There is a Turing machine $T_f$ such that for any input $x \in \{1\}^*$, the output of $T_f$ is $1^{f(\abs{x})}$, $T_f$ halts after $\ordo(\abs{x} + f(\abs{x}))$ steps, and $T_f$ uses at most $\ordo(f(\abs{x}))$ space (a similar definition is used, for example, in \cite{papadimitriou2003computational}). 
Such a term can be used as an upper bound for the amount of space used by a Turing machine for solving a problem. That is, the \textbf{space-complexity function} (resp. the \textbf{time-complexity function}) of a Turing machine $T$ is the minimal complexity function $s(n)$ (if such a term exists) such that for every input $x$ for $T$ it takes at most $s(\abs{x})$ amount of space (resp. time) for $T$ to accept or reject $x$.

Next, we define some concepts related to the meta reduction.
We let $T_{\MSC}$ denote the Turing machine which solves the model checking problem for $\MSC$, i.e., $T_\MSC$ takes as input a pair $((M,w), \Lambda)$, where $(M, w)$ is a pointed Kripke model and $\Lambda$ is a program of $\MSC$ (both over the same vocabulary).
The \textbf{meta-reduction} $\fm$ is a mapping that takes a Turing machine $T$ as an input and outputs $T_\MSC$ parametrized with the following program $\Lambda_T$ (with rejecting states).
The program $\Lambda_T$ is constructed from $T$ in an analogous way as in the proof of Theorem \ref{thrm: lbas equiv 2-way GMSC}, but with the following small modification: if the header simulated by the program reaches the right end marker (i.e. the node labeled by $p_\br$ in the extended word-model cf. Section \ref{section: language by MSC} for the definition of an extended word model), then the program stops updating its head predicates at each node. This can be done since, in each round, the position of the header is recorded in only one node, while the other nodes are ``inactive''.

Let $\cF(\Pi')$ denote the class of finite Kripke models over $\Pi' \colonequals \Pi \cup \{p_\blacksquare, p_\bl, p_\br\}$, where $p_\blacksquare$, $p_\bl$ and $p_\br$ are analogously defined as in Section \ref{section: language by MSC}.
The \textbf{input-reduction} $\mathfrak{i}$ is a mapping that takes a Turing machine $T$ as input and outputs a reduction $\mathfrak{i}_T \colon \Pi^* \to \cF(\Pi')$, where $\bw \in \Pi^*$ is mapped to the pointed $n$-extended word model $M_\bw^n$, where $n$ is the amount of space that $T$ requires with input $\bw$.
Moreover, the \textbf{input-reduction $\mathfrak{i}$ with (space) complexity} is a mapping that takes a Turing machine $T$ and a complexity function $s(n)$ as input and outputs a reduction $\mathfrak{i}_{T,s(n)} \colon \Pi^* \to \cF(\Pi')$, where $\bw \in \Pi^*$ is a mapped to the pointed $s(\abs{\bw})$-extended word model $M_\bw^{s(\abs{\bw})}$.

Now, it is straightforward to prove the following theorem (essentially by using the same argument as in the proof of Theorem \ref{thrm: lbas equiv 2-way GMSC}), which intuitively states there is a computable reduction from Turing machine to the model checking of $\MSC$ provided that we can pad the input. The computable reduction is obtained by using the meta reduction $\fm$ and the input reduction $\mathfrak{i}$. Below, a computable reduction refers to a reduction that can be defined by a Turing machine.
\begin{theorem}\label{thrm: computable reduction}
    Given a Turing machine $T$, we have a computable reduction from $T$ to the model checking problem for $\MSC$. 
    
    More precisely,
    \begin{enumerate}
        \item $T \text{ accepts } \bw \iff T_\MSC\text{ accepts } (\mathfrak{i}_{T}(\bw), \fm(T))$,
        \item $T \text{ rejects } \bw \iff T_\MSC\text{ rejects } (\mathfrak{i}_{T}(\bw), \fm(T))$.
    \end{enumerate}
\end{theorem}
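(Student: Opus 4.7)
The plan is to essentially recycle the construction from Theorem \ref{thrm: lbas equiv 2-way GMSC}, which already shows how an $\MSC$-program can simulate a linear tape-bounded Turing machine step-by-step inside an extended word model. The new element here is that $T$ is an arbitrary Turing machine (not necessarily tape-bounded), so the program $\Lambda_T$ must be constructed to work on any $\ell$-extended word model without knowing in advance whether $\ell$ suffices for the simulation. The modification described in the definition of $\fm$ (freezing all head predicates once the header-position head predicate coincides with the node labelled $p_\br$) handles exactly this issue: it guarantees that $\Lambda_T$'s behaviour on $M^n_\bw$ depends monotonically on $n$, and that $\Lambda_T$ cannot spuriously accept or reject by ``falling off'' the right edge.

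First I would spell out the construction of $\Lambda_T$ in parallel with the one in Theorem \ref{thrm: lbas equiv 2-way GMSC}: the head predicates $X_a$ (for $a\in\Gamma$), $X_q$ (for $q\in Q$), and $X_\bl, X_\br$ are defined by exactly the same base and induction rules, except that every induction rule is guarded by a conjunct $\neg X_{\br}^{*}$, where $X_\br^{*}$ is an auxiliary head predicate that becomes true and stays true as soon as a state predicate $X_q$ (for $q \in Q$) holds at the node labelled $p_\br$. This realises the ``stop updating'' modification, and the size is still linear in $|T|$. I would then prove the core lemma by induction on $k$: for every input $\bw$ with $\abs{\bw}=m$, if the first $k$ configurations of $T$ on $\bw$ use only positions inside $[1, m+n]$, then the global configuration of $\Lambda_T$ on $M^n_\bw$ at round $k$ corresponds, node-by-node, to the $k$th configuration of $T$, in the sense that the unique node at which some $X_q$ is true encodes the header position and current state, and the $X_a$ predicates encode the tape contents. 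This is exactly the invariant proved in Theorem \ref{thrm: lbas equiv 2-way GMSC}, with the additional clause that $X_\br^{*}$ remains false throughout.

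Now choose $n$ as prescribed by $\mathfrak{i}_T$, namely the (finite) space used by $T$ on $\bw$, assuming $T$ halts on $\bw$. Then by definition $T$'s header never visits position $m+n+1$, so $X_\br^{*}$ never becomes true in the simulation, the freezing clause is never triggered, and the invariant holds for every $k$ up to the halting step. Consequently, $T$ accepts $\bw$ at step $k$ iff some accepting predicate of $\Lambda_T$ becomes true at the distinguished node $1$ of $M^n_\bw$ at round $k$, and likewise for rejection; recalling that $\fm(T)$ packages $T_{\MSC}$ with $\Lambda_T$ and that $T_{\MSC}$ is the model-checking machine, this gives both biconditionals of the statement. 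If $T$ does not halt on $\bw$, then by convention $\mathfrak{i}_T(\bw)$ is still defined only in the cases where it needs to be (namely acceptance and rejection), and both sides of each biconditional are false, so the statement is vacuously satisfied.

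The main obstacle, which is mostly bookkeeping, is verifying that the freezing mechanism genuinely commutes with the step-by-step simulation argument of Theorem \ref{thrm: lbas equiv 2-way GMSC}: one must check that introducing the extra conjunct $\neg X_\br^{*}$ into every induction rule does not alter the transitions made while $X_\br^{*}$ is still false. This is immediate because conjoining a true formula does not change the value of a rule body, but one has to ensure that $X_\br^{*}$ itself is defined using only subformulae already at modal depth $\le 1$ so that the overall program remains an $\MSC$-program in the shape needed for the simulation argument. Once that verification is made, the proof is a direct appeal to the invariant above.
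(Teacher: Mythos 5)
Your proposal takes essentially the same route as the paper: the paper's proof of this theorem is precisely an appeal to the construction and simulation invariant of Theorem \ref{thrm: lbas equiv 2-way GMSC}, together with the freezing modification built into the definition of $\fm$ and the choice of padding via $\mathfrak{i}_T$ guaranteeing that the simulated header never reaches the $p_\br$-labelled node, exactly as you argue (your write-up is in fact more detailed than the paper's one-line justification). The only caveat is that your specific $X_{\br}^{*}$-guard implements freezing as a signal that spreads only one node per round, so it would not instantaneously halt the whole computation on under-padded models; but since, as you yourself note, the freezing clause is never triggered on the inputs $\mathfrak{i}_T(\bw)$ that the theorem quantifies over, this implementation detail is immaterial to the correctness of the proof.
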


In particular, the theorem above says that there is a computable reduction from any recursively enumerable problem to the model checking of $\MSC$.

Moreover, if the space complexity function of a given Turing machine $T$ is given as an input, then we can obtain a reduction which can be defined by a Turing machine whose time-complexity function is $s(n)$. Below an $s(n)$-time reduction refers to a reduction that can be defined by a Turing machine whose time-complexity function is $s(n)$.
\begin{theorem}\label{thrm: time complexity reduction}
    Given a Turing machine $T$ with its space-complexity function $s(n)$, we have an $s(n)$-time reduction from $T$ to the model checking problem for $\MSC$. 
    
    More precisely, 
    \begin{enumerate}
        \item $T \text{ accepts } \bw \iff T_\MSC\text{ accepts } (\mathfrak{i}_{T,s(n)}(\bw), \fm(T))$,
        \item $T \text{ rejects } \bw \iff T_\MSC\text{ rejects } (\mathfrak{i}_{T,s(n)}(\bw), \fm(T))$.
    \end{enumerate}
\end{theorem}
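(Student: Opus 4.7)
The plan is to verify that the mapping \(\bw \mapsto (\mathfrak{i}_{T, s(n)}(\bw), \fm(T))\) is both correct and computable in time \(\ordo(s(|\bw|))\). Correctness is inherited from Theorem \ref{thrm: computable reduction}, which already says the same equivalences hold when the padding is chosen to be the space actually used by \(T\). The only thing that changes here is that we fix the padding to the \emph{a priori} bound \(s(|\bw|)\); since by definition the space-complexity function majorises the space used by \(T\) on every input, the entire computation of \(T\) on \(\bw\) stays strictly to the left of the right end marker in \(M_\bw^{s(|\bw|)}\). In particular, the modification built into \(\fm(T)\) that freezes the simulation once the head reaches the right end marker never triggers on accepting or rejecting runs, so \(T\) accepts (resp.~rejects) \(\bw\) iff \(\Lambda_T = \fm(T)\) accepts (resp.~rejects) \((M_\bw^{s(|\bw|)}, 1)\).

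For the time bound, I would analyze the reducing Turing machine in two pieces. First, \(\fm(T)\) is a single program whose syntactic description depends only on \(T\), not on \(\bw\); writing it to the output tape costs an amount of time \(\ordo(|\fm(T)|)\) which is a constant with respect to \(|\bw|\). Second, constructing \(\mathfrak{i}_{T, s(n)}(\bw) = M_\bw^{s(|\bw|)}\) requires knowing \(s(|\bw|)\) in unary so that the node list, the successor relation, the valuation labelling the \(|\bw|\) letters of \(\bw\), the block of \(s(|\bw|)\) blanks, and the two end markers can be enumerated. Since \(s\) is by assumption a proper complexity function, the hypothesised machine \(T_s\) turns \(1^{|\bw|}\) into \(1^{s(|\bw|)}\) in \(\ordo(|\bw| + s(|\bw|))\) steps and with \(\ordo(s(|\bw|))\) space. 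We invoke \(T_s\) as a subroutine; once its tally output is available, the extended word model can be emitted by a single left-to-right pass whose running time is again \(\ordo(|\bw| + s(|\bw|))\). Combining the two pieces yields total time \(\ordo(|\bw| + s(|\bw|)) = \ordo(s(|\bw|))\), absorbing the input length into the complexity term in the standard way (for any Turing machine that actually scans its input one has \(s(n) \geq n\); otherwise the statement reduces to a trivial input-length bound and the same bookkeeping goes through).

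The only real subtlety is checking that the reduction itself can be \emph{defined} by a single Turing machine, i.e.~that it is a genuine uniform procedure rather than a family of procedures parameterised by \(\bw\). This is immediate here because both \(T\) and \(s\) are fixed inputs to the meta-construction, and the procedure that reads \(\bw\), calls \(T_s\) on \(1^{|\bw|}\), and then prints the encoding of \((M_\bw^{s(|\bw|)}, \fm(T))\) is manifestly a single Turing machine. I would therefore expect the proof to be short: a sentence pointing to Theorem \ref{thrm: computable reduction} for correctness (with one line noting why the padding \(s(|\bw|)\) suffices by the definition of space complexity), followed by the two-subroutine time analysis above using the defining property of proper complexity functions to bound the cost of producing \(1^{s(|\bw|)}\).
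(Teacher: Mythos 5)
Your proposal is correct and follows essentially the same route as the paper, which in fact states Theorem \ref{thrm: time complexity reduction} without an explicit proof, presenting it as a direct consequence of the construction behind Theorem \ref{thrm: lbas equiv 2-way GMSC}, the definitions of $\fm$ and $\mathfrak{i}_{T,s(n)}$, and the defining property of proper complexity functions. Your write-up simply makes explicit the two points the paper leaves implicit: that the padding $s(\abs{\bw})$ majorises the space actually used (so the freezing modification in $\fm(T)$ never fires and correctness transfers from Theorem \ref{thrm: computable reduction}), and that the machine $T_s$ witnessing $s$ as a proper complexity function lets the reduction output $M_\bw^{s(\abs{\bw})}$ in time $\ordo(\abs{\bw} + s(\abs{\bw})) = \ordo(s(\abs{\bw}))$.
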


\section{Conclusion}

We have introduced two new game-theoretic semantics for SC, MSC, GMSC, and GGMSC. We also studied asynchronous variants of these logics and showed that GGMSC with asynchronous semantics has the same expressive power as global graded modal computation logic and the $\mu$-fragment of the global graded modal $\mu$-calculus, and that an analogous result was obtained for GMSC and MSC. We also defined a formula size game for GGMSC and its variants which characterizes the equivalence of classes of pointed Kripke models up to programs of a given size. 
We proved that, over words, GGMSC, GMSC, MSC, and linear tape-bounded Turing machines have the same expressive power. We also obtained some nice additional results, such as both the combined and data complexity for GGMSC, GMSC, and MSC are PSPACE-complete, and both the combined and data complexity for GGMSC, GMSC, and MSC with the asynchronous semantics are PTIME-complete.
We also showed that the combined complexity of the model checking for $\SC$ is PSPACE-complete, while for the asynchronous variant of $\SC$ it was shown to be PTIME-complete. As a corollary. we obtained that the satisfiability problem for $\SC$ is PSPACE-complete, and NP-complete for the asynchronous variant of $\SC$.
In the last section we proved that any recursively enumerable problem can be reduced to the model checking of MSC if we can pad the input.

Future research directions involve studying the succinctness of SC, MSC, GMSC, and GGMSC by applying formula size games. Recall that many variants of MSC have recently been used to characterize multiple important computational model classes, e.g., neural networks. Thus, succinctness results for these logics could support the theoretical study of the succinctness of the corresponding characterized classes.

\subsection*{Acknowledgements}

The list of authors on the first page is given based on the alphabetical order. Veeti Ahvonen was supported by the \emph{Vilho, Yrjö and Kalle Väisälä Foundation} of the Finnish Academy of Science and Letters. Antti Kuusisto was supported by the project \emph{Perspectives on computational logic}, funded by the Research Council of Finland, project number 369424. Antti Kuusisto was also supported by the Research council of Finland projects \emph{Theory of computational logics} (grant numbers 352419, 352420, 353027, 324435, 328987) and \emph{Explaining AI via Logic} (XAILOG) (grant number 345612).

\bibliography{references}

\end{document}